\def\fnum@figure{\textcolor{subsectioncolor}{\sf Fig.~\thefigure}}
\def\fnum@table{\textcolor{subsectioncolor}{\sf TABLE~\thetable}}
 \newtheorem{prob}{Problem}
 \newtheorem{ass}{Assumption}
 \newtheorem{prop}{Proposition}
 \newtheorem{lem}{Lemma}
 \newtheorem{thm}{Theorem}
 \newtheorem{coro}{Corollary}
 \newtheorem{rem}{Remark}
 \newtheorem{cas}{Case}
 \newtheorem{exam}{Example}
\def\zz{\edef\zzz{\pdfliteral{\current@color}}}
\begin{document}
\title{Suboptimality analysis of receding horizon quadratic control with unknown linear systems and its applications in learning-based control }
\author{Shengling Shi, Anastasios Tsiamis, and Bart De Schutter, \IEEEmembership{Fellow, IEEE} 
\thanks{This research has received funding from the
European Research Council (ERC) under the European Union’s Horizon
2020 research and innovation programme (Grant agreement No.101018826 - CLariNet).}
\thanks{ Shengling Shi is with the Department of Chemical Engineering, Massachusetts Institute of Technology, USA, and the Delft Center for Systems and Control, Delft University of Technology, the Netherlands (e-mail: slshi@mit.edu). Anastasios Tsiamis is with the Automatic Control Laboratory, ETH Zurich, Switzerland (e-mail: atsiamis@control.ee.ethz.ch). Bart De Schutter is with the Delft Center for Systems and Control, Delft University of Technology (e-mail: b.deschutter@tudelft.nl).  }  }

\maketitle

\begin{abstract}
 This work analyzes how the trade-off between the modeling error, the terminal value function error, and the prediction horizon affects the performance of a nominal receding-horizon linear quadratic (LQ) controller. By developing a novel perturbation result of the Riccati difference equation, a novel performance upper bound is obtained and suggests that for many cases, the prediction horizon can be either $1$ or $+\infty$ to improve the control performance, depending on the relative difference between the modeling error and the terminal value function error. The result also shows that when an infinite horizon is desired, a finite prediction horizon that is larger than the controllability index can be sufficient for achieving a near-optimal performance, revealing a close relation between the prediction horizon and controllability. The obtained suboptimality performance upper bound is applied to provide novel sample complexity and regret guarantees for nominal receding-horizon LQ controllers in a learning-based setting. We show that an adaptive prediction horizon that increases as a logarithmic function of time is beneficial for regret minimization.
\end{abstract}

\begin{IEEEkeywords}
Receding horizon control, model predictive control, learning-based control.
\end{IEEEkeywords}

\section{Introduction}
\label{sec:introduction}

Receding-horizon control (RHC), also called model predictive control (MPC) interchangeably, has been applied to various applications, such as process control, building climate control, and robotics control \cite{schwenzer2021review, katayama2023model}. It has important advantages such as using optimization based on a model of the system to improve the control performance and the ability to handle constraints and multivariate systems. Due to the importance of RHC in applications, extensive efforts have been devoted to its performance analysis in \cite{mayne2014model, grune2011Book,grune2020economic,kohler2021stability} and references therein, such as stability, constraint satisfaction, and suboptimality analysis. These analyses can help understand the behavior of RHC and motivate novel RHC approaches.

While the majority of works focus on stability and constraint satisfaction \cite{mayne2014model}, this work considers the suboptimality analysis of the closed-loop control performance. We briefly illustrate our problem via a generic RHC scheme. Given a discrete-time system $x_{t+1} = f_\star(x_t,u_t)$, where  $t \in \mathbb{Z}^+$, $x_t \in \mathbb{R}^n$ and $u_{t } \in \mathbb{R}^m$ are the state and input, respectively, we consider the setting where $f_\star$ is unknown, but we have access to an approximate prediction model $\widehat{f}$ instead. At time step $t$, a generic RHC controller has the following form:
\begin{align}
\min_{ \{u_{k|t}\}_{k=0}^{N-1} } & \sum_{k=0}^{N-1} l(x_{k|t},u_{k|t} ) + \widehat{V}(x_{N|t}), \label{eq:GeneralRHC} \\
\text{s.t. }  x_{0|t}& = x_t, \text{ }x_{k+1|t} = \widehat{f}(x_{k|t},u_{k|t} ), \text{ } k\in \{0,\dots,N-1\}, \nonumber
\end{align}
where $N\in \{1,2,\dots\}$ is the prediction horizon, $l$ is a stage cost function, and $x_{k \mid t}$ denotes the predicted state vector at time step $k+t$, obtained by applying the input sequence $\{u_{k\mid t}\}_{k=0}^{N-1}$ to $ \widehat{f}$ starting from $x_t$. In \eqref{eq:GeneralRHC}, a terminal value function $\widehat{V}$ is incorporated. At time step $t$, only the computed $u_{0|t}$ from \eqref{eq:GeneralRHC} is applied as the input $u_t$ to the system.

In \eqref{eq:GeneralRHC}, a modeling error may be present, i.e., $\widehat{f}$ deviates from the unknown real system $f_\star$, which may result from system identification, linearization, or model reduction. While the modeling error can be explicitly considered in the robust controller design\cite{mayne2014model}, the nominal controller \eqref{eq:GeneralRHC} is more fundamental and commonly used in practice. The terminal value function $\widehat{V}$ avoids the short-sightedness of the finite-horizon prediction and thus improves the control performance and guarantees stability. It is standard in RHC \cite{grune2011Book,kohler2021stability} and also advocated in reinforcement learning (RL) to merge RL and RHC for performance improvement \cite{bertsekas2022newton}, where $\widehat{V}$ is trained offline, e.g., by the value iteration \cite{moreno2023predictive}. While the optimal value function $V_\star$ is ideal for achieving stability and the optimal infinite-horizon control performance simultaneously, it is typically intractable to compute it exactly \cite{bertsekas2019reinforcement}. Considering an approximation $\widehat{V}$ of $V_\star$ is then a practical choice.

The suboptimality analysis of \eqref{eq:GeneralRHC} is an open and challenging problem, due to the interplay between $N$ and the error sources. Particularly, when the prediction model is exact, it is well-known that increasing $N$ improves the closed-loop control performance \cite{grune2011Book,bertsekas2019reinforcement}. However, if a modeling error is present, increasing $N$ also propagates the prediction error. On the other hand, having a $\widehat{V}$ closer to the optimal value function $V_\star$ can also improve the control performance. Therefore, the control performance is affected by the complex tradeoff between the error $\|\widehat{V}- V_\star\|$ of the terminal value function, the modeling error $\|\widehat{f} - f_\star\|$, and the prediction horizon $N$, where the norm will be defined later for the setting of this work. The challenges in performance analysis are understanding how the errors propagate over the prediction horizon, how the length of the prediction horizon affects this propagation, and how the error propagation affects the closed-loop control performance.

This work provides a novel suboptimality analysis of \eqref{eq:GeneralRHC} under the joint effect of the modeling error, the prediction horizon, and the terminal value function. We consider the probably most fundamental setting, where the system is linear and the stage cost is a quadratic function, i.e., the nominal receding-horizon LQ controller \cite{bitmead1991riccati}. As shown in our work, the analysis in this setting is already highly challenging. 

Moreover, we apply our analysis to the performance analysis of learning-based RHC controllers. In this case, the model is estimated from data, leading to an inevitable modeling error. Our suboptimality analysis can incorporate this modeling error to provide performance guarantees.

\textit{Related work:} When the model is exact, the suboptimality analysis of RHC controllers, with constraints or economic cost, has been studied extensively in \cite{grune2011Book,grune2020economic,kohler2021stability} and references therein. However, performance analysis in a setting where the system model is uncertain or unknown is rare. The suboptimality analysis of RHC for linear systems with a structured parametric uncertainty is considered in \cite{moreno2022performance}; however, the impact of the approximation in the terminal value function is not investigated. Other relevant works can be found in the performance analysis of learning-based RHC \cite{muthirayan2021online,lale2021model,dogan2021regret}, where the controller actively explores the state space of the unknown system and the model is recursively updated. There, a control performance metric called \textit{regret} is concerned, which measures the accumulative performance difference over a finite time window between the controller and the ideal optimal controller. The impact of the modeling error has been investigated in the above analysis; however, the effect of the prediction horizon and the terminal value function on the control performance is not considered \cite{dogan2021regret,lale2021model,muthirayan2021online}.

As we consider the LQ setting, the performance analysis of LQ regulator (LQR) for unknown systems is relevant, which has received renewed attention from the perspective of learning-based control \cite{recht2019tour,matni2019self,tsiamis2023statistical}. The suboptimality analysis of LQR with a modeling error has been considered in  \cite{dean2020sample,mania2019certainty,simchowitz2020naive}. This analysis is essential for deriving performance guarantees for learning-based LQR. It typically relies on the perturbation analysis of the Riccati equation \cite{mania2019certainty,simchowitz2020naive}, which characterizes the solution of the Riccati equation under a modeling error. However, since LQR concerns an infinite prediction horizon, the above analysis does not consider the effect of the prediction horizon and the terminal value function.
 
\textit{Contribution:} 

In this work, the main results are developed for controllable systems and then generalized to stabilizable systems. The contributions of this work are summarized here:
\begin{enumerate}[wide]
\item As an essential step for achieving performance analysis with unknown systems, the performance analysis in the simpler setting, where the model is exact, is considered first. A novel performance upper bound for the receding-horizon LQ controller is obtained, which reveals a novel relation between the control performance and controllability. This is achieved by a novel convergence analysis of the Riccati difference equation, generalizing the existing results in \cite{del2021note}. Detailed discussions can be found in Section~\ref{sec:RicaKnown}.

  \item To achieve the suboptimality analysis with a modeling error, a core technical result is first established, i.e., a novel convergence rate of the Riccati difference equation when a modeling error is present. This result reveals how the errors propagate over the prediction horizon within the Riccati iterations. It generalizes the perturbation results of the discrete Riccati equation \cite{mania2019certainty,simchowitz2020naive}. While the analyses in \cite{mania2019certainty,simchowitz2020naive} address the modeling error only, the incorporation of the prediction horizon and the terminal value function error in this work leads to major technical challenges.

    \item Based on the above perturbation analysis, a novel performance upper bound for the nominal receding-horizon LQ controller with an inexact model is derived. The derived performance \textit{upper bound} shows how the control performance may vary with changes in the modeling error, the terminal value function error, and the prediction horizon. Particularly, it suggests that for many cases, the prediction horizon can be either $1$ or $+\infty$ to achieve a better performance, depending on the relative difference between the modeling error and the terminal value function error. Moreover, when $N \to +\infty$ is desired, the result suggests that for controllable systems, a finite $N$ that is larger than the controllability index is sufficient for achieving a near-optimal performance.
    
\item The performance upper bound leads to a novel suboptimality bound for the nominal receding-horizon LQ controller, where the unknown system is estimated offline from data. The bound reveals how the control performance depends on the number of data samples. Moreover, a novel regret bound is derived for an adaptive RHC controller. This controller extends the state-of-the-art adaptive LQR controller with $\varepsilon$-greedy exploration \cite{simchowitz2020naive,tsiamis2023statistical} from the infinite prediction horizon to an arbitrary finite prediction horizon. We show a novel regret upper bound $\tilde{O}\big( T \mu_\star^{N}+\sqrt{T} \big)$ for a fixed $N$, where $\mu_\star \in (0,1)$ is a constant and $T$ denotes the total number of time steps for the closed-loop operation, and a regret bound $\tilde{O}\big( \sqrt{T} \big)$ for an adaptive $N$ being a logarithmic function of time.
\end{enumerate}
\vspace*{-0.1cm}

\textit{Outline:} This paper is structured as follows. Preliminaries are introduced in Section~\ref{sec:Problem}. For controllable systems, the suboptimality analysis is developed in Section~\ref{sec:Knownmodel} for an exact model and is generalized in Section~\ref{sec:UnknownModel} to incorporate the modeling error. The above results are extended in Section~\ref{sec:sta} to stabilizable systems and applied to learning-based control in Section~\ref{sec:Application}. All proofs are presented in the Appendix.

\textit{Notation:} Let $\mathbb{S}^n \subseteq \mathbb{R}^{n \times n}$ denote the set of symmetric matrices of dimension $n$, and let $\mathbb{S}^n_+$, $\mathbb{S}^n_{++}$ denote the subsets of positive semi-definite and positive-definite symmetric matrices, respectively. Given $F_1$, $F_2 \in \mathbb{S}^n$, $F_1 \preceq F_2$ denotes $F_2 - F_1 \in \mathbb{S}^n_+$. Moreover, $\mathbb{Z}^+$ denotes the set of non-negative integers. Let $\mathrm{id}$ denote the identity function, and $f \circ g$ denotes the composition of two functions. The symbol $\| \cdot \|$ denotes the spectral norm of a matrix and the $l_2$ norm of a vector. Given any real sequence $\{a_k\}_{k=0}^\infty$, we define $\sum_{k=i}^j a_k \triangleq 0$ for the special case $j<i$. Given a real matrix $F$ and a real square matrix $A$, $\overline{\sigma}(F)$ and $ \underline{\sigma}(F)$ denote the maximum and the minimum singular values, respectively, and $\rho(A)$ denotes the spectral radius. Note that $\overline{\sigma}(F) = \|F\|$. The symbol $\mathcal{U}_{[a,b]}$ denotes the uniform distribution over the interval $[a,b]$, $\mathbb{E}$ denotes the expected value, and $\lfloor\cdot \rfloor$ denotes the floor function. Given $\varepsilon$, $t \geqslant 0$ and two non-negative functions $f(\cdot,\cdot)$ and $g(\cdot,\cdot)$, we write $g(\varepsilon,t) = O(f(\varepsilon,t))$ (as $\varepsilon \to 0$, $t \to \infty$) if $\exists$ $c_1, c_2 >0$ $:$ 
$g(\varepsilon,t) \leqslant c_1 f(\varepsilon,t)$ for any $1/\varepsilon \geqslant c_2$, $t\geqslant c_2$. If $g$ and $f$ depend on $t$ only, $g(t) = \tilde{O}(f(t))$ (as $t \to \infty$) indicates $g(t) = O(f(t) \log^k(t))$ for some $k \in \mathbb{Z}^+$ \cite{cormen2022introduction}.

\section{Preliminaries and problem formulation} \label{sec:Problem}

\subsection{Preliminaries}
We consider the discrete-time linear system
\begin{align}
    x_{t+1} = A_\star x_t + B_\star u_t + w_t, \label{eq:TrueModel}
\end{align}
where $w_t$ is a white noise signal with $\mathbb{E}(w_t w_t^\top) = \sigma_w^2 I$, $\sigma_w>0$, $A_\star \in \mathbb{R}^{n\times n}$, and $B_\star \in \mathbb{R}^{n \times m}$. The initial state $x_0$ is a zero-mean random vector with $\mathbb{E}(x_0 x_0^\top) = X \in \mathbb{S}^n_+$, and it is uncorrelated with the noise. 

In this work, the true system $(A_\star,B_\star)$ is unknown, and we have access to an approximate model $(\hat{A},\hat{B})$ that satisfies $\|\hat{A}- A_\star\| \leqslant \varepsilon_\mathrm{m}$ and $\|\hat{B}- B_\star\| \leqslant  \varepsilon_\mathrm{m}$ for some $\varepsilon_\mathrm{m} \geqslant 0$. This approximate model and its error bound can be obtained, e.g., from linear system identification \cite{simchowitz2018learning,sarkar2019near}. 

In the above setting, we consider the nominal receding-horizon LQ controller \cite{bitmead1991riccati}. At time step $t$, $x_t$ is measured, and the RHC controller solves the following optimization problem:
\begin{subequations} \label{eq:MPCproblem}
\begin{align}
\hspace*{-0.1cm}   \min_{\{u_{k\mid t}\}_{k=0}^{N-1}} \mathbb{E}_{\{w_{k+t}\}_{k=0}^{N-1} } \Big[ \sum_{k=0}^{N-1}l(x_{k\mid t},u_{k\mid t}) + x_{N\mid t}^\top P x_{N\mid t} \Big],\\
    \text{ s.t. } x_{k+1\mid t} = \hat{A} x_{k\mid t} + \hat{B} u_{k\mid t} +w_{k+t}, \text{ }x_{0|t} = x_t,
\end{align}
\end{subequations}
where $l(x_{k\mid t},u_{k\mid t})=x_{k\mid t}^\top Q x_{k\mid t}+ u_{k\mid t}^\top R u_{k \mid t} $, $R\in \mathbb{S}^m_{++}$, and $Q$, $P \in \mathbb{S}^n_+$. Then $u_t = u_{0 \mid t}$ is the input at time step $t$.

We will first consider controllable systems. The extension to stabilizable systems is in Section~\ref{sec:sta}. To this end, we define $ \mathcal{C}_{i}(A_\star,B_\star) \triangleq \begin{bmatrix}
      B_\star & A_\star B_\star &\dots &  A_\star^{i-1} B_\star  
     \end{bmatrix}$ for $i \in \{1,2,\dots\}$ and introduce the following assumptions:
 \begin{ass} \label{ass:ControlIndex}
     There exists $i \in  \{1,2,\dots,n\}$ such that $\mathcal{C}_{i}(A_\star,B_\star)$ has full rank, and the minimum $i$ is denoted by $n_{\mathrm{cr}}$, called the \textit{controllability index}.
 \end{ass}

\begin{ass} \label{ass:LQR}
$A_\star \not=0$, $P \succeq 0 $, $R \succeq I $, and $Q \succeq I$ hold\footnote{Requiring $Q$ to be positive-definite leads to a loss of generality. This assumption, however, is commonly imposed to obtain quantitative bounds\cite{simchowitz2020naive}. Given $Q \succ 0$, $R \succeq I $ and $Q \succeq I$ can be achieved without losing generality by scaling them simultaneously. Letting $A_\star \not=0$ is not restrictive and ensures that $1- \beta_\star$ is invertible for technical simplicity.}.
\end{ass}

For some intermediate technical results, we will relax the controllability requirement whenever possible for generality.
\begin{ass} \label{ass:LQRgeneral2}
System $(A_\star,B_\star)$ is stabilizable.
\end{ass}
 
This RHC controller is further characterized by the Riccati equation. Given any $(A,B) \in \mathbb{R}^{n\times n } \times \mathbb{R}^{n \times m}$, we define \begin{equation} \label{eq:Sb}
S_{B}  \triangleq BR^{-1}B^\top,\end{equation}
and the Riccati mapping $\mathcal{R}_{A,B}: \mathbb{S}_+^n\to \mathbb{S}_+^n$ as
\begin{align}
\mathcal{R}_{A,B}(P) &\triangleq A^\top P (I+S_{B} P)^{-1}A + Q \label{eq:RiccatiMap}\\
&= A^\top P A - A^\top P  B (R+B^\top P B)^{-1} B^\top P   A +Q, \nonumber
\end{align}
where the last equality holds by the Woodbury matrix identity. To apply the Riccati mapping recursively, we define the Riccati iteration $\mathcal{R}^{(i)}_{A,B}$ via recursion: For any integer $i \geqslant 0$,
$$
\mathcal{R}^{(i+1)}_{A,B} \triangleq \mathcal{R}_{A,B} \circ \mathcal{R}^{(i)}_{A,B}, \text{ and }\mathcal{R}^{(0)}_{A,B} = \mathrm{id}.
$$

Then the RHC controller in \eqref{eq:MPCproblem} is equivalent to a linear static controller $u_t=-K_{\mathrm{RHC}}x_t$ \cite{bitmead1991riccati}, where
\begin{equation} 
K_{\mathrm{RHC}}\! = \!\Big[ R+\hat{B}^\top\! \big[\mathcal{R}^{ (N-1)}_{\hat{A},\hat{B}}(P) \big] \hat{B} \Big]^{-1} \!\hat{B}^\top \! \big[\mathcal{R}^{( N-1)}_{\hat{A},\hat{B}}(P) \big]\hat{A}. \label{eq:Kmpc}
\end{equation}

We consider the expected average infinite-horizon performance of $K_{\mathrm{RHC}}$, applied to the true system \eqref{eq:TrueModel}. Given any controller gain $K$ with $\rho(A_\star-B_\star K)<1$, we denote its expected average infinite-horizon performance as 
\begin{align*}
J_K \triangleq \lim_{T\to \infty} & \frac{1}{T} \mathbb{E}_{x_0,\{w_t\}} \Big[ \sum_{t=0}^{T-1}\big( x_t^\top Q x_t+ u_t^\top R u_t \big) \Big], \\
&\text{s.t. } \eqref{eq:TrueModel}, \text{ }u_t = -K x_t, \text{ } t \in \mathbb{Z}^+.
\end{align*}
 
Under a stabilizing $K$, the closed-loop system satisfies \begin{equation}
  \lim_{t \to \infty} \mathbb{E} (x_t x_t^\top) = \Sigma_K, \label{eq:SigmaK}
\end{equation}for some $\Sigma_K  \in \mathbb{S}^n_{++} $ that satisfies the Lyapunov equation:
$$
    \Sigma_K = (A_\star-B_\star K) \Sigma_K (A_\star -B_\star K)^\top + \sigma_w^2 I. 
$$
 
Under Assumption~\ref{ass:LQR}, the optimal controller gain $K_\star$, which achieves the minimum $J_K$, is the LQR controller gain: \begin{equation}\label{eq:Kstar}
 K_\star = (R+B_\star^\top P_\star B_\star)^{-1} B_\star^\top P_\star A_\star,
\end{equation}
where $P_\star$ is the unique positive-definite solution of the discrete-time Riccati equation (DARE) \cite{bertsekas2012dynamic}:
\begin{equation}
P_\star = \mathcal{R}_{A_\star,B_\star}(P_\star). \label{eq:DARE}
\end{equation}
We further define the closed-loop matrix $L_\star$ under $K_\star$ as 
\begin{equation}
L_\star \triangleq A_\star-B_\star K_\star = (I+S_{B_\star} P_\star )^{-1} A_\star . \label{eq:Lstar}
\end{equation}
It is well-known that $L_\star$ is Schur stable, and its decay rate can be characterized by the standard Lyapunov analysis:
\begin{lem} \label{lem:DecayL}
 If Assumptions~\ref{ass:LQR} and \ref{ass:LQRgeneral2} hold, for any $i \in \mathbb{Z}^+$, 
\begin{equation} \label{eq:RateLstar}
\|L_\star^i\| \leqslant   \sqrt{\beta_\star^{-1}}  \Big(\sqrt{ 1-\beta_\star } \Big)^i,
\end{equation} 
with $\beta_\star \triangleq     \underline{\sigma}(Q)/  \overline{\sigma}(P_\star)$, and $ \beta_\star \in (0,1)$ holds.
\end{lem}

The proof of this result is presented in Appendix for completeness. If Assumption~\ref{ass:ControlIndex} holds additionally, then $(A_\star,B_\star)$ is controllable, which implies that $(L_\star,B_\star)$ is controllable, and thus there exists a real number $\nu_\star $ such that \begin{equation} \label{eq:defGramianLowerBound}
\underline{\sigma}\left(\mathcal{C}_{n_{\mathrm{cr}}}(L_\star,B_\star)\right)\geqslant \nu_\star>0.\end{equation}

\subsection{Problem formulation}
In this work, we analyze the performance gap $J_{K_{\mathrm{RHC}}}-J_{K_\star}$ between the RHC controller and the ideal LQR controller. 
\begin{prob} \label{prob}
Given a prediction horizon $N \geqslant 1$, an approximate model $(\hat{A},\hat{B})$ with $\max\{\|\hat{A}- A_\star\|, \|\hat{B}- B_\star\|\} \leqslant  \varepsilon_\mathrm{m}$ and a terminal matrix $P$ with $\|P- P_\star\| \leqslant \varepsilon_\mathrm{p}$ for some $\varepsilon_\mathrm{p} \geqslant 0$, find a non-trivial error bound $g(\varepsilon_\mathrm{m},\varepsilon_\mathrm{p},N)$ such that
$$
J_{K_{\mathrm{RHC}}}-J_{K_\star} \leqslant g(\varepsilon_\mathrm{m},\varepsilon_\mathrm{p},N).
$$
\end{prob}

 In Problem~\ref{prob}, a non-zero $g $ reflects the difference between the performance of the RHC controller, i.e., $J_{K_{\mathrm{RHC}}}$, and the performance of the baseline controller, i.e., $J_{K_\star}$. A non-zero $g$ reveals how the system performance, under the RHC controller, degrades as a function of the errors $\varepsilon_\mathrm{m}$, $\varepsilon_\mathrm{p}$, and the prediction horizon $N$. Here, choosing the infinite-horizon optimal controller as the baseline is standard in studies on learning-based control \cite{mania2019certainty,simchowitz2020naive} and MPC\cite{ grune2011Book}.

In the special case of a known system and $P= P_\star$, we have $K_{\mathrm{RHC}} = K_\star$ and $J_{K_{\mathrm{RHC}}}-J_{K_\star}=0$ for any $N \geqslant 1$. However, in the case of $P \not= P_\star$, which happens when the system is unknown or a numerical error is present for the computed $P_\star$ even if the system is known, we can increase the prediction horizon $N$ to achieve a smaller $J_{K_{\mathrm{RHC}}}-J_{K_\star}$. In the more complex situation where there is a modeling error, a larger $N$ leads to the propagation of the modeling error and thus may enlarge the performance gap. Our target is to characterize the above complex tradeoff among $N$, $\varepsilon_{\mathrm{m}}$, and $\varepsilon_{\mathrm{p}}$.

In this work, the obtained bounds hold regardless of whether $\varepsilon_\mathrm{m}$ and $\varepsilon_\mathrm{p}$ are coupled. The presence of coupling, e.g., $\varepsilon_\mathrm{p} = h(\varepsilon_\mathrm{m})$ for some function $h$, can be easily incorporated by plugging function $h$ into the bound $g$. In addition, the error bounds obtained will also depend on the system matrices $A_\star$ and $B_\star$. To simplify the algebraic expressions of the bounds, we upper bound the system matrices as
$$
\max\{\|A_\star\|, \|B_\star\|,\|P_\star\|  \} \leqslant \Upsilon_\star,
$$
where $\Upsilon_\star \geqslant \max\{1, \varepsilon_\mathrm{m},\varepsilon_\mathrm{p} \}$ is a real constant. Note that there always exists a sufficiently large $\Upsilon_\star $ such that the above holds.

\begin{rem} Regarding the bound $g$ in Problem~\ref{prob}, its rate of change as $N$, $\varepsilon_{\mathrm{m}}$, and $\varepsilon_{\mathrm{p}}$ change will be the primary interest. The other constants involved in the bound will often be simplified for interpretability. Therefore,  the derived bound is more suitable for convergence analysis and qualitative insights instead of being used as a practical error bound.
\end{rem}
\begin{rem}
 We also note that $g$ is a worst-case bound that holds for any $(\hat{A},\hat{B}) \in \mathcal{S} (\varepsilon_\mathrm{m})\triangleq  \{(A,B) \mid \max\{  \|A-A_\star\|, \|B-B_\star\| \} \leqslant \varepsilon_{\mathrm{m}} \}$. Therefore, it can also be conservative for a particular $(\hat{A},\hat{B})$ instance.   
\end{rem}
 \section{Suboptimality analysis with known model} \label{sec:Knownmodel}
 As the first step, we consider the simpler setting where $(\hat{A},\hat{B}) = (A_\star,B_\star)$. The goal is to analyze the joint effect of the prediction horizon and the approximation in the terminal value function on the control performance. To obtain the final performance upper bound in Section~\ref{sec:PerCo}, we first develop two technical results in Sections~\ref{meta} and \ref{sec:RicaKnown}.
\begin{cas} \label{ass:KnownModel}
It holds that $(\hat{A},\hat{B}) = (A_\star,B_\star)$.
\end{cas}

\subsection{Preparatory performance analysis} \label{meta}
We first characterize the performance gap between the LQR controller and a general linear controller that has a similar structure to the RHC controller \eqref{eq:Kmpc}. Given any $(A,B) \in \mathbb{R}^{n\times n} \times \mathbb{R}^{n\times m}$, we define function $\mathcal{K}_{A,B}: \mathbb{S}_+^n \to \mathbb{R}^{m \times n}$ as
\begin{equation} \label{eq:Kpara}
    \mathcal{K}_{A,B}(F) \triangleq  (R+B^\top F B )^{-1} B^\top F A.
\end{equation}
Then given the controller gain $\mathcal{K}_{A_\star,B_\star}(F)$ for any $F \in \mathbb{S}_+^n$, the following result characterizes its performance gap based on the difference $\|F-P_\star\|$.
\begin{lem} \label{lem:ManiaDiffK}
Given any $F \in \mathbb{S}_+^n$ with $\|F-P_\star\| \leqslant \varepsilon$ for some $\varepsilon \geqslant 0$, consider the controller gain $K= \mathcal{K}_{A_\star,B_\star}(F)$ and the optimal controller gain $K_\star$. In Case~\ref{ass:KnownModel}, if Assumptions~\ref{ass:LQR} and \ref{ass:LQRgeneral2} hold, and if $ \varepsilon \leqslant \min\{\Upsilon_\star,  \underline{\sigma}(R)/(40 \Upsilon_\star^4 \|P_\star\|^{3/2})  \}$, then the closed-loop matrix $A_\star-B_\star K$ is Schur stable and
$$
J_K- J_{K_\star} \leqslant c_\star \varepsilon^2, \text{ where }
$$ 
\begin{equation}  \label{eq:cstar}
c_\star \triangleq  \frac{ 32 \min\{n,m\}  \sigma_w^2  (\overline{\sigma}(R)+\Upsilon_\star^3 )\Upsilon_\star^4 \|P_\star\| ( \sqrt{\|P_\star\|}+1)^2 }{ \underline{\sigma}^2(R)}.
\end{equation}
\end{lem}

Lemma~\ref{lem:ManiaDiffK} directly follows from Lemma~\ref{lem:metalemma}, which is presented later, as a special case of a known model. These results extend \cite[Thm. 1]{mania2019certainty} and \cite[Prop. 7]{simchowitz2020naive} by incorporating the terminal value function error. Lemma~\ref{lem:ManiaDiffK} shows that if $F$ is sufficiently close to $P_\star$, the controller gain $\mathcal{K}_{A_\star,B_\star}(F)$ can stabilize the system with a near-optimal performance.

In Case~\ref{ass:KnownModel},
since the RHC controller gain \eqref{eq:Kmpc} is a special case of \eqref{eq:Kpara} with $F = \mathcal{R}^{(N-1)}_{A_\star,B_\star}(P)$, we will exploit Lemma~\ref{lem:ManiaDiffK} to characterize the performance gap of $K_\mathrm{RHC}$. Particularly, Lemma~\ref{lem:ManiaDiffK} can be used to bound $J_{K_{\mathrm{RHC}}}-J_{K_\star} $ if we can quantify the difference $$\|\mathcal{R}^{(N-1)}_{A_\star,B_\star}(P) - P_\star  \|.$$Note that this difference captures the convergence of $\mathcal{R}^{(N-1)}_{A_\star,B_\star}(P)$ to $P_\star$ as $N$ increases.

\subsection{Convergence rate of Riccati iterations} \label{sec:RicaKnown}
It is a classical result that $\mathcal{R}^{(i)}_{A_\star,B_\star}(P)$ converges to $P_\star$ exponentially as $i$ increases \cite{anderson1979optimal}. Different convergence analyses have been investigated in the literature, leading to different convergence rates. The rates in \cite{lawson2007birkhoff} and \cite[Thm. 14.4.1]{hassibi1999indefinite} are complex functions of the system parameters, characterized by the Thompson metric in \cite{lawson2007birkhoff} and a norm weighted by a special matrix in \cite[Thm. 14.4.1]{hassibi1999indefinite}. These convergence rates are hard to interpret because of the specific metric and norm. On the other hand, the rates in \cite[Ch. 4.4]{anderson1979optimal} and \cite{del2021note} utilize the standard matrix 2-norm and thus are easy to interpret; however, \cite[Ch. 4.4]{anderson1979optimal} does not exploit controllability, leading to a slower convergence rate than \cite{del2021note}. To the best of our knowledge, \cite{del2021note} provides the most recent state-of-the-art analysis of convergence rates for controllable systems. Therefore, we develop our results based on \cite{del2021note}. While the result in \cite{del2021note} considers only a single convergence rate, we generalize it to incorporate two distinct convergence rates depending on the range of $i$. We also compute the rates as a function of $\beta_\star$ and thus of $P_\star$. The explicit dependency of the convergence rates on $P_\star$ will facilitate the subsequent perturbation analysis in Section~\ref{sec:UnknownModel}, where the effect of the modeling error on $P_\star$ will be exploited.

To this end, we first define several new notations and functions. Recall $S_B$ defined in \eqref{eq:Sb}, and we define the function $\mathcal{L}_{A,B}: \mathbb{S}_+^n \to \mathbb{R
}^{n\times n} $ as
\begin{equation}
\mathcal{L}_{A,B}(P) \triangleq (I+S_{B} P)^{-1} A = A - B \mathcal{K}_{A,B}(P),  \label{eq:ClosedLoopL}
\end{equation}
which denotes the closed-loop matrix under the controller gain $\mathcal{K}_{A,B}(P)$, and thus $\mathcal{L}_{A_\star,B_\star}(P_\star)  = L_\star$. In addition, for any integers $ 0 \leqslant j \leqslant i$, define 
\begin{equation} \label{eq:Phi}
\Phi^{(j:i)}_{A,B}(P) \triangleq \begin{cases}
\mathcal{L}_{A,B} \big( P^{(j)} \big)   \dots \mathcal{L}_{A,B}\big( P^{(i-1)} \big), & i > j \geqslant 0,\\
I,& i=j,
\end{cases}
\end{equation}
where $P^{(i)}$ is a shorthand notation for $ \mathcal{R}^{(i)}_{A,B}(P)$. The definition shows $\Phi^{(0:i)}_{A_\star,B_\star}(P_\star)  = L_\star^i$. Note that $\Phi^{(0:i)}_{A,B}(P)$ is a state-transition matrix of a time-varying linear system, whose asymptotic stability is reflected by the convergence of $\Phi^{(0:i)}_{A,B}(P) $ to $0$ as $i\to \infty$ \cite{rugh1996linear}.
 
The state-transition matrix is helpful for characterizing $\| \mathcal{R}^{(N-1)}_{A_\star,B_\star}(P) - P_\star \|$ for any $P \in \mathbb{S}_+^n$ \cite{del2021note}:  
\begin{equation}
\mathcal{R}^{(i)}_{A_\star,B_\star}(P) - P_\star = \big[ \Phi^{(0:i)}_{A_\star,B_\star}(P) \big]^\top  (P- P_\star ) L_\star^i, \label{eq:RiccatiDiffSameModel}
\end{equation}
which can be verified by applying Lemma~\ref{lem:RicaDiffEqu}\ref{eq:IdentiRicatiDiif} recursively. Recall that $L_\star$ is Schur stable, satisfying $\lim_{i \to \infty} L^i_\star=0$. Therefore, \eqref{eq:RiccatiDiffSameModel} shows that as $i \to \infty$, the convergence of $\mathcal{R}^{(i)}_{A_\star,B_\star}(P)$ to $P_\star$ is governed by the convergence of $L_\star^i$. Furthermore, exploiting the exponential decay of $ \Phi^{(0:i)}_{A_\star,B_\star}(P)$ can further contribute to the convergence of $\mathcal{R}^{(i)}_{A_\star,B_\star}(P)$.

Overall, we can characterize the rate of convergence of $\mathcal{R}^{(i)}_{A_\star,B_\star}(P) $ to $P_\star$ as follows:
\begin{lem} \label{lem:BasicRate}
Given $P$ with $\|P- P_\star\| \leqslant \varepsilon_\mathrm{p}$, if Assumptions~\ref{ass:ControlIndex} and \ref{ass:LQR} are satisfied, then for $i \in \mathbb{Z}^+$,  
\begin{align} \label{eq:KnownModel}
 &\|\mathcal{R}^{(i)}_{A_\star,B_\star}(P) - P_\star \| \nonumber\\
& \leqslant \begin{cases}
    \overline{\tau}_\star  \left[  \sqrt{ \left(1-\beta_\star \right)\left(  1 -  \overline{\beta}(\varepsilon_{\mathrm{p}})\right) }  \right] ^i \varepsilon_\mathrm{p}, &\text{ if } i < n_{\mathrm{cr} } ,\\
   \tau_\star  \left( 1 - \beta_\star \right)^i \varepsilon_\mathrm{p}, & \text{ if } i \geqslant n_{\mathrm{cr} },
   \end{cases}
\end{align}
with constants $\tau_\star$ and $\overline{\tau}_\star$ defined in \eqref{eq:DefTau} and \eqref{eq:DefTauBar}, respectively, $\beta_\star$ defined in Lemma~\ref{lem:DecayL}, and  
\begin{equation} \label{eq:DefSlowRate}
\overline{\beta}(\varepsilon_{\mathrm{p}})  \triangleq \underline{\sigma}(Q)/ \left( 
\overline{\sigma}(P_\star)+ \varepsilon_{\mathrm{p}}\beta_\star^{-1} \right),
\end{equation}
 where $\overline{\beta}(\varepsilon_{\mathrm{p}}) \leqslant \beta_\star$ holds for any $\varepsilon_{\mathrm{p}} \geqslant 0$. 
\end{lem}

Since $\overline{\beta} \leqslant  \beta_\star$, we have $ 1 - \beta_\star \leqslant 
 \sqrt{ (1-\beta_\star)(1 -  \overline{\beta} ) } $. Therefore, as $i$ increases, Lemma~\ref{lem:BasicRate} shows a slower decay rate\footnote{Given $c_1 a^i$ and $c_2b^i$ with $0 \leqslant a\leqslant b<1$ and $c_1$, $c_2 \geqslant 0$, $i \in \mathbb{Z}^+$, we say $a$ is a faster decay rate than $b$ to indicate $a^i$ decays faster than $b^i$ as $i$ increases.} $\sqrt{ (1-\beta_\star)(1 -  \overline{\beta} ) }$ for $i <n_{\mathrm{cr}}$ and a faster rate $1 - \beta_\star$ for $i \geqslant n_{\mathrm{cr}  }$. Recall $\Phi^{(0:i)}_{A_\star,B_\star}(P)$ in \eqref{eq:Phi} and \eqref{eq:RiccatiDiffSameModel}. Intuitively, the case $i  < n_{\mathrm{cr}} $ is when the state-transition matrix $\Phi^{(0:i)}_{A_\star,B_\star}(P)$ is affected by the initial matrix $P$ and the consequent transient behaviors, which is evident from the dependency of $\overline{\beta}$ on $\varepsilon_{\mathrm{p}}$. Then, if $i \geqslant n_{\mathrm{cr}}$, the time-varying closed-loop systems in \eqref{eq:Phi} get closer to $L_\star$, leading to a faster convergence rate. 
 
The technical challenge in developing Lemma~\ref{lem:BasicRate} is the derivation of the two distinct decay rates. Given controllability, the faster rate for $i \geqslant n_{\mathrm{cr}  }$ is established from the full rank property of $\mathcal{C}_{i}(A_\star,B_\star) $ by exploiting Lemma~\ref{lem:DecayL} and \cite[Lem. 4.1]{del2021note}. Since $\mathcal{C}_{i}(A_\star,B_\star) $ is not of full rank for $i < n_{\mathrm{cr}  }$, the slower rate is established from stabilizability. The analysis is achieved by utilizing the stability analysis of linear time-varying systems \cite{rugh1996linear}, introduced in Lemma~\ref{lem:Stability}, and further deriving the decay rate as an explicit function of $\beta_\star$. Note that if $\varepsilon_{\mathrm{p}}=0$, we obtain the rate $(1 -\beta_\star)$ in both cases of \eqref{eq:KnownModel}, recovering the square of the rate $ \sqrt{1 -\beta_\star}$ in \eqref{eq:RateLstar}.

\subsection{Performance upper bound with a known model} \label{sec:PerCo}
The following main result is a direct consequence of \eqref{eq:Kmpc}, Lemmas~\ref{lem:ManiaDiffK} and \ref{lem:BasicRate}:
\begin{thm} \label{thm:KnownModelSub}
In Case~\ref{ass:KnownModel}, consider $K_\mathrm{RHC}$ in \eqref{eq:Kmpc} with any $P \in \mathbb{S}_+^n$ satisfying $\|P-P_\star \| \leqslant \varepsilon_\mathrm{p}$. It holds that $A_\star - B_\star K_\mathrm{RHC}$ is Schur stable with $J_{K_\mathrm{RHC}}-J_{K_\star}  \leqslant v(N, \varepsilon_{\mathrm{p}})  $, where  
\begin{align} \label{eq:JgapKnown}
 & v(N, \varepsilon_{\mathrm{p}})  \nonumber \\ &  \triangleq  \begin{cases} 
c_\star \overline{\tau}^2_\star \left[ (1-\beta_\star)\left(1 -  \overline{\beta}(\varepsilon_{\mathrm{p}}) \right)\right]^{N-1}\varepsilon_\mathrm{p}^2, & \text{ if }   N \leqslant n_{\mathrm{cr}},    \\
c_\star \tau^2_\star  (1-\beta_\star)^{2(N-1)} \varepsilon_\mathrm{p}^2, &\text{  if  } N \geqslant n_{\mathrm{cr}}+1,
\end{cases} 
\end{align}
if Assumptions~\ref{ass:ControlIndex} and \ref{ass:LQR} hold, and if $N$ and $\varepsilon_{\mathrm{p}}$ satisfy $ \sqrt{v(N, \varepsilon_{\mathrm{p}})/c_\star} \leqslant \min\{\Upsilon_\star,  \underline{\sigma}(R)/(40 \Upsilon_\star^4 \|P_\star\|^{3/2})  \} $.
\end{thm}

In Theorem~\ref{thm:KnownModelSub}, the upper bound on $\sqrt{v(N, \varepsilon_{\mathrm{p}})/c_\star}$ means that $N$ should be sufficiently large or $\varepsilon_{\mathrm{p}}$ should be sufficiently small, such that the RHC controller can stabilize the system with a suboptimal performance. Moreover, the performance gap \eqref{eq:JgapKnown} captures the tradeoff between the error $\varepsilon_\mathrm{p}$ and the prediction horizon $N$: The performance gap decays quadratically in $\varepsilon_\mathrm{p}$ and exponentially in $N$. The exponential decay obtained here is analogous to the exponential decay rate in \cite{zhang2021regret} for linear RHC with estimated additive disturbances over the prediction horizon and in \cite{kohler2021stability} for nonlinear RHC, while we have a novel characterization with two distinct decay rates.

The case $N \geqslant n_{\mathrm{cr}}+1$ in \eqref{eq:JgapKnown} has a faster decay rate than the rate for $N \leqslant n_{\mathrm{cr}}$, which indicates the advantage of choosing a horizon larger than the controllability index. This also suggests that we may choose a smaller horizon for controlling fully-actuated systems, with $\mathrm{rank}(B)=n$ and $n_{\mathrm{cr}}=1$, compared to under-actuated systems.

\section{Suboptimality analysis with unknown model} \label{sec:UnknownModel}
In this section, the results in Section~\ref{sec:Knownmodel} are generalized to consider the joint performance effect of the modeling error, terminal matrix error, and prediction horizon. To obtain the final performance upper bound in Section~\ref{sec:PerError}, we first develop several technical results in Sections~\ref{sec:MetaError} and \ref{sec:RiccPertur}.

\subsection{Preparatory performance analysis} \label{sec:MetaError}
Given any $F \in \mathbb{S}_+^n$ and a general controller gain $\mathcal{K}_{\hat{A},\hat{B}}(F) $ in \eqref{eq:Kpara}, formulated on the approximate model, the following result characterizes the performance gap between  $\mathcal{K}_{\hat{A},\hat{B}}(F) $ and the optimal controller gain $K_\star$:
\begin{lem} \label{lem:metalemma}
Given any $F \in \mathbb{S}_+^n$ with $\|F-P_\star \| \leqslant \varepsilon$ for some $\varepsilon \geqslant 0$, consider the controller gain $K = \mathcal{K}_{\hat{A},\hat{B}}(F) $. Then $A_\star -B_\star K$ is Schur stable and  
\begin{equation*}  
J_{K} - J_{K_\star} \leqslant  c_\star (\varepsilon_\mathrm{m} +\varepsilon)^2,
\end{equation*}
with $c_\star$ defined in \eqref{eq:cstar}, if Assumptions~\ref{ass:LQR} and \ref{ass:LQRgeneral2} hold, and if $\Upsilon_\star \geqslant \varepsilon$ and $ 8 \Upsilon_\star^4(\varepsilon_\mathrm{m}+ \varepsilon ) /\underline{\sigma}(R) \leqslant 1/5 \|P_\star\|^{-3/2}$.
\end{lem}

The above result is analogous to \cite[Thm. 1]{mania2019certainty} and \cite[Prop. 7]{simchowitz2020naive}, with extensions to incorporate the additional error term $\|F - P_\star\|$. Lemma~\ref{lem:metalemma} shows that if the error $\varepsilon_\mathrm{m}+\varepsilon$ is sufficiently small, the resulting controller gain $K = \mathcal{K}_{\hat{A},\hat{B}}(F) $ can stabilize the real system $(A_\star,B_\star)$ with a suboptimal performance gap.

\subsection{Perturbation analysis of Riccati difference equation} \label{sec:RiccPertur}
As the RHC controller \eqref{eq:Kmpc} is a special case of $\mathcal{K}_{\hat{A},\hat{B}}(F) $ with $F = \mathcal{R}^{( N-1)}_{\hat{A},\hat{B}}(P) $, Lemma~\ref{lem:metalemma} can be exploited to analyze the suboptimality of the RHC controller. Particularly, Lemma~\ref{lem:metalemma} shows that $J_{K_{\mathrm{RHC}}} - J_{K_\star}$ can be upper bounded by a function of the modeling error bound $\varepsilon_\mathrm{m}$ and $\varepsilon$ satisfying
\begin{equation} \label{eq:RicciDiffPstar}
\big\| \mathcal{R}^{( N-1)}_{\hat{A},\hat{B}}(P) - P_\star  \big\| \leqslant \varepsilon.
\end{equation}
Therefore, the main challenge is to characterize $\varepsilon$ by investigating the Riccati iterations using the approximate model.

The above problem has been studied in the classical perturbation analysis of the Riccati difference equation \cite{konstantinov1995conditioning}; however, the final bound in \cite{konstantinov1995conditioning} does not explicitly show the dependence on $N$, $P_\star$, and $\varepsilon_{\mathrm{p}}$. In this work, we provide a novel upper bound as in \eqref{eq:RicciDiffPstar} that captures this dependence.

To this end, due to the approximate model, we first define a state-transition matrix similar to \eqref{eq:Phi}:
\begin{equation} \label{eq:PhiBar}
\bar{\Phi}^{(j:i)}(P) \triangleq \begin{cases}  \bar{\mathcal{L}}\big(\mathcal{R}^{(j)}_{A_\star,B_\star}(P)\big)  \cdots \bar{\mathcal{L}}\big(\mathcal{R}^{(i-1)}_{A_\star,B_\star}(P)\big), &  i > j \geqslant 0,\\
I, & i=j,
\end{cases}
\end{equation}
where
\begin{equation*}  
\bar{\mathcal{L}}(P) \triangleq \mathcal{W}(P) [ \mathcal{H}(P) + \mathcal{L}_{A_\star,B_\star}(P)],
\end{equation*}
with the functions $\mathcal{W}$ and $\mathcal{H}$ defined on $\mathbb{S}_+^n$:
\begin{equation*}  
\mathcal{W}(P) \triangleq  I+(S_{B_\star}-S_{\hat{B}})P,\text{ }  \mathcal{H}(P) \triangleq (I+ S_{B_\star} P)^{-1}(\hat{A}-A_\star).
\end{equation*}
Recall the closed-loop matrix $\mathcal{L}_{A_\star,B_\star}(P)$ in \eqref{eq:ClosedLoopL}, and thus, we can interpret $\bar{\mathcal{L}}(P)$ as a perturbed $\mathcal{L}_{A_\star,B_\star}(P)$: If there is no modeling error, we have $\bar{\mathcal{L}}(P)= \mathcal{L}_{A_\star,B_\star}(P)$ due to $\mathcal{H}(P)=0$ and $\mathcal{W}(P)=I$. Therefore, $\bar{\Phi}^{(j:i)}(P)$ can also be interpreted as a perturbed version of the state-transition matrix $ \Phi_{A_\star,B_\star}^{(j:i)}(P)$ due to the approximate model. 

Then, as an extension of \eqref{eq:RiccatiDiffSameModel} to the approximate model, the following lemma can be obtained:
\begin{lem} \label{lem:Pdiff}
For any $P_1$, $P_2 \in \mathbb{S}^n_+$ and $i \in \mathbb{Z}^+$, define $\hat{P}^{(i)} \triangleq \mathcal{R}^{(i)}_{\hat{A},\hat{B}}(P_1)$ and $P^{(i)}_\star \triangleq \mathcal{R}^{(i)}_{A_\star,B_\star}(P_2) $. It holds that
\begin{subequations} \label{eq:lemmaRiccati}
\begin{align}
&\hat{P}^{(i)}  - P_\star^{(i)} = \left[ \Phi_{\hat{A},\hat{B}}^{(0:i)}(P_1) \right]^\top \left(P_1 - P_2\right) \bar{\Phi}^{(0:i)}(P_2) \label{eq:PdiffTerm1}\\& +\sum_{j=1}^i  \left[ \Phi_{\hat{A},\hat{B}}^{(i-j+1:i)}(P_1) \right]^\top  \mathcal{M}\left( \hat{P}^{(i-j)}  ,P_\star^{(i-j)} \right) \bar{\Phi}^{(i-j+1:i)}(P_2), \label{eq:PdiffTerm2}
\end{align}
\end{subequations}
where $\mathcal{M}$ is a function on $\mathbb{S}_+^n \times \mathbb{S}_+^n $: $\mathcal{M}(P_1,P_2) \triangleq \hat{A}^\top P_2 (I+S_{B_\star}P_2)^{-1}\hat{A}- A_\star^\top P_2 (I+S_{B_\star}P_2)^{-1}A_\star + \hat{A}^\top (I+P_1 S_{\hat{B}} )^{-1} P_2 (S_{B_\star}-S_{\hat{B}})P_2 (I+S_{B_\star} P_2)^{-1}\hat{A}. $
\end{lem}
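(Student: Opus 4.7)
The plan is induction on $i$, reducing to a single-step identity. The base case $i=0$ is trivial: both state-transition matrices equal $I$, the sum in \eqref{eq:PdiffTerm2} is empty, and \eqref{eq:lemmaRiccati} collapses to $P_1 - P_2 = P_1 - P_2$. So the entire content is (i) the one-step identity
\[
\mathcal{R}_{\hat{A},\hat{B}}(P_1) - \mathcal{R}_{A_\star,B_\star}(P_2) = [\mathcal{L}_{\hat{A},\hat{B}}(P_1)]^\top (P_1 - P_2)\,\bar{\mathcal{L}}(P_2) + \mathcal{M}(P_1, P_2),
\]
and (ii) promoting it to index $i+1$ using the inductive hypothesis.

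For (i), I would rewrite both sides via the resolvent identity $P(I+S_B P)^{-1} = (I+PS_B)^{-1}P$ so that the LHS becomes $\hat{A}^\top (I+P_1 S_{\hat{B}})^{-1} P_1 \hat{A} - A_\star^\top (I+P_2 S_{B_\star})^{-1} P_2 A_\star$, and use the simplification $\bar{\mathcal{L}}(P_2) = \hat{A} - S_{\hat{B}} P_2(I+S_{B_\star}P_2)^{-1}\hat{A}$. Expanding $[\mathcal{L}_{\hat{A},\hat{B}}(P_1)]^\top (P_1-P_2)\bar{\mathcal{L}}(P_2)$ produces four cross terms; the third summand of $\mathcal{M}$ is tailored precisely to eliminate the $P_2(S_{B_\star}-S_{\hat B})P_2$ contribution, while the remaining terms collapse via $(I+(S_{B_\star}-S_{\hat B})P_2)(I+S_{B_\star}P_2)^{-1} = I - S_{\hat B}P_2(I+S_{B_\star}P_2)^{-1}$ and $(I+P_1 S_{\hat{B}})^{-1} P_1 S_{\hat{B}} = I - (I+P_1 S_{\hat{B}})^{-1}$. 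The first two summands of $\mathcal{M}$ then account for the $\hat{A}^\top(\cdot)\hat{A} - A_\star^\top(\cdot)A_\star$ discrepancy, leaving exactly the LHS.

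For (ii), apply (i) with $(P_1,P_2)\leftarrow(\hat{P}_{(i)}, P_{(i)})$ to $\mathcal{R}_{\hat{A},\hat{B}}(\hat{P}_{(i)}) - \mathcal{R}_{A_\star,B_\star}(P_{(i)})$, then substitute the inductive hypothesis into $\hat{P}_{(i)} - P_{(i)}$. The recursive definitions \eqref{eq:Phi} and \eqref{eq:PhiBar} give the concatenation relations $\Phi^{(a:i+1)}_{\hat{A},\hat{B}}(P_1) = \Phi^{(a:i)}_{\hat{A},\hat{B}}(P_1)\,\mathcal{L}_{\hat{A},\hat{B}}(\hat{P}_{(i)})$ and $\bar{\Phi}^{(a:i+1)}(P_2) = \bar{\Phi}^{(a:i)}(P_2)\,\bar{\mathcal{L}}(P_{(i)})$ for $a \leqslant i$, which promote each index-$j$ term of the hypothesis into the index-$(j+1)$ term at level $i+1$; the free $\mathcal{M}(\hat{P}_{(i)},P_{(i)})$ supplies the new $j=1$ term. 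A small re-indexing verification confirms that the promoted sum runs over $j=1,\dots,i+1$, as required.

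The main obstacle is (i). The subtle point is choosing $\bar{\mathcal{L}}$ and $\mathcal{M}$ so that the decomposition closes: $\bar{\mathcal{L}}(P_2)$ is neither $\mathcal{L}_{\hat{A},\hat{B}}(P_2)$ nor $\mathcal{L}_{A_\star,B_\star}(P_2)$, but a hybrid that places the $S_{B_\star}$-resolvent next to $\hat{A}$ so as to interface correctly with $(I+P_1 S_{\hat B})^{-1}\hat{A}$ on the left. Correspondingly, $\mathcal{M}$ must carry precisely the two quadratic-form difference terms arising from $\hat{A}-A_\star$ plus the single cross term arising from $S_{B_\star}-S_{\hat B}$. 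Once this ansatz is in place, the verification is a finite sequence of resolvent cancellations; identifying the ansatz is the only non-routine step.
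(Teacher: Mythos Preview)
Your proposal is correct and essentially matches the paper's proof: both argue by induction on $i$, both reduce the inductive step to the one-step identity $\hat{P}_{(1)} - P_{(1)} = [\mathcal{L}_{\hat{A},\hat{B}}(P_1)]^\top (P_1-P_2)\,\bar{\mathcal{L}}(P_2) + \mathcal{M}(P_1,P_2)$, and both verify that identity by splitting off the $\hat{A}$-vs-$A_\star$ discrepancy and then manipulating the resolvents (your simplification $\bar{\mathcal{L}}(P_2)=[I-S_{\hat B}P_2(I+S_{B_\star}P_2)^{-1}]\hat{A}$ is exactly what the paper uses implicitly). The only cosmetic difference is that the paper performs the induction by applying the $k$-step hypothesis to the shifted initial pair $(\hat{P}_{(1)},P_{(1)})$ and then re-indexing, whereas you apply the one-step identity at the outermost level and substitute the $i$-step hypothesis inside; the concatenation relations you cite make these two unrollings equivalent.
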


Equation \eqref{eq:lemmaRiccati} is closely related to \eqref{eq:RiccatiDiffSameModel}: While \eqref{eq:RiccatiDiffSameModel} concerns the state-transition matrix $\Phi_{A_\star,B_\star}$ of the true system, \eqref{eq:PdiffTerm1} contains the perturbed state-transition matrix $\bar{\Phi}$ and the state-transition matrix $\Phi_{\hat{A},\hat{B}}$ of the approximate model; moreover, \eqref{eq:PdiffTerm2} contains an extra term caused by the modeling error.

To upper bound $\| \mathcal{R}^{( N-1)}_{\hat{A},\hat{B}}(P) - P_\star  \|$, we exploit \eqref{eq:lemmaRiccati} with $P_1 = P$ and $P_2 = P _\star$. This motivates us to analyze the spectral norm of the perturbed state-transition matrices in \eqref{eq:lemmaRiccati}. The analysis of these state-transition matrices is analogous to the stability analysis of the corresponding time-varying systems.
 
Then we establish the following results for the two perturbed state-transition matrices.

\begin{lem} \label{lem:PhiBar}
Given any two integers $i  \geqslant j \geqslant 0$ and $\bar{\Phi}^{(j:i)}$ defined in \eqref{eq:PhiBar}, if Assumptions~\ref{ass:LQR} and \ref{ass:LQRgeneral2} hold, then
\begin{equation} \label{eq:PhiBarDecayRate}
    \|\bar{\Phi}^{(j:i)}( P_\star)\| \leqslant \sqrt{\beta_\star^{-1} } [\gamma_1(\varepsilon_\mathrm{m})] ^{i-j},
\end{equation}
with
\begin{align}
 \label{eq:Defgamma1Psi}
\gamma_1(\varepsilon_\mathrm{m}) & \triangleq \sqrt{\beta_\star^{-1} }  \psi(\varepsilon_\mathrm{m} ) +\sqrt{ 1-\beta_\star },\end{align}
where $\psi$ is defined in \eqref{eq:psiDEFproof} and satisfies $\psi(\varepsilon_\mathrm{m}) = O(\varepsilon_{\mathrm{m}})$.
\end{lem}

Comparing \eqref{eq:PhiBarDecayRate} and \eqref{eq:RateLstar}, $\bar{\Phi}^{i:j}(P_\star)$ admits a perturbed decay rate $\gamma_1$: In $\gamma_1$, $\sqrt{1-\beta_\star}$ from \eqref{eq:RateLstar} is perturbed by the term $\sqrt{\beta_\star^{-1} }  \psi(\varepsilon_\mathrm{m} ) $.

To analyze $\Phi_{\hat{A},\hat{B}}^{(0:i)}$ in \eqref{eq:lemmaRiccati}, we limit the modeling error by introducing the following technical assumptions:
\begin{ass} \label{ass:TechContPertur}
The modeling error $\varepsilon_{\mathrm{m}}$ is sufficiently small such that
\begin{enumerate}[label=(\alph*)] 
         \item $\varepsilon_\mathrm{m} \leqslant 1/(16 \|P_\star\|^2 )$; \label{ass4:i}

        \item $ f_{\mathcal{C}}(\varepsilon_{\mathrm{m}} ) \leqslant \nu_\star/2$, where $\nu_\star$ is defined in \eqref{eq:defGramianLowerBound}, and $f_{\mathcal{C}}$ is defined in \eqref{eq:defFc} and satisfies $f_{\mathcal{C}}(\varepsilon_{\mathrm{m}} ) = O(\varepsilon_\mathrm{m})$; \label{ass4:ii}

             \item $\varepsilon_\mathrm{m} \leqslant \alpha_\star$ with the constant $\alpha_\star$ defined in \eqref{eq:DualPerturbCondi}. \label{ass4:iii}

\end{enumerate}
\end{ass}

\begin{ass} \label{ass:Dual}
 The true system $(A_\star,B_\star,Q,R)$ satisfies  $\underline{\sigma}(P_{\mathrm{dual} }) \geqslant 1$, where $P_{\mathrm{dual}} \in \mathbb{S}^n_{++}$ is a constant matrix formulated on the true system and is defined in \eqref{eq:DualRicca}.  
\end{ass}

Assumption~\ref{ass:TechContPertur} limits the modeling error, and Assumption~\ref{ass:Dual} is related to the controllability of the real system, see more details in Remark~\ref{rem:Dual} of Appendix~\ref{sec:Appendix2}. Assumption~\ref{ass:TechContPertur}\ref{ass4:i} ensures the stabilizability of any $(\hat{A},\hat{B}) \in \mathcal{S} (\varepsilon_\mathrm{m})$ and is used to establish a slower decay rate of $\Phi_{\hat{A},\hat{B}}^{(0:i)}$ for $i < n_{\mathrm{cr}}$, analogous to the slower rate in \eqref{eq:KnownModel}. Moreover, Assumption~\ref{ass:TechContPertur}\ref{ass4:ii} ensures the controllability\footnote{Even if Assumption~\ref{ass:TechContPertur}\ref{ass4:ii} also implies stabilizability, the upper bound for $\varepsilon_{\mathrm{m}}$ in Assumption~\ref{ass:TechContPertur}\ref{ass4:i} is still needed for computing the decay rates.} of $(\hat{A},\hat{B}) $. Then, by combining Assumption~\ref{ass:TechContPertur}\ref{ass4:ii} with Assumptions~\ref{ass:TechContPertur}\ref{ass4:iii} and \ref{ass:Dual}, we can exploit controllability to establish a faster decay rate for $i \geqslant n_{\mathrm{cr}}$, analogous to the faster rate in \eqref{eq:KnownModel}. Note that in Assumptions~\ref{ass:TechContPertur}\ref{ass4:ii}, \ref{ass:TechContPertur}\ref{ass4:iii} and \ref{ass:Dual}, the constants $\nu_\star$, $\alpha_\star$, and $P_{\mathrm{dual}}$ depend on the true system only and are independent of the errors and $N$ under study. See more details in Appendix~\ref{sec:Appendix2}.  

With the above assumptions, we have the following result:
\begin{prop} \label{prop:PhiBarAB}
If Assumptions~\ref{ass:ControlIndex}, \ref{ass:LQR}, \ref{ass:TechContPertur}, and \ref{ass:Dual} hold, then for any $i, j \in \mathbb{Z}^+$, it holds that  
\begin{IEEEeqnarray}{c}\label{eq:PhihatIneq}
\hspace{-0.3cm} \|\Phi_{\hat{A},\hat{B}}^{(j:i)}(P)\| \leqslant 
  \begin{cases}
 \zeta     [\gamma_2(\varepsilon_\mathrm{m})] ^{i-j},& \text{ if }i -j \geqslant n_{\mathrm{cr} }, \\
 \overline{\zeta} \big[ \overline{\gamma}_2(\varepsilon_\mathrm{m} ,\varepsilon_{\mathrm{p}}) \big]^{{i-j}}, &\text{ if }  0\leqslant i -j < n_{\mathrm{cr}},
  \end{cases}
\end{IEEEeqnarray}
where $\gamma_2(\varepsilon_\mathrm{m} )  \triangleq \sqrt{1- \alpha^{-1}_{\varepsilon_\mathrm{m}} \beta_\star }$, $\overline{\gamma}_2(\varepsilon_\mathrm{m},\varepsilon_{\mathrm{p}} )  \triangleq \sqrt{ 1 -  \alpha_{\varepsilon_\mathrm{m}}^{-1} \overline{\beta}(\varepsilon_{\mathrm{p}}) } $, $ \alpha_{\varepsilon_\mathrm{m}} \triangleq (1- 8\|P_\star\|^2\varepsilon_\mathrm{m}  )^{-1/2} $, and the positive constants $\zeta $ and $\overline{\zeta}$ are defined in \eqref{eq:PhiZeta} and \eqref{eq:PhiConstant2}.
\end{prop}

Proposition~\ref{prop:PhiBarAB} shows that if the modeling error is sufficiently small, the approximate state-transition matrix, formulated on any $(\hat{A},\hat{B}) \in \mathcal{S} (\varepsilon_\mathrm{m})$, is guaranteed to converge to zero. The two distinct decay rates in \eqref{eq:PhihatIneq} satisfy $\gamma_2 \leqslant \overline{\gamma}_2<1$, analogous to the two decay rates in Lemma~\ref{lem:BasicRate}. Proposition~\ref{prop:PhiBarAB} can also be of independent interest, e.g., for the stability analysis of finite-horizon LQR or the Kalman filter with modeling errors.

\begin{rem}
While the constants in the bounds \eqref{eq:PhiBarDecayRate} and \eqref{eq:PhihatIneq}, e.g., $\zeta$ and $\overline{\zeta}$, can be conservative, the rates $\gamma_1(\varepsilon_{\mathrm{m}})$, $\gamma_2(\varepsilon_\mathrm{m} )$, and $\overline{\gamma}_2(\varepsilon_\mathrm{m},\varepsilon_{\mathrm{p}})$ have the desired property that they recover the ideal rate $\sqrt{1-\beta_\star}$ of the true system in \eqref{eq:RateLstar} if $\varepsilon_\mathrm{m} =\varepsilon_\mathrm{p} =0$. 
\end{rem}

In the rest of this work, we will omit the dependence of $\gamma_1$, $\gamma_2$, and $\overline{\gamma}_2$ on $\varepsilon_\mathrm{m}$ and $ \varepsilon_{\mathrm{p}}$ for the simplicity of notation. With Lemmas~\ref{lem:Pdiff}, \ref{lem:PhiBar} and Proposition~\ref{prop:PhiBarAB}, we can now characterize $\varepsilon$ in \eqref{eq:RicciDiffPstar} as a function of $\varepsilon_\mathrm{p}$, $\varepsilon_\mathrm{m}$ and $N$ in the following result:  
 \begin{thm} \label{thm:RiccatiDiff}
Given $P$ with $\|P- P_\star\| \leqslant \varepsilon_\mathrm{p}$, if Assumptions~\ref{ass:ControlIndex}, \ref{ass:LQR}, \ref{ass:TechContPertur}, and \ref{ass:Dual} hold, then for $i \in \mathbb{Z}^+$, $\|\mathcal{R}^{(i)}_{\hat{A},\hat{B}}(P) - P_\star\|
\leqslant  \widehat{E}(\varepsilon_\mathrm{m}, \varepsilon_\mathrm{p},i)  \triangleq $
 \begin{align}\label{eq:PerturbRiccatiDiff}
 \hspace{-0.35cm} \Tilde{\zeta} \Bigg[ 
 \gamma_1^i \Tilde{\gamma}_2^i  \varepsilon_\mathrm{p}  +    \Tilde{\psi}(\varepsilon_\mathrm{m}) \Big (   \sum_{k=0}^{\min\{i, n_{\mathrm{cr}}\}-1} \overline{\gamma}_2^k\gamma_1^k+  \sum_{j=n_{\mathrm{cr}}}^{i-1} \gamma_2^j\gamma_1^j \Big) \Bigg],    
 \end{align}
 where $\Tilde{\psi}$ is defined in \eqref{eq:TidlePsiIneq} and satisfies $\Tilde{\psi} = O(\varepsilon_{\mathrm{m}})$, 
\begin{equation} \label{eq:rateThm}
\Tilde{\gamma}_2 = \begin{cases}\gamma_2, & \text{ if } i \geqslant n_{\mathrm{cr}}, \\
\overline{\gamma}_2, & \text{ otherwise}, 
    \end{cases}
\end{equation}
and $\Tilde{\zeta} = \sqrt{\beta_\star^{-1}}  \max\{ \overline{\zeta},\zeta \}$, with $\gamma_2$, $\overline{\gamma}_2$, $\zeta$, $\overline{\zeta}$ defined in Proposition~\ref{prop:PhiBarAB}, and $\gamma_1$ defined in \eqref{eq:Defgamma1Psi}.
\end{thm}

Note that $\lim_{i \to \infty}\widehat{E}(\varepsilon_\mathrm{m}, \varepsilon_\mathrm{p},i) $ exists iff $\varepsilon_{\mathrm{m}}$ is sufficiently small, as in Assumption~\ref{ass:TechContPertur}, such that $\gamma_1 \gamma_2 <1$, and thus the approximate Riccati iterations, formulated on any $(\hat{A},\hat{B}) \in \mathcal{S}  (\varepsilon_\mathrm{m})$, will converge. In this case, the upper bound \eqref{eq:PerturbRiccatiDiff} characterizes the convergence behavior of the Riccati iterations under the modeling error. As the iteration number $i$ increases, the term $ \gamma_1^i \Tilde{\gamma}_2^i  \varepsilon_\mathrm{p}$ in \eqref{eq:PerturbRiccatiDiff} decreases and drives $\mathcal{R}^{(i)}_{\hat{A},\hat{B}}(P)$ closer to $P_\star$; however, the remaining term increases and drives $\mathcal{R}^{(i)}_{\hat{A},\hat{B}}(P)$ away from $P_\star$, which reflects the propagation of the modeling error. The above two terms capture the trade-off between the prediction horizon and the modeling error in the convergence of the approximate Riccati iterations.

\subsection{Performance upper bound with a modeling error} \label{sec:PerError}
Based on \eqref{eq:Kmpc}, Theorem~\ref{thm:RiccatiDiff} and Lemma~\ref{lem:metalemma}, the tradeoff between the prediction horizon, the modeling error, and the terminal matrix error can be reflected in the performance of the RHC controller as follows.
\begin{lem} \label{prop:MPCperformance}
Given the nominal RHC controller \eqref{eq:MPCproblem}, $\widehat{E}$ in \eqref{eq:PerturbRiccatiDiff} and $c_\star$ in \eqref{eq:cstar}, if Assumptions~\ref{ass:ControlIndex}, \ref{ass:LQR}, \ref{ass:TechContPertur}, and \ref{ass:Dual} hold, and if $\varepsilon_\mathrm{m}+ \widehat{E}(\varepsilon_\mathrm{m}, \varepsilon_\mathrm{p},N-1) \leqslant 1/(40 \Upsilon_\star^4 \|P_\star\|^2 )$, then $\gamma_1 \gamma_2 < 1$ holds, and $A_\star - B_\star K_\mathrm{RHC}$ is Schur stable with $J_{K_\mathrm{RHC}}-J_{K_\star}   $
\begin{equation} \label{eq:MPCperforUnknown}
 \leqslant g(\varepsilon_\mathrm{m},\varepsilon_\mathrm{p},N) \triangleq c_\star  \big[\varepsilon_\mathrm{m}+ \widehat{E}(\varepsilon_\mathrm{m}, \varepsilon_\mathrm{p},N-1 ) \big]^2.
\end{equation}
\end{lem}

In Lemma~\ref{prop:MPCperformance}, $\widehat{E}(\varepsilon_\mathrm{m}, \varepsilon_\mathrm{p},N-1)$ depends on $N-1$ because of $\mathcal{R}^{ (N-1)}_{\hat{A},\hat{B}}(P)$ in \eqref{eq:Kmpc}. Moreover, the upper bound for $\varepsilon_\mathrm{m}+ \widehat{E}(\varepsilon_\mathrm{m}, \varepsilon_\mathrm{p},N-1)$ limits the joint effect of $\varepsilon_{\mathrm{m}}$, $\varepsilon_{\mathrm{p}}$, and $N$, such that the resulting nominal RHC controller can stabilize the unknown system with a suboptimality guarantee. When the modeling error is significantly large, the nominal controller may fail to stabilize the unknown system. In this case, a more accurate terminal value function can be considered, or a robust controller can be used to explicitly address the error\cite{mayne2014model}.

In addition, as $c_\star$ in \eqref{eq:cstar} increases as $\min\{n,m\}$ increases, a larger input dimension or state dimension may lead to a larger $c_\star$ and thus a larger performance gap $g$ in \eqref{eq:MPCperforUnknown}. Deriving the exact growth rate of $g$ when the state or input dimension increases is a subject of future work.

In the bound $g$, due to the tradeoff in the term $\widehat{E}$, increasing $N$ has a complex impact on the control performance due to the propagation of the modeling error. Then natural questions are whether an optimal prediction horizon $N_\star $ can be found to minimize the performance upper bound $g$, and how $N_\star$ varies when the errors $\varepsilon_\mathrm{m}$ and $\varepsilon_\mathrm{p}$ change.

\begin{exam} \label{exam:1}
To address the above questions, consider the case $i = N-1 < n_{\mathrm{cr}}$ as an example. Then, we can rewrite \eqref{eq:PerturbRiccatiDiff} into
\begin{align} \label{eq:Ereformu}
\widehat{E} = \Tilde{\zeta}  \Bigg [   \Big(\varepsilon_\mathrm{p} - \frac{\Tilde{\psi}(\varepsilon_\mathrm{m}) }{ 1- \gamma_1 \overline{\gamma}_2 }   \Big) \big(\gamma_1 \overline{\gamma}_2 \big)^{N-1}  +\frac{\Tilde{\psi}(\varepsilon_\mathrm{m}) }{ 1- \gamma_1 \overline{\gamma}_2 } \Bigg].
\end{align}
Note that in \eqref{eq:Ereformu}, the term
\begin{equation} \label{eq:CondiChangeN}
\varepsilon_\mathrm{p} - \frac{\Tilde{\psi}(\varepsilon_\mathrm{m}) }{ 1- \gamma_1\overline{\gamma}_2  }
\end{equation}
can be interpreted as a measure of the relative difference between $\varepsilon_\mathrm{p} $ and $\varepsilon_\mathrm{m}$, due to $\Tilde{\psi} = O(\varepsilon_{\mathrm{m}})$ and thus $\Tilde{\psi}(\varepsilon_\mathrm{m})/ ( 1- \gamma_1\overline{\gamma}_2 ) = O(\varepsilon_{\mathrm{m}})$. Then \eqref{eq:Ereformu} suggests that if the prediction horizon $N$ increases, the overall change of $\widehat{E}$ and $g$ depend on the sign of \eqref{eq:CondiChangeN}. More specifically, if $\varepsilon_{\mathrm{p}}$ is relatively smaller than $\varepsilon_\mathrm{m}$, i.e., \eqref{eq:CondiChangeN} is negative, then $\inf_N g(\varepsilon_\mathrm{m},\varepsilon_\mathrm{p},N)$ is attained at $N=1$; otherwise, $\inf_N g(\varepsilon_\mathrm{m},\varepsilon_\mathrm{p},N)$ is attained at the largest $N$ possible, i.e., $N =n_{\mathrm{cr}}$ in this example.
\end{exam}

The observation in Example~\ref{exam:1} can be generalized beyond the case $i< n_{\mathrm{cr}}$. To state the formal result, the following terms are relevant, and their relation can be established:
 \begin{equation}  \label{eq:proofRateCompare}
\hspace{-0.15cm} \varepsilon_\mathrm{p} - \frac{\Tilde{\psi}(\varepsilon_\mathrm{m}) }{ 1-\frac{ \gamma_1  \gamma_2}{ (\overline{\gamma}_2/\gamma_2)^{n_{\mathrm{cr}}-1 }     } }  \geqslant  \varepsilon_\mathrm{p} - \frac{\Tilde{\psi}(\varepsilon_\mathrm{m}) }{ 1- \gamma_1  \gamma_2 }  \geqslant  \varepsilon_\mathrm{p} - \frac{\Tilde{\psi}(\varepsilon_\mathrm{m}) }{ 1- \gamma_1  \overline{\gamma}_2 },
\end{equation}
which follows from $\gamma_2 \leqslant \overline{\gamma}_2 $. Similar to \eqref{eq:CondiChangeN}, the above terms represent three quantitative measures of the difference between the two errors $\varepsilon_\mathrm{m}$  and $\varepsilon_\mathrm{p} $. Their signs decide the optimal prediction horizon $N_\star$, as shown in the following result:
\begin{thm} \label{coro:ChangeN}
In the setting of Lemma~\ref{prop:MPCperformance}, let $\gamma_{\mathrm{ratio}} \triangleq  (\overline{\gamma}_2/\gamma_2)^{n_{\mathrm{cr}}-1}  \geqslant 1$, and consider the performance upper bound $g(\varepsilon_\mathrm{m},\varepsilon_\mathrm{p},N)$ in \eqref{eq:MPCperforUnknown}.
\begin{enumerate}[label=(\alph*)]
    \item If $\varepsilon_\mathrm{p} - \Tilde{\psi}(\varepsilon_\mathrm{m})/(1-\gamma_1 
\overline{\gamma}_2) >0 $, then $\inf_N g(\varepsilon_\mathrm{m},\varepsilon_\mathrm{p},N) = \lim_{N \to \infty} g(\varepsilon_\mathrm{m},\varepsilon_\mathrm{p},N)$ and $g(\varepsilon_\mathrm{m},\varepsilon_\mathrm{p},N)$ is strictly decreasing as $N$ increases;
    
    \item If $\varepsilon_\mathrm{p} - \Tilde{\psi}(\varepsilon_\mathrm{m})/[1-\gamma_1 \gamma_2   ] <0 $, then $\inf_N g(\varepsilon_\mathrm{m},\varepsilon_\mathrm{p},N) =  g(\varepsilon_\mathrm{m},\varepsilon_\mathrm{p},N_\star)$ with $N_\star = 1$, and 
   if $\varepsilon_\mathrm{p} - \Tilde{\psi}(\varepsilon_\mathrm{m})/[1-\gamma_1 \gamma_2/ \gamma_{\mathrm{ratio}}   ] <0 $ holds additionally, $g(\varepsilon_\mathrm{m},\varepsilon_\mathrm{p},N)$ is strictly increasing as $N$ increases;
     
    \item If $\varepsilon_\mathrm{p} - \Tilde{\psi}(\varepsilon_\mathrm{m})/(1-\gamma_1 \gamma_2 ) >0> \varepsilon_\mathrm{p} - \Tilde{\psi}(\varepsilon_\mathrm{m})/(1-\gamma_1  \overline{\gamma}_2)$, $g(\varepsilon_\mathrm{m},\varepsilon_\mathrm{p},N)$ is strictly increasing as $N \in \{1,\dots, n_{\mathrm{cr}}\}$ increases and strictly decreasing as $N\geqslant n_{\mathrm{cr}}+1$ increases. Moreover,
    \begin{itemize}
        \item $\inf_N g(\varepsilon_\mathrm{m},\varepsilon_\mathrm{p},N) =  g(\varepsilon_\mathrm{m},\varepsilon_\mathrm{p},N_\star)$ with $N_\star = 1$  if additionally 
        \begin{equation*} 
        \varepsilon_\mathrm{p} -  \Tilde{\psi}(\varepsilon_\mathrm{m}) \Big[ \frac{1- (\overline{\gamma}_2 \gamma_1)^{n_{\mathrm{cr}}}}{1- \gamma_1\overline{\gamma}_2} +\frac{(\gamma_1 \gamma_2)^{n_{\mathrm{cr}}}}{ 1- \gamma_1\gamma_2  }   \Big] \leqslant 0 ;        \end{equation*}

        \item $\inf_N g(\varepsilon_\mathrm{m},\varepsilon_\mathrm{p},N) = \lim_{N \to \infty} g(\varepsilon_\mathrm{m},\varepsilon_\mathrm{p},N)$ otherwise;
    \end{itemize}

    \item If $\varepsilon_\mathrm{p} - \Tilde{\psi}(\varepsilon_\mathrm{m})/[1-\gamma_1 \gamma_2   ] = \varepsilon_\mathrm{p} - \Tilde{\psi}(\varepsilon_\mathrm{m})/(1-\gamma_1 
\overline{\gamma}_2)=0$, $g(\varepsilon_\mathrm{m},\varepsilon_\mathrm{p},N)$ is a constant as $N$ varies,
\end{enumerate}
where $\gamma_2$ and $\overline{\gamma}_2$ are defined in Proposition~\ref{prop:PhiBarAB}, $\gamma_1$ is defined in \eqref{eq:Defgamma1Psi}, and $\Tilde{\psi}$ is defined in \eqref{eq:TidlePsiIneq} satisfying $\Tilde{\psi} = O(\varepsilon_{\mathrm{m}})$.
\end{thm}

Theorem~\ref{coro:ChangeN} shows in \eqref{eq:MPCperforUnknown}, the performance upper bound $g$ achieves its infimum either when $N=1$ or when $N\to \infty$, depending on the relative difference between $\varepsilon_{\mathrm{m}}$ and $\varepsilon_{\mathrm{p}}$, quantified by the terms in \eqref{eq:proofRateCompare}. More specifically, Theorem~\ref{coro:ChangeN}(a) shows that if $\varepsilon_{\mathrm{p}}$ is relatively larger than $\varepsilon_{\mathrm{m}}$ in the sense of \eqref{eq:CondiChangeN} being positive, $N\to \infty$ achieves the infimum of $g$. On the other hand, Theorem~\ref{coro:ChangeN}(b) shows that if $\varepsilon_\mathrm{m}$ is relatively larger than $\varepsilon_\mathrm{p}$, i.e., when $\varepsilon_\mathrm{p} - \Tilde{\psi}(\varepsilon_\mathrm{m})/[1-\gamma_1 \gamma_2   ] <0 $, $g$ is minimized by choosing $N=1$. Note that $g$ is exponential in $N$ based on \eqref{eq:PerturbRiccatiDiff} and \eqref{eq:Ereformu}. In addition, given the errors, $g$ is a monotonic function of $N$ in Theorem~\ref{coro:ChangeN}(a) and (b), but it shows a more complex behavior in (c).

\begin{rem}
Depending on the relative difference between $\varepsilon_{\mathrm{m}}$ and $\varepsilon_{\mathrm{p}}$, Theorem~\ref{coro:ChangeN} indicates that choosing $N=1$ or $N \to \infty$ could achieve a better control performance. However, even if $N \to \infty$ is desired, a finite $N$ can be sufficient to make $g$ close to its infimum, based on its exponential dependence on $N$. Considering additionally the faster decay rate in \eqref{eq:rateThm} when $i \geqslant n_\mathrm{cr}$, it can be beneficial to choose a finite $N$ satisfying $N > n_{\mathrm{cr}}$, where we recall that $n_{\mathrm{cr}}$ is the controllability index. This also suggests choosing $N=n_{\mathrm{cr} }$ or $n_{\mathrm{cr} }+1$ could be reasonable initial guesses, which can then be further tuned for performance improvement\footnote{In this work, the control horizon and the prediction horizon are equal.}. If $n_{\mathrm{cr}}$ is not known, the
state dimension can be an alternative, as a general controllable system
has $n_{\mathrm{cr}}=n$. Using $N=n$ is also suggested in \cite{Schutter2001model}, and in \cite{fruchard2012choice} for controlling nonholonomic vehicles.
\end{rem}

The above observation depends on the performance \textit{upper bound} in \eqref{eq:MPCperforUnknown} and thus may not reflect the actual behavior of $J_{K_\mathrm{RHC}}-J_{K_\star}$. However, the observation can indeed be seen in extensive simulations in Fig.~\ref{fig:ExamThm}. In this simulation study, $5$ random real systems $(A_\star,B_\star)$, with $n=10$ and $m=2$, are generated, where each entry in $A_\star$ and $B_\star$ is sampled from $\mathcal{U}_{[-2,2]}$ and $\mathcal{U}_{[0,1]}$, respectively. All systems have controllability index $n_{\mathrm{cr}}=5$. Moreover, each $(A_\star,B_\star)$ and the corresponding $P_\star$ are perturbed by random matrices\footnote{For $P_\star$, a random matrix is multiplied by its transpose to generate a positive semi-definite matrix perturbation.}, whose entries are sampled from $\mathcal{U}_{[0,0.001]}$. This is repeated for $5$ times for each real system, leading to $25$ approximate $(\hat{A},\hat{B},P)$ and $25$ nominal RHC controllers. For each controller, we vary its prediction horizon and compute the normalized performance gap as $[J_{K_\mathrm{RHC}}(N)-J_{K_\star}] /\max_N (J_{K_\mathrm{RHC}}(N)-J_{K_\star})$. 

The results are in Fig.~\ref{fig:ExamThm}. The performance gaps have initial transient behaviors, but converge quickly as $N$ increases. Most controllers indeed obtain their optimal control performance at $N=1$ or the largest $N$ until convergence. All of them converge after $N > n_{\mathrm{cr}}$, showing the potential benefit of a finite $N$ satisfying $N > n_{\mathrm{cr}}$ even if $N \to \infty$ is desired. This is important when the computational aspect is considered, as a larger $N$ leads to a higher computational cost.

However, there are a few cases where the optimal $N$ is finite but greater than $1$. Two examples are shown in Fig.~\ref{fig:Singular}, where $N=2$ and $N=3$ achieve the optimal performance, respectively. However, the performance difference between the optimal performance and the performance upon convergence is not significant. These examples show that the upper bound may not reflect the actual behavior of $J_{K_\mathrm{RHC}}-J_{K_\star}$.

\begin{figure}
    \centering
    \includegraphics[width=0.5\textwidth]{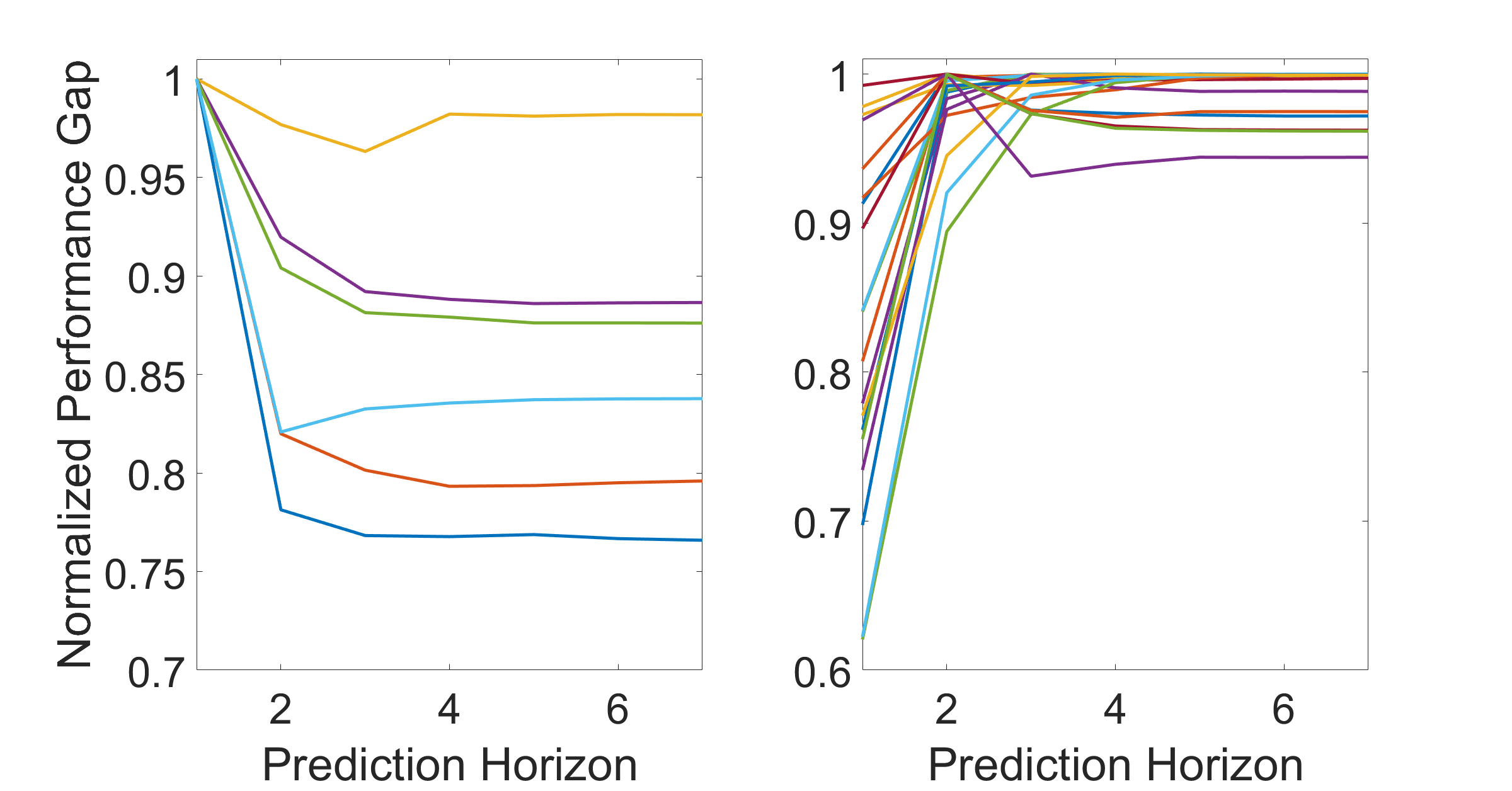}
    \caption{ Given $10$-dimensional random real systems, the performance gaps $J_{K_\mathrm{RHC}}-J_{K_\star}$ (after normalization) of $25$ nominal RHC controllers under varying $N$ are divided into two groups based on the trend of their performance and are shown in these two figures. }
    \label{fig:ExamThm}
\end{figure}

\begin{figure}
    \centering
    \includegraphics[width=0.4\textwidth]{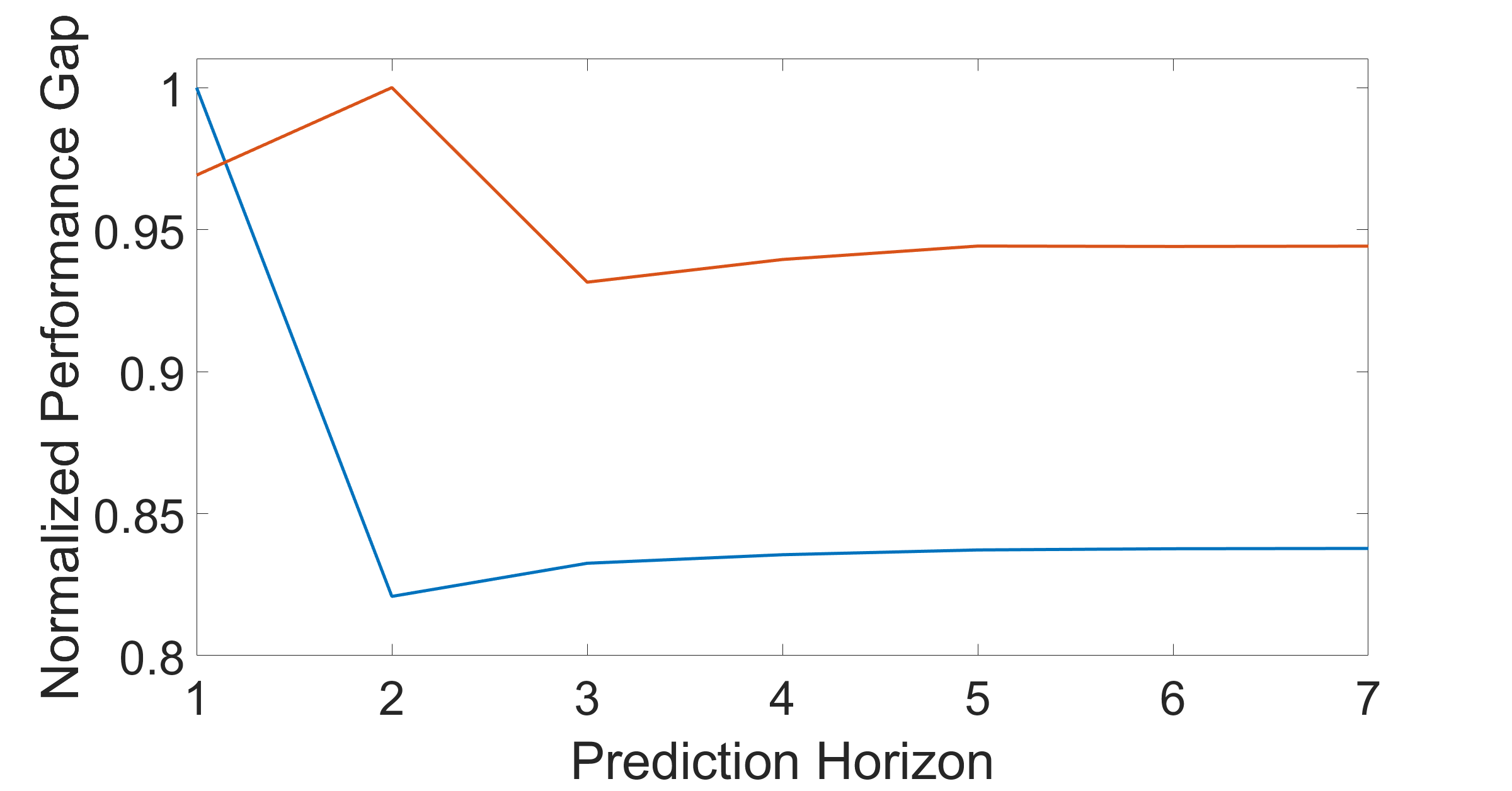}
    \caption{ Two nominal RHC controllers achieve their optimal closed-loop performance at $N=2$ and $N=3$, respectively. }
    \label{fig:Singular}
\end{figure}

\begin{rem}
    Lemma~\ref{prop:MPCperformance} is established under tight conditions, i.e., the bound $1/(40 \Upsilon_\star^4 \|P_\star\|^2 )$ and the bounds on $\varepsilon_{\mathrm{m}}$ in Assumption~\ref{ass:TechContPertur} can be small and thus may not be satisfied in practice. This is a common limitation of analytical error bounds \cite{mania2019certainty,simchowitz2020naive}. Despite this limitation, we can still obtain a non-trivial result in Theorem~\ref{coro:ChangeN}, and this result can be observed in simulations when going beyond the tight conditions, e.g., the modeling errors of the simulation in Fig.~\ref{fig:ExamThm} are actually greater than $1/(40 \Upsilon_\star^4 \|P_\star\|^2 )$. 
\end{rem}

We also consider the following practical case of Theorem~\ref{coro:ChangeN} where $P=0$ in the RHC controller \eqref{eq:MPCproblem}.

\begin{coro} \label{prop:MPCnoTerminal} 
Given the nominal RHC controller \eqref{eq:MPCproblem} with $P=0$, if Assumptions~\ref{ass:ControlIndex}, \ref{ass:LQR}, \ref{ass:TechContPertur}, and \ref{ass:Dual} hold, and if $\varepsilon_\mathrm{m}+ \widehat{E}(\varepsilon_\mathrm{m}, \|P_\star\|,N-1) \leqslant 1/(40 \Upsilon_\star^4 \|P_\star\|^2 )$, then $A_\star - B_\star K_\mathrm{RHC}$ is Schur stable with $J_{K_\mathrm{RHC}}-J_{K_\star} \leqslant g(\varepsilon_\mathrm{m},\varepsilon_\mathrm{p},N)$, where $\inf_N g(\varepsilon_\mathrm{m},\varepsilon_\mathrm{p},N) = \lim_{N \to \infty} g(\varepsilon_\mathrm{m},\varepsilon_\mathrm{p},N)$ and $g(\varepsilon_\mathrm{m},\varepsilon_\mathrm{p},N) $ is strictly decreasing as $N$ increases.
\end{coro}

The above result shows that for the RHC controller with zero terminal function, using a large prediction horizon, ideally the nominal LQR controller, is beneficial. This is because the other choice $N=1$ leads to $K_{\mathrm{RHC}}=0$, which will never stabilize the system if the system is open-loop unstable.

\section{Extensions to stabilizable systems} \label{sec:sta}
We generalize the previous results for controllable systems to stabilizable systems.
Without controllability to establish the fast decay rates $ 1 - \beta_\star  $ in \eqref{eq:KnownModel}  and $\gamma_2$ in \eqref{eq:PhihatIneq} for $i \geqslant n_{\mathrm{cr}}$, we only establish a single decay rate for all $i \geqslant 1$. These results for stabilizable systems are formalized in Appendix~\ref{appdix:sta}. The single decay rate greatly simplifies the analysis, which leads to a variant of Theorem~\ref{thm:RiccatiDiff} for stabilizable systems:
\begin{thm} \label{thm:stabi}
If Assumptions~\ref{ass:LQR}, \ref{ass:LQRgeneral2}, and \ref{ass:TechContPertur}\ref{ass4:i} hold, then for $i \in \mathbb{Z}^+$,  $ \|\mathcal{R}^{(i)}_{\hat{A},\hat{B}}(P) - P_\star\| $
\begin{equation} \label{eq:PerturbRiccatiDiffstab} 
\hspace{-0.22 cm} \leqslant  \widehat{E}_{\mathrm{sta}}(\varepsilon_\mathrm{m}, \varepsilon_\mathrm{p},i) \triangleq  \overline{\zeta} \Big[  
 \gamma_1^i  \overline{\gamma}_2^i \varepsilon_\mathrm{p}  +   \Tilde{\psi}(\varepsilon_\mathrm{m}) \Big(    \sum_{j=0}^{i-1} \overline{\gamma}_2^j \gamma_1^j  \Big) \Big],
 \end{equation}
where $\gamma_1$ is defined in \eqref{eq:Defgamma1Psi},  $\Tilde{\psi}$ is defined in \eqref{eq:TidlePsiIneq} and satisfies $\Tilde{\psi} = O(\varepsilon_{\mathrm{m}})$, and $\overline{\zeta}$, $\overline{\gamma}_2$  are defined in Proposition~\ref{prop:PhiBarAB}.
\end{thm}

Similarly to \eqref{eq:MPCperforUnknown}, \eqref{eq:PerturbRiccatiDiffstab} also leads to an error bound 
\begin{equation} \label{eq:ErrStabi}
  g_{\mathrm{sta}}(\varepsilon_\mathrm{m}, \varepsilon_\mathrm{p},N ) \triangleq c_\star [\varepsilon_{\mathrm{m}}+ \widehat{E}_{\mathrm{sta}}(\varepsilon_\mathrm{m}, \varepsilon_\mathrm{p},N-1 )]^2.  
\end{equation}
Equation~\eqref{eq:PerturbRiccatiDiffstab} can be rewritten as \eqref{eq:Ereformu}. Therefore, the observation from Theorem~\ref{coro:ChangeN} that the performance upper bound achieves its infimum at $N=1$ or $N\to \infty$ remains valid for stabilizble systems, now depending on the sign of \eqref{eq:CondiChangeN} only. 

By observing \eqref{eq:RiccatiDiffSameModel}, we note that for stabilizable systems, the faster rate $1 - \beta_\star$ in \eqref{eq:KnownModel} can also be established by letting $i$ be sufficiently large, e.g. larger than a constant $i_0$, such that the closed-loop matrices in $\Phi^{(0:i)}_{A,B}(P)$ become close to $L_\star$ in \eqref{eq:Lstar}. However, $i_0$ can be large and also does not have a clear interpretation, unlike the controllability index $n_{\mathrm{cr}}$ in \eqref{eq:KnownModel}. Therefore, this technical extension is not pursued here.

\section{Applications in learning-based control} \label{sec:Application}
Besides the obtained theoretical insights, our suboptimality analysis can also be used to analyze the performance of learning-based controllers. Results in this section generalize the results in \cite{mania2019certainty, dean2020sample,simchowitz2020naive} for learning-based LQR controllers with an infinite horizon to RHC controllers with an arbitrary prediction horizon. We will compare our performance upper bounds analytically with the above existing results, as is done in similar studies on regret analysis \cite{mania2019certainty,simchowitz2020naive}. 

\subsection{Offline identification and control}
In this subsection, we first obtain an estimate $(\hat{A},\hat{B})$ offline from measured data of the unknown real system \eqref{eq:TrueModel}, and then synthesize a controller \eqref{eq:MPCproblem} with zero terminal matrix $P=0$. This is the classical receding-horizon LQ controller \cite{bitmead1991riccati}. 

There are many recent studies on linear system identification and its finite-sample error bounds \cite{simchowitz2018learning,sarkar2019near,dean2020sample}. 
For interpretability, we consider a relatively simple estimator from \cite{dean2020sample}. Assume that the white noise $w_t$ is Gaussian, and we conduct $T$ independent experiments on the unknown real system \eqref{eq:TrueModel} by injecting independent Gaussian noise, i.e., $u_t \sim \mathcal{N}(0,\sigma_u^2 I)$ with $\sigma_u>0$, where each experiment starts from $x_0=0$ and lasts for $t_h$ time steps. This leads to the measured data $\{ (x_t^{(l)},u_t^{(l)})   \}_{t=0}^{t_h}$, $l = 1,\dots, T$, which is independent over the experiment index $l$. To avoid the dependence among the data for establishing the estimation error, we use one data sample from each independent experiment, and then an estimate $(\hat{A},\hat{B})$ is obtained from the least-squares (LS) estimator \cite{dean2020sample}:
\begin{equation} \label{eq:LSesti}
\begin{bmatrix}
\hat{A} & \hat{B}
\end{bmatrix} = \arg\min_{A,B} \sum_{l=1}^{T} \Big\|x_{t_h}^{(l)} - \begin{bmatrix}
A & B
\end{bmatrix} \begin{bmatrix}
x_{t_h -1}^{(l)} \\
u_{t_h -1}^{(l)}
\end{bmatrix} \Big\|^2.
\end{equation}
A bound $\varepsilon_\mathrm{m}$ can be obtained \cite[Prop.~1]{dean2020sample} so that with high probability and for some constant $c_{\mathrm{ls}}>0$,
\begin{equation} \label{eq:BoundLS}
\max\{\|\hat{A}-A_\star\|,\| \hat{B}-B_\star\| \}  \leqslant \varepsilon_\mathrm{m} =c_{\mathrm{ls}} / \sqrt{T}.\end{equation}

Consider the nominal controller \eqref{eq:MPCproblem} with $P=0$ and $(\hat{A},\hat{B})$ from \eqref{eq:LSesti}. We can provide an end-to-end control performance guarantee by combining the LS estimation error bound \eqref{eq:BoundLS} and our performance upper bound \eqref{eq:ErrStabi}. We consider \eqref{eq:ErrStabi} here for generality as it needs more relaxed assumptions than \eqref{eq:MPCperforUnknown}. To this end, we first simplify \eqref{eq:ErrStabi} for interpretability.
\begin{prop} \label{lem:SimplePerfor}
Given the controller \eqref{eq:MPCproblem} with $P=0$, if Assumptions~\ref{ass:LQR}, \ref{ass:LQRgeneral2}, and \ref{ass:TechContPertur}\ref{ass4:i} hold, and if $ \varepsilon_\mathrm{m}+ \widehat{E}_{\mathrm{sta}}( \varepsilon_\mathrm{m}, \|P_\star\|,N-1) \leqslant 1/(40 \Upsilon_\star^4 \|P_\star\|^2 )$, then $A_\star - B_\star K_\mathrm{RHC}$ is Schur stable with 
\begin{equation*}
J_{K_\mathrm{RHC}}-J_{K_\star} \leqslant C \left(\mu_\star^{N-1} +\varepsilon_\mathrm{m}   \right)^2,
\end{equation*}
for some $C >0$ that is independent of $N$ and $\varepsilon_\mathrm{m}$, where $\mu_\star \triangleq \sqrt{1-  \beta_\star /  (2)^{3/2}} \in (0,1)$.
\end{prop}

Combining Proposition~\ref{lem:SimplePerfor} with the estimation error bound \eqref{eq:BoundLS} from \cite{dean2020sample} directly leads to the following end-to-end performance guarantee:
\begin{coro} \label{coro:OffSI}
Consider the estimate $(\hat{A},\hat{B})$ from the estimator \eqref{eq:LSesti} and the controller \eqref{eq:MPCproblem} with $P=0$. If Assumptions~\ref{ass:ControlIndex}, \ref{ass:LQR} hold, and if the noise $w_t$ is Gaussian and $c_{\mathrm{ls}} / \sqrt{T}+ \widehat{E}_{\mathrm{sta}}(c_{\mathrm{ls}} / \sqrt{T}, \|P_\star\|,N-1) \leqslant 1/(40 \Upsilon_\star^4 \|P_\star\|^2 )$, then for any $\delta\in (0,1)$, there exists a sufficiently large $T$ such that with probability at least $1-\delta$, $A_\star - B_\star K_\mathrm{RHC}$ is Schur stable with 
\begin{equation*}
J_{K_\mathrm{RHC}}-J_{K_\star} \leqslant  C \left( \mu_\star^{N-1} + \sqrt{ \log(1/\delta)/ T }   \right)^2,\end{equation*}
for some $C >0$ that is independent of $N$ and $T$.
\end{coro}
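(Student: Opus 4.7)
The plan is to chain together two ingredients: the finite-sample identification guarantee for the least-squares estimator \eqref{eq:LSesti} from \cite{dean2020sample} and the deterministic suboptimality bound of Proposition~\ref{lem:SimplePerfor}, which is already expressed in the shape $(\bar{\gamma}^{N-1}+\varepsilon_{\mathrm{m}})^2$ that the corollary advertises. The high-probability bound on $\varepsilon_{\mathrm{m}}$ is therefore transported directly into the performance bound, leaving only the verification that the hypothesis of Proposition~\ref{lem:SimplePerfor} is met on the favorable identification event.

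First I would invoke \cite[Proposition~1]{dean2020sample} for the data generation scheme described around \eqref{eq:LSesti}. Under the Gaussian noise assumption and the Gaussian excitation $u_t\sim\mathcal{N}(0,\sigma_u^2 I)$, this proposition sharpens \eqref{eq:BoundLS} by making the confidence level explicit: for any $\delta\in(0,1)$ there is an explicit deterministic threshold $T_1(\delta)$ (depending on $n,m,\delta$ and fixed system parameters) such that, whenever $T\geqslant T_1(\delta)$, with probability at least $1-\delta$ one has
$$\max\{\|\widehat{A}-A_\star\|,\|\widehat{B}-B_\star\|\}\leqslant \varepsilon_{\mathrm{m}}(T,\delta)=\mathcal{O}\Big(\sqrt{\log(1/\delta)/T}\Big).$$
Second, I would verify that on this event the precondition of Proposition~\ref{lem:SimplePerfor}, namely $\varepsilon_{\mathrm{m}}+\widehat{E}(\varepsilon_{\mathrm{m}},\|P_\star\|,N-1)\leqslant 1/(40\Upsilon_\star^4\|P_\star\|^2)$, is satisfied. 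The function $\widehat{E}(\cdot,\|P_\star\|,N-1)$ defined in \eqref{eq:PerturbRiccatiDiff} is continuous in its first argument and reduces at $\varepsilon_{\mathrm{m}}=0$ to $\beta_\star(1-\beta_\star^{-1})^{N-1}\|P_\star\|$ (since $\psi(0,\cdot)=0$, $\tilde{\psi}(0)=0$ and $\alpha_0=1$), which is assumed to sit strictly below the threshold for the $N$ under consideration. Because $\varepsilon_{\mathrm{m}}(T,\delta)\to 0$ as $T\to\infty$ with $\delta$ fixed, a second threshold $T_2(N,\delta)$ exists beyond which the precondition holds on the identification event. The ``sufficiently large $T$'' of the statement is then $\max\{T_1(\delta),T_2(N,\delta)\}$.

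Third, applying Proposition~\ref{lem:SimplePerfor} on this event delivers, simultaneously, Schur stability of $A_\star-B_\star K_{\mathrm{mpc}}$ and the bound $J_{K_{\mathrm{mpc}}}-J_{K_\star}\leqslant C(\bar{\gamma}^{N-1}+\varepsilon_{\mathrm{m}})^2$ with $C$ independent of $N$ and $\varepsilon_{\mathrm{m}}$. Substituting the estimate $\varepsilon_{\mathrm{m}}=\mathcal{O}(\sqrt{\log(1/\delta)/T})$ from Step~1 and absorbing universal constants into $C$ yields the end-to-end inequality exactly as stated. The only real obstacle is bookkeeping: one must track the two one-sided thresholds $T_1(\delta)$ and $T_2(N,\delta)$ and confirm that they can be merged consistently, but the analytical content has already been carried out in Proposition~\ref{lem:SimplePerfor} and in \cite{dean2020sample}, so the corollary is essentially a direct composition of these two results.
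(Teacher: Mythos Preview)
Your proposal is correct and follows essentially the same route as the paper: the paper simply states that the corollary follows by combining Proposition~\ref{lem:SimplePerfor} with the estimation error bound \eqref{eq:BoundLS} from \cite[Proposition~1]{dean2020sample}, and remarks that because $\varepsilon_{\mathrm{m}}\to 0$ as $T\to\infty$, a sufficiently large $T$ forces the precondition of Proposition~\ref{lem:SimplePerfor} to hold with high probability. Your treatment is in fact a bit more careful than the paper's, since you explicitly separate the identification threshold $T_1(\delta)$ from the threshold $T_2(N,\delta)$ needed for the precondition and note that the latter requires $\widehat{E}(0,\|P_\star\|,N-1)$ to lie strictly below the bound, a point the paper leaves implicit.
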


Corollary~\ref{coro:OffSI} reveals the effect of the prediction horizon and the number of data samples on the control performance. The performance gap is $O(1/T)$ if $N \to \infty$, i.e., when the nominal LQR controller is considered. This growth rate agrees with the recent study \cite{mania2019certainty} of the LQR controller. When $N$ is finite, an additional error $\mu_\star^{N-1}$ exists and decreases as $N$ increases. When $T\to \infty$, the estimated model converges to the real system with high probability, leading to performance gap $O(\mu_\star^{N-1})$, i.e., the performance of the controller converges exponentially to the optimal one as $N$ increases, which matches our observation in \eqref{eq:JgapKnown}. Note that controllability in Assumption~\ref{ass:ControlIndex} is introduced in Corollary~\ref{coro:OffSI}, required by the estimation error bound \cite{dean2020sample}. Moreover, Assumption~\ref{ass:TechContPertur}\ref{ass4:i} is absent, as it is satisfied by a sufficiently large $T$. 
\begin{rem}
We have used data from multiple independent experiments for estimating the model; however, in practice, maybe only one experiment can be conducted. In this case, an LS estimator with the data from a single experiment can be used, with its error bound studied in \cite{simchowitz2018learning,sarkar2019near}.
\end{rem}

\subsection{Online learning control with regret bound} \label{sec:OnlineRegret}
In this subsection, we consider the adaptive LQR controller algorithm in \cite{simchowitz2020naive}, which has a state-of-the-art regret guarantee with greedy exploration. While there are other exploration strategies \cite{tsiamis2023statistical}, the more fundamental greedy exploration is chosen to demonstrate the application of our analysis.

The nominal LQR controller within the adaptive control scheme in \cite{simchowitz2020naive} is replaced by the receding-horizon LQ controller \eqref{eq:MPCproblem} with $P=0$. Other cost matrices $Q$ and $R$ are fixed throughout the close-loop operation. To match the setting in \cite{simchowitz2020naive}, in this subsection, we assume that $w_t$ is Gaussian with $\sigma_w=1$ and the initial condition is $x_0=0$. Following \cite{simchowitz2020naive}, we assume that a stabilizing but possibly suboptimal controller gain $K_0$ for the unknown true system is given a priori.

We briefly introduce the main idea of \cite[Algorithm 1]{simchowitz2020naive}, and the details can be found in \cite{simchowitz2020naive}. Starting from $x_0=0$ and a random input $u_0 \sim \mathcal{N}(0,I)$, the algorithm utilizes a fixed controller gain $K_k$ within each time step period $ [t_k,t_{k+1})$, where $t_k = 2^{k-1}$ and $k \in \{1,2,\dots\}$ is the period index. In the first few periods, the gain $K_0$ with a Gaussian perturbation is used, i.e., $u_t = -K_k x_t +g_t$ with $K_k = K_0 $ and $g_t \sim \mathcal{N}(0,I)$, to stabilize the unknown system. The perturbation $g_t$ ensures the informativity of the data for estimating the system later.

After collecting sufficient data at time step $t_{k_0}$ for some period $k_0$, the controller conducts the following steps for every period $k \geqslant k_0$. A new model $(\hat{A}_k,\hat{B}_k)$ is re-estimated at the initial time step $t_k$ of period $k$. It is obtained from the LS estimator using the data $\{z_t\}_{t=t_{k-1}}^{t_k-1}$ from the last period $k-1$, where $z_t \triangleq \begin{bmatrix}
    x_t^\top & u_t^\top
\end{bmatrix}^\top$, as 
\begin{equation} \label{eq:LSAdaptive}
\begin{bmatrix}
\hat{A}_k & \hat{B}_k
\end{bmatrix} = \arg\min_{A,B} \sum_{t=t_{k-1}}^{t_k-1} \Big\|x_{t+1} - \begin{bmatrix}
A & B
\end{bmatrix} z_t\Big\|^2.
\end{equation}
Then, for $t \in [t_k,t_{k+1})$, the algorithm employs a new RHC controller gain $K_{\mathrm{RHC},k}$, formulated on\footnote{In \cite{simchowitz2020naive}, the estimated model, if not accurate, is modified via projection. These details are presented in \cite{simchowitz2020naive}. This projection step is redundant if the estimated model is sufficiently accurate. Moreover, it relies on a conservative error bound, leading to practical issues as discussed in Section~\ref{sec:SimuAdaptiveRHC}. Therefore, the projection step is ignored later in the simulations of Section~\ref{sec:SimuAdaptiveRHC}. } $(\hat{A}_k,\hat{B}_k)$ via either the implicit form \eqref{eq:MPCproblem} or the explicit form \eqref{eq:Kmpc}, with a random perturbation, i.e., $u_t = -K_{\mathrm{RHC},k} x_t + \sigma_k g_t$, where $\sigma_k>0$ determines the perturbation size and decays as $k$ increases. 

In this case, let $T$ denote the total number of time steps that have passed, and let $k_{T}$ denote the final period index. Let $\mathcal{C}: \mathbb{R}^n \to \mathbb{R}^m$ denote the above adaptive controller, and its performance is typically characterized by \textit{regret} \cite{simchowitz2018learning,lale2022reinforcement}:
$$
\mathrm{Regret}(T) \triangleq \sum_{t=0}^T \left[ x_t^\top Q x_t +  \mathcal{C}^\top(x_t) R \mathcal{C}(x_t) - J_{K_\star} \right],
$$
which measures the accumulative error of a particular realization under the controller. It is ideal to have a sublinear regret such that as $T \to \infty$, $\mathrm{Regret}(T)/T$ converges to zero with high probability, i.e., the average performance of the adaptive controller is optimal.

As shown in \cite[Sect. 5.1]{simchowitz2020naive} and \cite[Appendix G.2]{simchowitz2020naive}, the following holds with high probability for the controller:
$$
\mathrm{Regret}(T) = \Tilde{O} \left(   \sum_{k=k_0}^{k_{T}} t_k ( J_{K_{\mathrm{RHC},k}} -J_{K_\star} ) + \sqrt{T}   \right).
$$
Combining the above equation with Proposition~\ref{lem:SimplePerfor} leads to
\begin{equation} \label{eq:RegretBound}
\kern-0.3em \mathrm{Regret}(T) = \Tilde{O}\left(  \sum_{k=k_0}^{k_{T}} t_k \big[ \mu_\star^{N-1} + 1/(t_k ^{1/4})   \big]^2+ \sqrt{T}   \right), 
\end{equation}
where we use the fact that the estimated model at time step $t_k$ has error $\varepsilon_\mathrm{m} = O( 1/t_k ^{1/4})  $ with high probability \cite[Lem. 5.4]{simchowitz2020naive}.

The regret upper bound \eqref{eq:RegretBound} provides some interesting information. Firstly, note that $\sum_{k=k_0}^{k_{T}} \sqrt{t_k} \leqslant \bar{C}\sqrt{T}$ for some constant $\bar{C}>0$. Therefore, if we have an infinite prediction horizon $N \to \infty$, i.e., when the nominal LQR is considered, then $\mathrm{Regret}(T) = \Tilde{O}(\sqrt{T})$, which matches the rate of the adaptive LQR controllers in \cite{mania2019certainty,simchowitz2020naive}. 

If we have a finite $N$, \eqref{eq:RegretBound} shows with high probability, \begin{equation} \label{eq:FinalRegret}
    \mathrm{Regret}(T) =\Tilde{O}\left( T \mu_\star^{N}+\sqrt{T} \right), \end{equation}
where the regret is linear in $T$. This observation matches the result in \cite{yu2020power}, where the regret of a linear unconstrained RHC controller, with a fixed prediction horizon and an exact system model, is linear in $T$. This linear regret is caused by the fact that even if the model is perfectly identified, the RHC controller still deviates from the optimal LQR controller due to its finite prediction horizon.

To achieve a sublinear regret, \eqref{eq:FinalRegret} suggests that an adaptive prediction horizon is preferred. As also suggested in \cite{yu2020power} but for the case of a known system, we can update $N$ via $$N = O(\log(t_k)),$$ e.g., updating $N= \lfloor -\log(t_k)/(4 \log(\mu_\star) ) \rfloor$, when the model is re-estimated, which again leads to a sublinear regret $\Tilde{O}(\sqrt{T}) $ according to \eqref{eq:RegretBound}. Note that this is the optimal rate for the regret \cite{simchowitz2020naive}. The intuition of this choice is that, with a more refined model due to re-estimation, the prediction horizon can be increased adaptively to improve the control performance.
\subsection{Simulation of adaptive RHC} \label{sec:SimuAdaptiveRHC}

We use simulations to demonstrate the main theoretical insights from Section~\ref{sec:OnlineRegret}: For the adaptive RHC algorithm, while a fixed prediction horizon leads to a linear regret growth rate $\Tilde{O}\left( T\right)$, an adaptive horizon, being a logarithmic function of time, leads to a more desired sublinear rate $\Tilde{O}(\sqrt{T})$.

While the algorithm based on \cite{simchowitz2020naive} in Section~\ref{sec:OnlineRegret} has a regret guarantee, a conservative modeling error bound $\varepsilon_{\mathrm{m}}$, analogous to \eqref{eq:BoundLS}, is utilized in \cite{simchowitz2020naive}. This conservatism causes practical issues, e.g., despite the actual modeling error being small, the if condition $\varepsilon_{\mathrm{m}}< \bar{\delta}(\hat{A},\hat{B})$ for some $\bar{\delta}$ in the algorithm is hardly met as $\varepsilon_{\mathrm{m}}$ is too large. Therefore, by introducing tuning parameters, we slightly modify the algorithm in Section~\ref{sec:OnlineRegret} to avoid analytical error bounds. The resulting algorithm is more practical and suffices to demonstrate the theoretical results and the corresponding insights. We summarize the modifications here:
\begin{enumerate}[label=(\roman*)]
    \item The key period index $k_0$ in Section~\ref{sec:OnlineRegret} is chosen to be the first period $k$ such that $\sum_{t=0}^{t_k-1} z_tz_t^\top \succeq \delta_1 I$ holds, where $\delta_1 \in [1, \infty)$ is a tuning parameter.

    \item The LS estimator \eqref{eq:LSAdaptive} uses data $\{z_t\}_{t=0}^{t_k-1}$ from all the previous periods, instead of only the last period.

    \item We let $\sigma_k = \min\{1, \big(\delta_2 / \sqrt{t_k}\big)^{1/2}\}$, where $\delta_2 \in (0,\infty)$ is a tuning parameter instead of being computed via an analytical formula as in \cite{simchowitz2020naive}.
\end{enumerate}
In the above, (i) ensures that the collected data is informative to obtain an accurate LS estimate of the model. Point (ii) ensures the updated estimate is not worse than the previous one, as more data is used for estimation. With point (iii), the exploration effort decays as the period index increases.

We use the modified algorithm to control the unknown system with the following true system matrices:
$$
A_\star = \begin{bmatrix}
 0.5 & -2 & 0.9\\
 0.6 & 0.1 & 1.8\\
 1.3& -0.3 & 1.6
\end{bmatrix}, \text{ }B_\star = \begin{bmatrix}
 0.3  & 1\\
 0.9 & 1 \\
 0& 0.9 
\end{bmatrix}.
$$
Starting from $x_0=0$, we compare the regret growth of the adaptive RHC controller with a fixed prediction horizon $N=3$ and the controller with an adaptive prediction horizon $N = \max\{3, \lfloor \log(t_k)\rfloor\}$, where $N$ is updated when the controller is updated.  Other parameters are $\sigma_w=1$, $Q=I$, $R=I$, $P=0$, $\delta_1 = 4$, $\delta_2 = 0.5$, and $K_0$ is the LQR controller gain for stabilization. Note that although the true system and its LQR controller are unknown, we choose the LQR controller to be the initial controller for simplicity. The choice of $K_0$, if stabilizing, does not affect our illustration of the regret growth.

As the regret is a random variable due to disturbance $w_t$, we conduct $100$ simulations for each controller. The regret growth over time, normalized by $\sqrt{t}$ or $t$, is shown in Fig.~\ref{fig:regret}. The regret grows linearly in $T$ under the fixed $N$ but in the order of $\sqrt{T}$ under the adaptive $N$. This illustrates the advantage of the adaptive $N$ for regret minimization. When the nominal LQR is used within the adaptive algorithm, the regret remains sublinear with a larger standard deviation in the initial period.

While the modified algorithm does not preserve the theoretical regret guarantee in Section~\ref{sec:OnlineRegret}, it remains useful to demonstrate the impact of the adaptive prediction horizon on regret growth. Moreover, this algorithm has a structure similar to that of existing algorithms with regret guarantees, e.g., the adaptive LQ Gaussian (LQG) controller in \cite{athrey2024regret}. Our future work will investigate the possibility of extending existing analyses to the modified algorithm. Note that the computational increase with a larger $N$ is minor in the current setting and case study, as no online optimization is needed. A larger $N$ only requires more Riccati iterations for computing the controller. For example, in Matlab $2024$a with an Intel Core i9-13950HX CPU, the Riccati iteration takes on average around $3 \cdot 10^{-6}$ seconds for $N=3$, $1.8 \cdot 10^{-5}$ seconds for $N=15$, and solving the Riccati equation for LQR takes $2.4 \cdot 10^{-4}$ seconds. Therefore, the computational cost is a more important problem for more general settings with constraints or nonlinear systems, requiring online optimization. 

Changing the horizon adaptively is also investigated in \cite{krener2018adaptive,bohn2023optimization} and the references therein, which are either heuristic or not analyzed in the context of regret. This idea also has implications for, e.g., robot control in a partially unknown environment \cite{chen2022real}. There, as more data is collected, the RHC/MPC controller of the robot can increase its prediction horizon to improve performance. The potential challenges are the increase in computation time and handling data that has poor quality. In the latter case, the model may not improve over time, and thus it is important to decide when to increase the prediction horizon. These challenges will be investigated in future work. 

 \begin{figure}[t]
 \hspace{-0.3cm}
 \begin{minipage}{0.24\textwidth}
 \centering
 \includegraphics[width=1\textwidth]{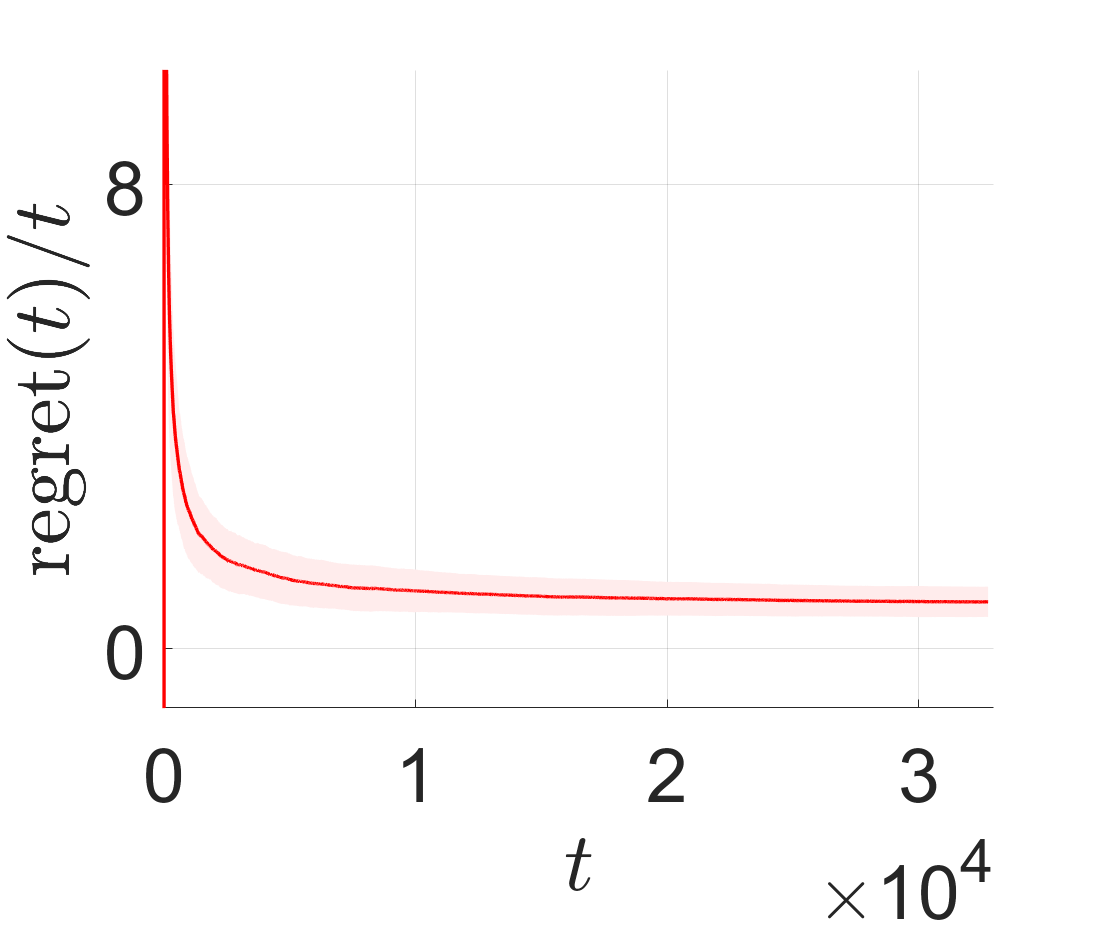}
 \end{minipage}
\begin{minipage}{0.24\textwidth}
 \centering
 \includegraphics[width=1.15 \textwidth]{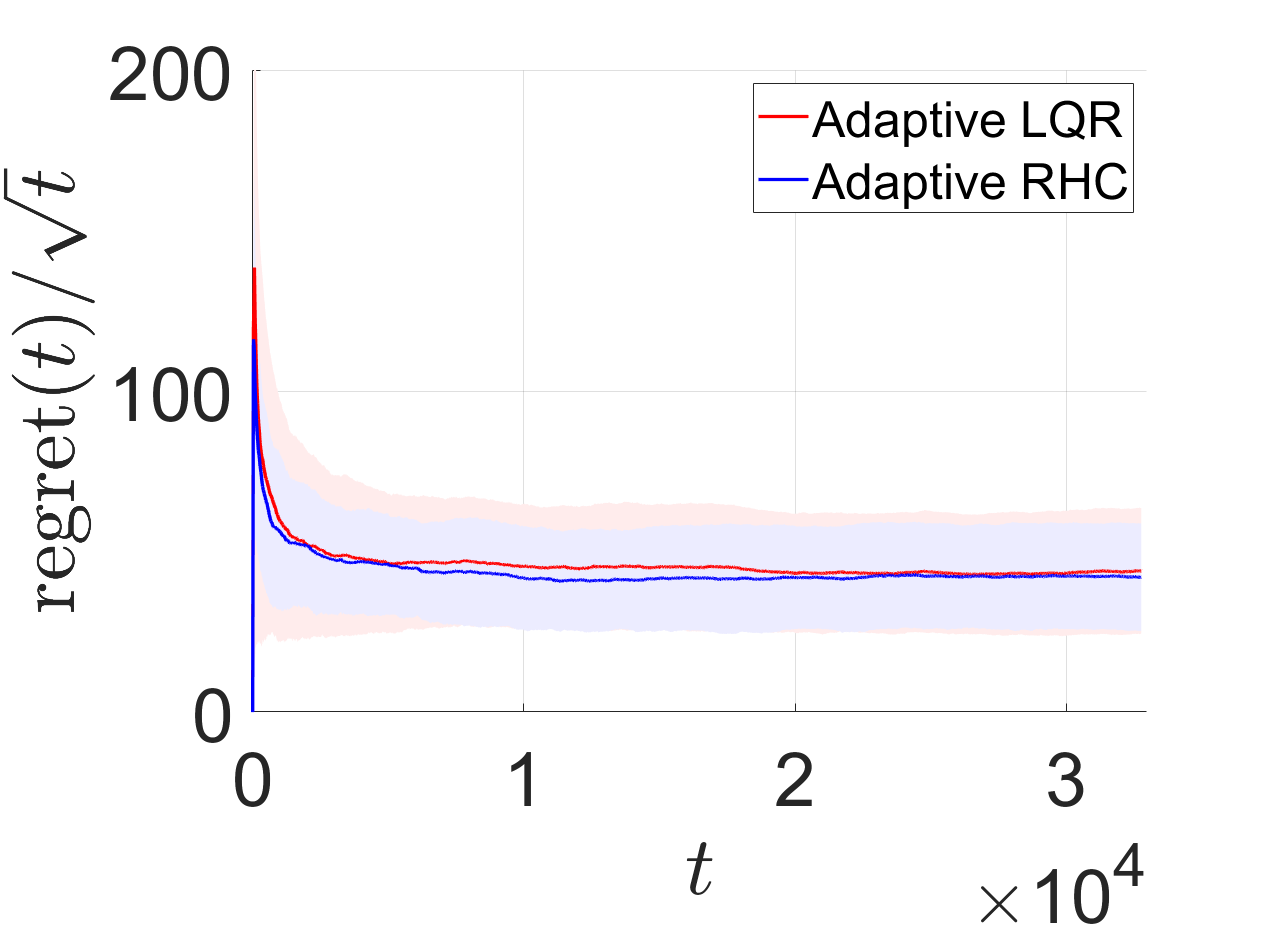}
 \end{minipage}
 \caption{For the RHC with a fixed $N$ (left) and the one with an adaptive $N$ (right), the mean and standard deviation (shaded region) of the normalized regrets, $\mathrm{regret}(t)/t$ for the fixed $N$ and $\mathrm{regret}(t)/\sqrt{t}$ for the adaptive $N$, over $100$ simulations are shown. After initial transient behavior, the normalized regrets converge to positive constants. The normalized regret of the adaptive LQR (right) is also shown.}
 \label{fig:regret}
 \end{figure}

\section{Conclusions}
This work analyzes the suboptimality of RHC under the joint effect of the modeling error, the terminal value function error, and the prediction horizon in the LQ setting.
 By deriving a novel perturbation analysis of the Riccati difference equation, we have obtained a novel performance upper bound of the controller. The bound suggests that letting the prediction horizon be $1$ or $+\infty$ can potentially be beneficial, depending on the relative difference between the modeling error and the terminal matrix error. Moreover, when an infinite horizon is desired, a prediction horizon larger than the controllability index can be sufficient for achieving a near-optimal performance. Besides the above insight, this obtained performance upper bound has also been shown to be useful for analyzing the performance of learning-based receding-horizon LQ controllers. 

Extending the results to more general settings with constraint and nonlinear systems is an important direction. The overall steps of our analysis can be applied to nonlinear systems; however, the technical derivations differ significantly. For example, instead of the Riccati iteration, the value iteration needs to be considered. This generalization can be found in the follow-up work \cite{liu2024certainty}. Future work also includes the derivation of tighter performance upper bounds to capture the non-trivial optimal horizon, which is finite but larger than $1$, and enhancing the adaptive RHC algorithm to achieve both practicality and theoretical guarantees. It is also important to benchmark the algorithm against other state-of-the-art methods.

\appendices
\section{Technical tools}
Technical tools from the literature are collected here.
\begin{lem} \label{lem:inequalProof}
 Given $0< a\leqslant b<1$ and any positive integer $i$, we have $(1-b^{i+1})/(1-a^{i+1}) - (1-b^{i})/(1-a^{i}) \geqslant 0.$
\end{lem}
\begin{proof}
This result is obtained by applying formula $a^i - b^i = (a-b)(a^{i-1}+a^{i-2}b+\dots +ab^{i-2}+b^{i-1})$.
\end{proof}

\begin{lem}{(\cite[Lem. 7]{mania2019certainty})} \label{lem:tool1Mania}
Given $M \in \mathbb{S}^n_+$ and $N \in \mathbb{S}^n_+$, it holds that $\|N(I+MN)^{-1} \| \leqslant \|N\|$. 
\end{lem}

\begin{lem}{(\cite[Lem. 5]{mania2019certainty})} \label{lem:PerturbRate}
Consider $M \in \mathbb{R}^{n \times n}$, $c \geqslant 1$, $\rho \geqslant \rho(M)$ such that $\| M^i\|\leqslant c \rho^i$ for any $i \in \mathbb{Z}^+$. For any $i \in \mathbb{Z}^+$ and $\Delta \in \mathbb{R}^{n \times n}$,
$
\|(M+\Delta)^i\| \leqslant c (c\|\Delta\|+\rho)^i$ holds.
\end{lem}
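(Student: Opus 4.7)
The plan is to expand $(M+\Delta)^i$ by distributivity into a sum of $2^i$ monomials, each of the form $X_1 X_2 \cdots X_i$ with $X_j \in \{M,\Delta\}$, and then group these monomials according to the number $k$ of $\Delta$ factors they contain, for $k = 0, 1, \dots, i$. After collecting consecutive runs of $M$s, each such monomial can be rewritten as $M^{a_0}\Delta M^{a_1}\Delta \cdots \Delta M^{a_k}$ for non-negative integers $a_0,\dots,a_k$ summing to $i-k$, and the total number of such monomials at level $k$ equals the number of ways to place $k$ copies of $\Delta$ among $i$ ordered positions, namely $\binom{i}{k}$.

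Next I would bound each monomial using submultiplicativity of the spectral norm together with the hypothesis $\|M^a\| \leq c\rho^a$, noting that this bound remains valid in the degenerate case $a=0$ (where $M^0 = I$) because $c \geq 1$. Each of the $k+1$ $M$-power blocks contributes a factor $c\rho^{a_j}$, and each of the $k$ occurrences of $\Delta$ contributes $\|\Delta\|$, so every monomial at level $k$ is bounded in norm by $c^{\,k+1}\rho^{i-k}\|\Delta\|^k$. Summing over the $\binom{i}{k}$ monomials at each level and then over $k$ yields
\[
\|(M+\Delta)^i\| \;\leq\; \sum_{k=0}^{i} \binom{i}{k} c^{\,k+1} \|\Delta\|^k \rho^{i-k} \;=\; c \sum_{k=0}^{i} \binom{i}{k} (c\|\Delta\|)^k \rho^{i-k} \;=\; c\bigl(c\|\Delta\|+\rho\bigr)^i,
\]
where the last equality is a single application of the binomial theorem.

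The argument is essentially pure bookkeeping, so the only point requiring care is tracking the accumulated constant $c^{\,k+1}$ coming from the $k+1$ maximal $M$-runs (some possibly empty) separated by the $k$ occurrences of $\Delta$; I do not anticipate any genuine obstacle. An equivalent alternative would be a short induction on $i$ using $(M+\Delta)^{i+1} = M(M+\Delta)^i + \Delta(M+\Delta)^i$ together with the inductive hypothesis, but the direct combinatorial expansion seems the most transparent route to the sharp constant.
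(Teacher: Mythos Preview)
Your proof is correct. The paper does not supply its own proof of this lemma; it is quoted verbatim as a helpful auxiliary result from \cite[Lemma~5]{mania2019certainty}, so there is no in-paper argument to compare against. That said, the combinatorial expansion you use---grouping the $2^i$ monomials by the number $k$ of $\Delta$ factors, bounding each by $c^{k+1}\rho^{i-k}\|\Delta\|^k$ via the $k+1$ maximal $M$-runs, and closing with the binomial theorem---is exactly the technique the paper itself employs in the proof of Lemma~\ref{lem:PhiBar} (see the expansion of $\bar{\Phi}^{(j:i)}(P)$ there, which cites the same style of argument from \cite[Lemma~8]{dean2020sample}). Your handling of the possibly empty $M$-runs via $c\geq 1$ is the right detail to flag, and the alternative induction you mention would work equally well.
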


Important identities from \cite{bitmead1991riccati}, \cite[eq. (2.3) and (4.2)]{del2021note} are collected in the following result:
\begin{lem} \label{lem:RicaDiffEqu}
If $R \succ 0$, for any $P$, $P_1$, $P_2 \in \mathbb{S}_+^n$ and $i \in \mathbb{Z}^+$, 
\begin{enumerate}[label=(\alph*)] 
    \item $\mathcal{R}_{A,B}(P_1) - \mathcal{R}_{A,B}(P_2)=[A-B \mathcal{K}_{A,B}(P_1) ]^\top (P_1-P_2) [A-B \mathcal{K}_{A,B}(P_2) ] ;
$ \label{eq:IdentiRicatiDiif} 
\item $\mathcal{R}_{A,B}(P)=[A -B  \mathcal{K}_{A,B}(P) ]^\top P  [A -B \mathcal{K}_{A,B}(P)]+ \mathcal{K}^\top_{A,B}(P) R \mathcal{K}_{A,B}(P)+ Q$; \label{eq:IdentiRicatiLyapu}

\item  $\mathcal{F}(P) \triangleq   B ( R+ B^\top P  B)^{-1} B^\top = S_B (I+ P S_B)^{-1} $;

\item $\Phi^{(0:i)}_{A,B}(P_1) =\big[ I+ \mathcal{O}_{A ,B  }^{(i)}(P_1) (P_2 -P_1) \big] \Phi^{(0:i)}_{A,B}(P_2)$ with $ \mathcal{O}_{A ,B  }^{(i)}$ defined in \eqref{eq:DefOmatrix}. \label{eq:IdentiPhi}
\end{enumerate}\end{lem}

The following perturbation analysis of the Riccati equation from \cite[Prop.~6]{simchowitz2020naive} and \cite[Lem.~B.5, B.8]{simchowitz2020naive} is utilized.
\begin{lem}{(\cite{simchowitz2020naive})} \label{lem:PerturbationSimchow}
Given $Q$, $R$, and $(A_\star,B_\star)$, consider any alternative system $(\hat{A},\hat{B})$ satisfying $\max\{\|\hat{A} - A_\star\|, \| \hat{B} - B_\star\| \} \leqslant \varepsilon_\mathrm{m}$. If Assumptions~\ref{ass:LQR} and \ref{ass:LQRgeneral2} hold, then
\begin{enumerate}
    \item $P_\star \succeq I$, $\| P_\star\| \geqslant \| L_\star\|^2$, and $\| P_\star\| \geqslant \| K_\star\|^2$;
    
    \item if $\varepsilon_\mathrm{m} < 1/(8 \|P_\star\|^2 )$, then $(\hat{A},\hat{B})$ is stabilizable and 
    \begin{equation} \label{eq:PerturP}
\kern-0.3em   \| \hat{P} \| \leqslant \alpha_{\varepsilon_\mathrm{m}} \| P_\star\|, \quad  \|\hat{K} - K_\star\| \leqslant 7 \alpha_{\varepsilon_\mathrm{m}}^{7/2} \|P_\star\|^{7/2} \varepsilon_\mathrm{m}  \end{equation}
 where $ \alpha_{\varepsilon_\mathrm{m}} = (1- 8\|P_\star\|^2\varepsilon_\mathrm{m}  )^{-1/2}$, $\hat{P}$ is the fixed point of the Riccati equation and $\hat{K}$ is the LQR control gain for $(\hat{A},\hat{B})$.
\end{enumerate}
\end{lem}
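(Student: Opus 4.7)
The plan is to prove the two parts separately, exploiting the closed-loop form of the Riccati equation and a certainty-equivalence argument.

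For Part~1, I would start from the closed-loop identity $P_\star = L_\star^\top P_\star L_\star + Q + K_\star^\top R K_\star$, which follows by substituting \eqref{eq:Kstar} into \eqref{eq:DARE} and collecting terms. Since $Q \succeq I$ by Assumption~\ref{ass:LQR} and both $L_\star^\top P_\star L_\star$ and $K_\star^\top R K_\star$ are positive semi-definite, this identity immediately gives $P_\star \succeq I$. For the second inequality, I would use $L_\star^\top L_\star \preceq L_\star^\top P_\star L_\star \preceq P_\star$ (the first containment from $P_\star \succeq I$, the second from rearranging the identity) together with the monotonicity of the spectral norm on the PSD cone to get $\|L_\star\|^2 \leq \|P_\star\|$; the same argument with $R \succeq I$ in place of $P_\star \succeq I$ yields $\|K_\star\|^2 \leq \|K_\star^\top R K_\star\| \leq \|P_\star\|$.

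For Part~2, the key idea is a certainty-equivalence bound: since $\widehat{P}$ is the optimal cost-to-go for the approximate system, we have $\widehat{P} \preceq P^{\widehat{\cdot}}_{K_\star}$, where $P^{\widehat{\cdot}}_{K_\star}$ solves the Lyapunov equation $P^{\widehat{\cdot}}_{K_\star} = \widehat{L}^\top P^{\widehat{\cdot}}_{K_\star} \widehat{L} + Q + K_\star^\top R K_\star$ with $\widehat{L} \triangleq \widehat{A} - \widehat{B} K_\star$, provided $K_\star$ stabilizes $(\widehat{A},\widehat{B})$. Stabilizability would be established first: writing $\widehat{L} = L_\star + \Delta$ with $\|\Delta\| \leq \varepsilon_\mathrm{m}(1+\|K_\star\|) \leq 2\sqrt{\|P_\star\|}\,\varepsilon_\mathrm{m}$ via Part~1, and using the closed-loop identity to deduce $L_\star^\top P_\star L_\star \preceq (1 - 1/\|P_\star\|) P_\star$, iteration in the $P_\star$-weighted norm gives $\|L_\star^i\| \leq \sqrt{\|P_\star\|}(1 - 1/\|P_\star\|)^{i/2}$. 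Applying Lemma~\ref{lem:PerturbRate} then shows $\widehat{L}$ is Schur stable under $\varepsilon_\mathrm{m} < 1/(8\|P_\star\|^2)$, and stabilizability of $(\widehat{A},\widehat{B})$ follows.

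The main obstacle is recovering the sharp constant $\alpha_{\varepsilon_\mathrm{m}} = (1 - 8\|P_\star\|^2 \varepsilon_\mathrm{m})^{-1/2}$ in the bound on $\|\widehat{P}\|$, because the naive route through $\widehat{P} \preceq \sum_{i \geq 0} (\widehat{L}^i)^\top (Q + K_\star^\top R K_\star) \widehat{L}^i$ together with the Mania-style bound on $\|\widehat{L}^i\|$ yields a looser expression. To get the claimed form, I would work directly in the $P_\star$-weighted norm: expanding
\[
\widehat{L}^\top P_\star \widehat{L} = L_\star^\top P_\star L_\star + L_\star^\top P_\star \Delta + \Delta^\top P_\star L_\star + \Delta^\top P_\star \Delta,
\]
applying Young's inequality for matrices together with the closed-loop contraction $L_\star^\top P_\star L_\star \preceq P_\star - I$, and tracking constants carefully, one obtains $\widehat{L}^\top P_\star \widehat{L} \preceq (1 - (1 - 8\|P_\star\|^2 \varepsilon_\mathrm{m})/\|P_\star\|) P_\star$. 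Iterating this weighted contraction, summing the resulting geometric series against $Q + K_\star^\top R K_\star \preceq P_\star$, and using $P_\star \succeq I$ to pass from weighted norms back to spectral norms then yields $\|\widehat{P}\| \leq \|P^{\widehat{\cdot}}_{K_\star}\| \leq \alpha_{\varepsilon_\mathrm{m}}\|P_\star\|$, completing the argument.
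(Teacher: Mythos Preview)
First, note that the paper does not prove this lemma at all: it is quoted verbatim from \cite{simchowitz2020naive} (specifically Proposition~6 and Lemmas~B.5, B.8 there), so there is no ``paper's proof'' to compare against. Your Part~1 argument is correct and standard.

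For Part~2, your stabilizability step and the contraction $\widehat{L}^\top P_\star \widehat{L} \preceq \big(1 - (1-8\|P_\star\|^2\varepsilon_\mathrm{m})/\|P_\star\|\big) P_\star$ can indeed be established along the lines you sketch (bound the cross terms by $4\|P_\star\|^2\varepsilon_\mathrm{m} I$ in operator norm, the quadratic term by $4\|P_\star\|^2\varepsilon_\mathrm{m}^2 I$, combine with $L_\star^\top P_\star L_\star \preceq P_\star - I$, and use $I \preceq P_\star$). The gap is in your last sentence. Setting $c = 1 - (1-8\|P_\star\|^2\varepsilon_\mathrm{m})/\|P_\star\|$, iteration gives $(\widehat{L}^i)^\top P_\star \widehat{L}^i \preceq c^i P_\star$, and since $Q + K_\star^\top R K_\star \preceq P_\star$ one obtains
\[
P^{\widehat{\cdot}}_{K_\star} \preceq \sum_{i\geqslant 0} c^i P_\star = \frac{P_\star}{1-c} = \frac{\|P_\star\|}{1-8\|P_\star\|^2\varepsilon_\mathrm{m}}\,P_\star,
\]
so the bound you actually get is $\|\widehat{P}\| \leqslant \|P_\star\|^2/(1-8\|P_\star\|^2\varepsilon_\mathrm{m})$, not $\|P_\star\|\,(1-8\|P_\star\|^2\varepsilon_\mathrm{m})^{-1/2}$. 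These differ by a factor of roughly $\|P_\star\|$, and the exponent on $(1-8\|P_\star\|^2\varepsilon_\mathrm{m})$ is $-1$ rather than $-1/2$. The $-1/2$ exponent in $\alpha_{\varepsilon_\mathrm{m}}$ is the fingerprint of a \emph{self-bounding} argument: one shows an inequality of the form $\|\widehat{P}\|^2 \leqslant \|P_\star\|^2 + 8\|P_\star\|^2\|\widehat{P}\|^2\varepsilon_\mathrm{m}$ (or an equivalent relation involving the infinite-horizon cost), which upon rearrangement yields $\|\widehat{P}\|^2(1-8\|P_\star\|^2\varepsilon_\mathrm{m}) \leqslant \|P_\star\|^2$ and hence the stated bound. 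Your Lyapunov-series route, which first loses a factor through $Q + K_\star^\top R K_\star \preceq P_\star$ and then sums, cannot recover this constant; to match the lemma as stated you would need to revisit the original argument in \cite{simchowitz2020naive}.
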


Another tool is the stability analysis of time-varying systems, directly implied by the proof of \cite[Thm. 23.3]{rugh1996linear}.
\begin{lem} \label{lem:Stability}
Consider a system $\bar{x}_{k+1} = A_k \bar{x}_k$ with $A_k \in \mathbb{R}^{n \times n}$ and $\phi^{(j:i)} \triangleq A_{i-1}\dots A_{j} $ for integers $i \geqslant j \geqslant 0$. If there exists a matrix sequence $W_k \in \mathbb{S}^n$ satisfying for $k\in [j,i]$
\begin{align*}
   b_{\mathrm{l}}I \preceq W_k \preceq b_{\mathrm{u}}I, \text{ and }
    A_k^\top W_{k+1} A_k -   W_{k}  \preceq  - aI, 
\end{align*}
for some positive constants $b_{\mathrm{l}}$, $b_{\mathrm{u}}$, and $a$, then $\| \phi^{(j:i)} \| \leqslant  \sqrt{b_{\mathrm{u}}/b_{\mathrm{l}}   } \big(\sqrt{1 -  a/b_{\mathrm{u}} }\big)^{i-j} $ with $a/b_{\mathrm{u}} \in (0,1]$.
\end{lem}

\section{}

\subsection{Proof of Lemma~\ref{lem:DecayL}}
Lemma~\ref{lem:RicaDiffEqu}\ref{eq:IdentiRicatiLyapu} shows $- \underline{\sigma}(Q)I \succeq  L_\star^\top P_\star L_\star - P_\star$, and in addition, we have $ \underline{\sigma}(Q)I\preceq \underline{\sigma}(P_\star) I \preceq P_\star \preceq \overline{\sigma}(P_{\star} )I$. Then the direct application of Lemma~\ref{lem:Stability} proves the result. Note that $\beta_\star<1$ holds; otherwise, $P_\star=Q$ and thus $A=0$ based on Lemma~\ref{lem:RicaDiffEqu}\ref{eq:IdentiRicatiLyapu}, which contradicts with Assumption~\ref{ass:LQR}.

\subsection{Proof of Lemma~\ref{lem:BasicRate}}
We first define the following Gramian matrix: for $i \geqslant 1$
\begin{equation}  \label{eq:DefOmatrix}
\mathcal{O}_{A,B  }^{(i)}(P)  \triangleq \sum_{k=0}^{i-1} \mathcal{L}^k_{A,B}( P^{(k)}   ) \mathcal{F}\big( P^{(k)}    \big)  [\mathcal{L}^k_{A,B}(P^{(k)}  ) ]^\top,
\end{equation}
where $\mathcal{F}(P) \triangleq B ( R+ B^\top P  B)^{-1} B^\top$, $\mathcal{O}_{A,B  }^{(0)}(P)\triangleq 0 $, and recall that $ P^{(i)}$ is a shorthand notation for $ \mathcal{R}^{(i)}_{A,B}(P)$.

The proof for the case $i \geqslant n_{\mathrm{cr}}$ is achieved based on Lemma~\ref{lem:BasicRate} and \cite[Lem. 4.1]{del2021note}. We highlight the key steps for completeness. If $i \geqslant n_{\mathrm{cr}}$, matrix $\mathcal{O}_{A_\star ,B_\star  }^{(i)}(P_\star)$ has full rank due to Assumption~\ref{ass:ControlIndex}. Lemma~\ref{lem:RicaDiffEqu}\ref{eq:IdentiPhi} shows $\big[ I+ \mathcal{O}_{A_\star ,B_\star  }^{(i)}(P_\star) (P- P_\star) \big] 
 \Phi^{(0:i)}_{A_\star,B_\star}(P)=  L_\star^i $ for $i \in \mathbb{Z}^+$. Then based on Assumption~\ref{ass:LQR} and \cite[Lem. 4.1]{del2021note}, $  I+ \mathcal{O}_{A_\star ,B_\star  }^{(i)}(P_\star) (P- P_\star) $ is non-singular for $i \geqslant n_{\mathrm{cr}}$, and moreover, its inverse is uniformly upper bounded for any $i \geqslant n_{\mathrm{cr}}$ and any $P \succeq 0$. Combining \eqref{eq:RiccatiDiffSameModel} and the above shows for $i \geqslant n_{\mathrm{cr}}$, $\mathcal{R}^{(i)}_{A_\star,B_\star}(P) - P_\star=
 (L_\star^i)^\top \big[ I+ \mathcal{O}_{A_\star ,B_\star  }^{(i)}(P_\star) (P- P_\star) \big]^{-\top} (P-P_\star) L_\star^i, $
 which together with \eqref{eq:RateLstar} concludes the proof for $i \geqslant n_{\mathrm{cr}}$, with 
\begin{equation} \label{eq:DefTau}
    \tau_\star   \triangleq \beta_\star^{-1} \sup_{P \in \mathbb{S}^n_+} \sup_{i \geqslant n_{\mathrm{cr}} } \big\| \big[ I+ \mathcal{O}_{A_\star ,B_\star  }^{(i)}(P_\star) (P- P_\star) \big]^{-1} \big\|  .
\end{equation}

When $ 1 \leqslant i < n_{\mathrm{cr}}$, the rank of the controllability matrix is less than $n$, and thus $\mathcal{O}_{A_\star ,B_\star  }^{(i)}(P_\star)$ is not guaranteed to be of full rank. Then directly exploiting \eqref{eq:RiccatiDiffSameModel} leads to
\begin{align}  \| \mathcal{R}^{(i)}_{A_\star,B_\star}(P) - P_\star \| \leqslant
\|\Phi^{(0:i)}_{A_\star,B_\star}(P)\| \|L_\star ^i\| \|P-P_\star\|. \label{eq:KnownModelLemma2}
\end{align}  
A naive approach is to upper bound $\|\Phi^{(0:i)}_{A_\star,B_\star}(P)\|$ by a constant. However, due to $Q \succeq I$, $\|\Phi^{(0:i)}_{A_\star,B_\star}(P)\|$ decays exponentially as $i$ increases. This result in \eqref{eq:TransitionBoundSta} is exploited here and proved in Appendix~\ref{appdix:sta}. Combining \eqref{eq:RateLstar}, \eqref{eq:KnownModelLemma2}, and \eqref{eq:TransitionBoundSta} proves this case, with
\begin{equation} \label{eq:DefTauBar}
 \overline{\tau}_\star \triangleq    \Upsilon_\star^3 (1+ \Upsilon_\star+\|P_\star\|)  \sqrt{ \beta_\star^{-1} (\|P_\star\|+ \Upsilon_\star \beta_\star^{-1} ) (1 - \beta_\star)^{-1}  } ,
\end{equation}
derived from \eqref{eq:TransitionBoundSta} using $\Upsilon_\star \geqslant \varepsilon_{\mathrm{p}}$. The case $i=0$ holds trivially due to $ \overline{\tau}_\star \geqslant 1$.

\subsection{Proof of Lemma~\ref{lem:metalemma}}
The result is based on the following lemma:
\begin{lem} \label{lem:Kgap}
For any $F \in \mathbb{S}^n_+$ with $\|F-P_\star\| \leqslant \varepsilon$, if Assumption~\ref{ass:LQR} and \ref{ass:LQRgeneral2} hold, and if $\Upsilon_\star \geqslant \varepsilon$, then $K = \mathcal{K}_{\hat{A},\hat{B}}(F) $ satisfies 
\begin{equation} \label{eq:Kdiff}
\| K - K_\star  \| \leqslant \frac{   \Upsilon_\star^2 (\sqrt{\|P_\star\|}+1) (3 \varepsilon_\mathrm{m} +4 \varepsilon) }{ \underline{\sigma}(R)}.
\end{equation}
\end{lem}
\begin{proof}
Standard algebraic reformulations show $ \|\hat{B}^\top F \hat{B} - B_\star^\top P_\star B_\star \| \leqslant \varepsilon_\mathrm{m}^2 \varepsilon + 2\varepsilon_\mathrm{m} \varepsilon \Upsilon_\star + \|P_\star \| \varepsilon_\mathrm{m}^2+ 2 \varepsilon_\mathrm{m} \Upsilon_\star \|P_\star\|  +  \varepsilon \Upsilon_\star^2  \leqslant 4 \Upsilon_\star^2 \varepsilon + 3\Upsilon_\star \|P_\star\| \varepsilon_\mathrm{m},$ where the last inequality uses $\Upsilon_\star \geqslant \max\{ \varepsilon_\mathrm{m},\varepsilon\}$. Then following \cite[Lem. 2]{mania2019certainty} analogously, we have for any $x$, $ \underline{\sigma}(R) \| (K-K_\star)x\|  \leqslant (4 \Upsilon_\star^2 \varepsilon + 3\Upsilon_\star \|P_\star\| \varepsilon_\mathrm{m}) (\|K_\star\|+1)\|x\|  \leqslant     \Upsilon_\star^2 ( 4\varepsilon + 3 \varepsilon_\mathrm{m}) ( \sqrt{\|P_\star\|}+1)\|x\|,$
where the last inequality follows from Lemma~\ref{lem:PerturbationSimchow}.
\end{proof}

Then Lemma~\ref{lem:metalemma} can be proved as follows. Firstly, based on \eqref{eq:Kdiff} and the assumption, we have $\|B_\star (K-K_\star)\| \leqslant 8 \Upsilon_\star^4 ( \varepsilon_\mathrm{m} +\varepsilon)/  \underline{\sigma}(R) \leqslant 1/5 \|P_\star\|^{-3/2}$. Then given Assumption~\ref{ass:LQR}, we have $\|\Sigma_{K_\star}\| \leqslant \sigma_w^2 \|P_\star\|$ based on \cite[Lem. B.5]{simchowitz2020naive}\footnote{Note that in this work, the noise $w_t$ has a covariance matrix $\sigma_w^2 I$ instead of $I$ as in \cite{simchowitz2020naive}; however, the results in \cite{simchowitz2020naive} extends to this work trivially.}. Therefore, $\|B_\star (K-K_\star)\| \leqslant 1/5 \|P_\star\|^{-3/2}$ implies $\|B_\star (K-K_\star)\| \leqslant 1/5 \sigma_w^{3}  \|\Sigma_{K_\star}\|^{-3/2}$, which further shows that $A_\star-B_\star K$ is Schur stable, according to \cite[Lem. B.12]{simchowitz2020naive}. Then given a stabilizing $K$ and based on \cite[Lem.~3]{mania2019certainty}, we have 
\begin{align*}
   & J_{K} - J_{K_\star} \leqslant \|\Sigma_K\| \|R+ B_\star^\top P_\star B_\star \| \|K-K_\star\|_F^2 \\
   & \leqslant 2 \|\Sigma_{K_\star}\| \|R+ B_\star^\top P_\star B_\star \| \|K-K_\star\|^2 \min\{n,m\} \\
   & \leqslant 2 \min\{n,m\}  \sigma_w^2 (\overline{\sigma}(R) + \Upsilon_\star^3  ) \|P_\star\| \|K-K_\star\|^2, 
\end{align*}
where the second inequality holds due to $ \|\Sigma_K\|  \leqslant 2 \|\Sigma_{K_\star}\| $ under $\|B_\star (K-K_\star)\| \leqslant 1/5 \sigma_w^3 \|\Sigma_{K_\star}\|^{-3/2}$ according to \cite[Lem.~B.12]{simchowitz2020naive}. Then combining the above result and Lemma~\ref{lem:Kgap} concludes the proof.

\subsection{Proof of Lemma~\ref{lem:Pdiff}}
Define the shorthand notations $\hat{S} \triangleq S_{\hat{B}}$ and $S_\star \triangleq S_{B_\star}$. We prove this result by induction. The equality holds trivially when $i=0$. Assume it holds for $i=k$, then 
\begin{IEEEeqnarray}{rCl}
\label{eq:proofLemmaRiccati1a}
&&\hat{P}^{(k+1)}  - P_\star^{(k+1)} = \mathcal{R}^{(k)}_{\hat{A},\hat{B}}\big(\hat{P}^{(1)} \big) - \mathcal{R}^{(k)}_{A_\star,B_\star}\big( P_\star^{(1)} \big)  \nonumber    \\
& =& \big[ \Phi_{\hat{A},\hat{B}}^{(0:k)}(\hat{P}^{(1)} ) \big]^\top (\hat{P}^{(1)}  - P_\star^{(1)}) \bar{\Phi}^{(0:k)}(P_\star^{(1)})  \nonumber \\ && \negmedspace{}  +\sum_{j=1}^k  \big[ \Phi_{\hat{A},\hat{B}}^{(k-j+1:k)}(\hat{P}^{(1)}) \big]^\top  \mathcal{M}\Big( \hat{P}^{(k-j+1)} , P_\star^{(k-j+1)} \Big)  \nonumber \\
 && \negmedspace{} \times \bar{\Phi}^{(k-j+1:k)}(P_\star^{(1)}) \nonumber  \\
&=& \big[ \Phi_{\hat{A},\hat{B}}^{(1:k+1)}(P_1 ) \big]^\top (\hat{P}^{(1)}  - P_\star^{(1)}) \bar{\Phi}^{(1:k+1)}(P_2)  \nonumber \\  
&& \negmedspace{} 
 +\sum_{j=1}^k  \big[ \Phi_{\hat{A},\hat{B}}^{(k-j+2:k+1)}(P_1) \big]^\top  \mathcal{M}\Big( \hat{P}^{(k-j+1)} , P_\star^{(k-j+1)} \Big) \nonumber  \\
  && \negmedspace{}\times \bar{\Phi}^{(k-j+2:k+1)}(P_2) 
\end{IEEEeqnarray}

In \eqref{eq:proofLemmaRiccati1a}, it holds that $\hat{P}^{(1)}  - P_\star^{(1)} $
\begin{IEEEeqnarray}{rCl} 
&  =& \hat{A}^\top P_1 (I+\hat{S} P_1)^{-1}\hat{A} - A_\star^\top P_2 (I+S_\star P_2)^{-1}A_\star \nonumber \\
 &=& \underbrace{\hat{A}^\top [P_1 (I+\hat{S} P_1)^{-1}- P_2(I+S_\star P_2)^{-1}]\hat{A} }_{\text{Term I}} \nonumber \\
 &&\negmedspace{} + 
 \underbrace{\hat{A}^\top P_2 (I+S_\star P_2)^{-1}\hat{A}- A_\star^\top P_2 (I+S_\star P_2)^{-1}A_\star}_{\text{Term II}}, \IEEEeqnarraynumspace \label{eq:proofLemmaRiccati2} 
\end{IEEEeqnarray}
where Term I further leads to 
$
 \hat{A}^\top [P_1 (I+\hat{S} P_1)^{-1}- P_2 (I+S_\star P_2)^{-1}]\hat{A} =\hat{A}^\top [(I+P_1 \hat{S} )^{-1} P_1- P_2(I+S_\star P_2)^{-1}]\hat{A}  
 =  \hat{A}^\top (I+P_1 \hat{S} )^{-1}  [ P_1 (I+S_\star P_2)- (I+P_1 \hat{S} )P_2]    (I+S_\star P_2)^{-1}\hat{A} 
 =  \hat{A}^\top (I+P_1 \hat{S} )^{-1} [ (P_1-P_2) +(P_1 S_\star P_2 - P_1 \hat{S} P_2 )] (I+S_\star P_2)^{-1}\hat{A} 
 = \hat{A}^\top (I+P_1 \hat{S} )^{-1} (P_1-P_2) [I+(S_\star-\hat{S})P_2] (I+S_\star P_2)^{-1}\hat{A} 
 + \hat{A}^\top (I+P_1 \hat{S} )^{-1} P_2 (S_\star-\hat{S})P_2 (I+S_\star P_2)^{-1}\hat{A}$ 
\begin{align} \label{eq:proofLemmaRiccati2a}
 =& [\mathcal{L}_{\hat{A},\hat{B}}(P_1)]^\top (P_1-P_2) \bar{\mathcal{L}}(P_2) \nonumber\\ &+ \hat{A}^\top (I+P_1 \hat{S} )^{-1} P_2 (S_\star-\hat{S})P_2 (I+S_\star P_2)^{-1}\hat{A}. 
\end{align}
Combining \eqref{eq:proofLemmaRiccati2} and \eqref{eq:proofLemmaRiccati2a}  leads to 
\begin{equation}
\kern-0.6em \hat{P}^{(1)}  - P_\star^{(1)} =[\mathcal{L}_{\hat{A},\hat{B}}(P_1)]^\top (P_1-P_2) \bar{\mathcal{L}}(P_2)+ \mathcal{M}(P_1,P_2). \label{eq:proofLemmaRiccati3} 
\end{equation}
Then plugging \eqref{eq:proofLemmaRiccati3} into \eqref{eq:proofLemmaRiccati1a} shows $\hat{P}^{(k+1)}  - P_\star^{(k+1)}$ satisfies the equality in this lemma, concluding the proof by induction.

\subsection{Proof of Lemma~\ref{lem:PhiBar}}
The matrix $\bar{\mathcal{L}}( P_\star)$ in \eqref{eq:PhiBar} is reformulated as 
\begin{equation}
 \bar{\mathcal{L}}(P_\star)=  L_\star+ \underbrace{\mathcal{H}(P_\star) + S_\Delta  P_\star   [ \mathcal{H}(P_\star) + L_\star] }_{\Delta \triangleq}, \label{eq:PhiBarProof1}
\end{equation}
where $S_\Delta = S_{B_\star} - S_{\hat{B}}$. To upper bound the norm of $\Delta$, we first upper bound $\mathcal{H}(P_\star)$:
$ \| \mathcal{H}(P_\star)\| =\| [I - B_\star (R+ B_\star^\top P_\star B_\star)^{-1} B_\star^\top P_\star ] (\hat{A} - A_\star ) \| \leqslant (1 + \|B_\star\|^2 \|P_\star \| \| (R+ B_\star^\top P_\star B_\star)^{-1} \| ) \varepsilon_\mathrm{m} \leqslant (1 + \Upsilon_\star^2 \|P_\star \|  )\varepsilon_\mathrm{m},$
where the first identity follows form the matrix inverse lemma, and the final step follows from $R+ B_\star^\top P_\star B_\star \succeq R \succeq I$ due to Assumption~\ref{ass:LQR}. Due to $R \succeq I$, letting $\varepsilon_s = \|S_\Delta\| $ leads to 
\begin{equation} \label{eq:Sdelta}
    \varepsilon_s = \| S_{B_\star} - S_{\hat{B}}\|  \leqslant \varepsilon_\mathrm{m}^2 +2 \Upsilon_\star \varepsilon_\mathrm{m}, \text{ and }
\end{equation}
\begin{IEEEeqnarray}{rCl}
 \|\Delta\| &\leqslant& (1 + \Upsilon_\star^2 \|P_\star \|  )\varepsilon_\mathrm{m}+\varepsilon_s \|P_\star\| (\|\mathcal{H}(P_\star)
\|+ \|L_\star\| ) \nonumber \\
&\leqslant&  (1 + \Upsilon_\star^2 \|P_\star \|  )\varepsilon_\mathrm{m}  +\varepsilon_s  \|P_\star\| \big[ (1 + \Upsilon_\star^2 \|P_\star \|  )\varepsilon_\mathrm{m}+ \|P_\star\|  \big] \nonumber  \\ 
 &\leqslant&  \psi(\varepsilon_\mathrm{m}) \triangleq \varepsilon_{\mathrm{m}}  \Big\{ 1 + \Upsilon_\star^2 \|P_\star \| +(\varepsilon_\mathrm{m}+ 2 \Upsilon_\star )\nonumber  \\
 && \negmedspace{} \quad \quad\quad \quad \times  \Big[ (\|P_\star\| + \Upsilon_\star^2 \|P_\star \|^2  )\varepsilon_\mathrm{m}+ \|P_\star\|^2  \Big]\Big\}, \label{eq:psiDEFproof}
\end{IEEEeqnarray}
where we have used $\|L_\star \| \leqslant   \|P_\star \|$ from Lemma~\ref{lem:PerturbationSimchow} in the second inequality, and $\psi(\varepsilon_\mathrm{m}) =O(\varepsilon_{\mathrm{m}}) $ holds. Based on \eqref{eq:RateLstar} and $\bar{\Phi}^{(j:i)}(P_\star)  = \bar{\mathcal{L}}(P_\star)^{i-j} = (L_\star + \Delta)^{i-j}$, applying Lemma~\ref{lem:PerturbRate} concludes the proof.

\subsection{Proof of Theorem~\ref{thm:RiccatiDiff}}
Based on \eqref{eq:lemmaRiccati}, we first upper bound $\mathcal{M}$ and then combine it with Lemma~\ref{lem:PhiBar} and Proposition~\ref{prop:PhiBarAB}. Recall \eqref{eq:Sdelta}, and for any $P \in \mathbb{S}^n_+$, $\|\mathcal{M}(P,P_\star)\| \leqslant \| (\hat{A}-A_\star)^\top P_\star (I+S_{B_\star}P_\star)^{-1} (\hat{A}-A_\star)+ A_\star^\top P_\star (I+S_{B_\star}P_\star)^{-1} (\hat{A}-A_\star)+  (\hat{A}-A_\star)^\top P_\star (I+S_{B_\star}P_\star)^{-1}  A_\star \|+ \|\hat{A}\|  \|P_\star\|^2 \|\mathcal{L}_{\hat{A}, \hat{B}}(P)     \|  \varepsilon_s$
\begin{IEEEeqnarray}{rCl} 
    &\leqslant & \|P_\star\| ( \varepsilon_\mathrm{m}^2 +2\Upsilon_\star \varepsilon_\mathrm{m}  ) +(\varepsilon_\mathrm{m}+ 
\Upsilon_\star)^4 \|P_\star\|^2   (1+ \|P\|)    \varepsilon_s \nonumber  \\
 &\leqslant & \Tilde{\psi}(\varepsilon_\mathrm{m} ) \triangleq ( \varepsilon_\mathrm{m}^2 +2\Upsilon_\star \varepsilon_\mathrm{m}  ) \nonumber \\
&& \negmedspace{}  \times \big[ \|P_\star\|+ (\varepsilon_\mathrm{m}+ 
 \Upsilon_\star)^4 \|P_\star\|^2   (1+ \|P_\star\|+\Upsilon_\star )   \big],  \label{eq:TidlePsiIneq}
\end{IEEEeqnarray}
where \eqref{eq:BoundCloseLoop} and Lemma~\ref{lem:tool1Mania} are used,
 and $\Tilde{\psi}(\varepsilon_\mathrm{m} ) = O(\varepsilon_\mathrm{m})$.

Then if $i \geqslant n_{\mathrm{cr}}$, combining \eqref{eq:lemmaRiccati}, \eqref{eq:PhiBarDecayRate}, \eqref{eq:PhihatIneq} and \eqref{eq:TidlePsiIneq} shows 
\begin{equation*}
\| \hat{P}^{(i)}  - P_\star  \|  \leqslant \Tilde{\zeta}\varepsilon_{\mathrm{p}} \gamma_2^i \gamma_1^i  + \Tilde{\zeta}   \Tilde{\psi} \Big (   \sum_{k=0}^{n_{\mathrm{cr}}-1} \overline{\gamma}_2^k \gamma_1^k+  \sum_{k=n_{\mathrm{cr}}}^{i-1} \gamma_2^k \gamma_1^k  \Big).\end{equation*}
Similarly, when $i < n_{\mathrm{cr}}$, we have
$
\| \hat{P}^{(i)}  - P_\star  \|  \leqslant \Tilde{\zeta}\varepsilon_{\mathrm{p}} \overline{\gamma}_2^i \gamma_1^i + \Tilde{\zeta}   \Tilde{\psi}   \sum_{k=0}^{i-1} \overline{\gamma}_2^k \gamma_1^k.
$ Combining these two cases proves the result.

\subsection{ Proof of Lemma~\ref{prop:MPCperformance} }
We have $\widehat{E} \geqslant \Tilde{\zeta} \Tilde{\psi} \geqslant 2 \varepsilon_{\mathrm{m}} \Upsilon_\star^4 \|P_\star\|^2$, due to 
\begin{equation}
\Tilde{\zeta} \geqslant \overline{\zeta} \geqslant \Upsilon_\star^3 \|P_\star\|, \text{ and }\Tilde{\psi} \geqslant 2 \varepsilon_{\mathrm{m}} \Upsilon_\star \|P_\star\|,
\end{equation}
which are obtained from \eqref{eq:PhiConstant2} and \eqref{eq:TidlePsiIneq}. The assumption $\varepsilon_\mathrm{m}+ \widehat{E}(\varepsilon_\mathrm{m}, \varepsilon_\mathrm{p},N-1) \leqslant 1/(40 \Upsilon_\star^4 \|P_\star\|^2 )$ implies $\varepsilon_{\mathrm{m}} \leqslant 1/(80 \Upsilon_\star^8 \|P_\star\|^4 )$. Then \eqref{eq:Defgamma1Psi} and the definition of $\psi$ in \eqref{eq:psiDEFproof} show $\psi \leqslant \varepsilon_\mathrm{m} 9 \Upsilon_\star^3 \|P_\star\|^2 \leqslant 1/(8 \Upsilon_\star^5 \|P_\star\|^2 )$
\begin{align}
    & <\frac{1}{2\|P_\star\|^{3/2}}\leqslant \frac{1}{(1+\sqrt{1-\beta_\star})\sqrt{\beta_\star^{-1}} \beta_\star^{-1} }= \frac{1- \sqrt{1-\beta_\star }}{ \sqrt{\beta_\star^{-1}}} \label{eq:boundpsi }\\
    & \implies \gamma_1 < 1, \nonumber
\end{align}
where we use $\sqrt{1-\beta_\star}<1$ and $\beta_\star^{-1}  \leqslant \|P_\star\|$. combining Lemma~\ref{lem:metalemma} and Theorem~\ref{thm:RiccatiDiff} concludes the proof, as the conditions in these two results are satisfied by 
$\varepsilon_{\mathrm{m}} \leqslant 1/(80 \Upsilon_\star^8 \|P_\star\|^4 )$.

\subsection{Proof of Theorem~\ref{coro:ChangeN}}
Equation~\eqref{eq:PerturbRiccatiDiff} can be written more compactly as
\begin{align}
\widehat{E} =& \Tilde{\zeta} \Big \{ \Big( \varepsilon_\mathrm{p} - \frac{\Tilde{\psi} }{ 1- \gamma_1 \gamma_2   } \Big)\gamma_2^{N-1} \gamma_1^{N-1} + \frac{\Tilde{\psi} \gamma_1^{n_{\mathrm{cr}}} \gamma_2^{n_{\mathrm{cr}}} }{1-  \gamma_1 \gamma_2  } \nonumber \\& +\Tilde{\psi} \sum_{k=0}^{n_{\mathrm{cr}}-1} \overline{\gamma}_2^k\gamma_1^k  \Big\} \text{ if } N-1 \geqslant n_{\mathrm{cr}}.\label{eq:EreformuB} 
\end{align}
We can quantify the change from $N-1<n_{\mathrm{cr}}$ to  $N-1 \geqslant n_{\mathrm{cr}}$ by computing $\Delta\widehat{E} \triangleq \widehat{E}(\varepsilon_\mathrm{m}, \varepsilon_\mathrm{p},n_{\mathrm{cr}} ) -  \widehat{E}(\varepsilon_\mathrm{m}, \varepsilon_\mathrm{p},n_{\mathrm{cr}}-1 ) $
\begin{align*}
& = \Tilde{\zeta}   \varepsilon_\mathrm{p}\big( \gamma_1^{n_{\mathrm{cr}}} \gamma_2^{ n_{\mathrm{cr}}} - \gamma_1^{n_{\mathrm{cr}}-1} \overline{\gamma}_2^{ n_{\mathrm{cr}}-1} \big) +\Tilde{\psi}  \Tilde{\zeta} \gamma_1^{n_{\mathrm{cr}}-1} \overline{\gamma}_2^{ n_{\mathrm{cr}}-1}   \\
 & = \Tilde{\zeta} (\gamma_1 \gamma_2)^{n_{\mathrm{cr}}-1}\big\{\varepsilon_{\mathrm{p}}[ \gamma_2 \gamma_1 - (\overline{\gamma}_2/\gamma_2)^{ n_{\mathrm{cr}}-1 }]   +  \Tilde{\psi} (\overline{\gamma}_2/\gamma_2)^{ n_{\mathrm{cr}}-1 } \big\}.
\end{align*}
The above equation and $\gamma_2 \leqslant \overline{\gamma}_2$ from \eqref{eq:PhihatIneq} show that 
\begin{equation}\label{eq:proofRateCompareC}
\Delta\widehat{E} \geqslant 0 \text{ iff }   \varepsilon_\mathrm{p} - \frac{\Tilde{\psi} }{ 1-\frac{ \gamma_1  \gamma_2}{ (\overline{\gamma}_2/\gamma_2)^{n_{\mathrm{cr}}-1 }     } } \leqslant 0.
\end{equation}
 
The statement (a) follows from \eqref{eq:Ereformu}, \eqref{eq:proofRateCompare}, \eqref{eq:EreformuB}, and \eqref{eq:proofRateCompareC}. For statement (b), if $ 0 >\varepsilon_\mathrm{p} - \Tilde{\psi}/(1-\gamma_1 \gamma_2 )$, then \eqref{eq:Ereformu}, \eqref{eq:proofRateCompare}, and \eqref{eq:EreformuB} show that $g$ and $\widehat{E}$ are strictly increasing as $N \in \{1,\dots n_{\mathrm{cr}}\}$ increases, and they are also strictly increasing as $N \geqslant n_{\mathrm{cr}}+1$ increases. For the transition of the two regions of $N$, consider $\widehat{E} (\varepsilon_\mathrm{m},\varepsilon_\mathrm{p},0) -  \widehat{E} (\varepsilon_\mathrm{m},\varepsilon_\mathrm{p},n_{\mathrm{cr}})$
\begin{IEEEeqnarray*}{rCl}
& =& \varepsilon_\mathrm{p} \Tilde{\zeta} -  \varepsilon_\mathrm{p} \Tilde{\zeta} (\gamma_1 \gamma_2)^{n_{\mathrm{cr}} } -  \Tilde{\zeta} \Tilde{\psi} \frac{1- (\gamma_1 \overline{\gamma}_2)^{n_{\mathrm{cr}}}  }{1- \gamma_1 \overline{\gamma}_2 } =\Tilde{\zeta} [1- (\gamma_1 \gamma_2)^{n_{\mathrm{cr}} } ]\\
&&\negmedspace{} \times  \Big\{\varepsilon_\mathrm{p} -\frac{\Tilde{\psi}}{ 1- \gamma_1\gamma_2  }   \frac{ (1-\gamma_1 \gamma_2) [1- (\gamma_1 \overline{\gamma}_2)^{n_{\mathrm{cr}}}]  }{ (1- \gamma_1 \overline{\gamma}_2) [1- (\gamma_1 \gamma_2)^{n_{\mathrm{cr}} } ]   }  \Big \} \\
&\leqslant& \Tilde{\zeta} [1- (\gamma_1 \gamma_2)^{n_{\mathrm{cr}} } ]  
 \Big( \varepsilon_\mathrm{p} -\frac{\Tilde{\psi}}{ 1- \gamma_1\gamma_2  }    \Big ) <0,
\end{IEEEeqnarray*} 
where the first inequality follows from Lemma~\ref{lem:inequalProof}. The above shows $\arg\min_N \widehat{E}(\varepsilon_\mathrm{m},\varepsilon_\mathrm{p},N-1) =\arg\min_N g(\varepsilon_\mathrm{m},\varepsilon_\mathrm{p},N) = 1$. Moreover, under the additional condition $ \varepsilon_\mathrm{p} - \Tilde{\psi}/[1-\gamma_1 \gamma_2/\gamma_{\mathrm{ratio}} ] < 0$, the last part of statement (b) is proved by observing \eqref{eq:proofRateCompareC}.

For statement (c), if $\varepsilon_\mathrm{p} - \Tilde{\psi}/(1-\gamma_1 \gamma_2 ) >0> \varepsilon_\mathrm{p} - \Tilde{\psi}/(1-\gamma_1  \overline{\gamma}_2)$, \eqref{eq:Ereformu} and \eqref{eq:EreformuB} show that $g$ is increasing as $N \in \{1,\dots, n_{\mathrm{cr}}\}$ increases and decreasing as $N \geqslant n_{\mathrm{cr}}+1$ increases. This means that $\inf_{N} g(\varepsilon_\mathrm{m},\varepsilon_\mathrm{p},N) $ equals either $g(\varepsilon_\mathrm{m},\varepsilon_\mathrm{p},1) $ or $\lim_{N \to \infty}g(\varepsilon_\mathrm{m},\varepsilon_\mathrm{p},N) $ depending on the sign of $\widehat{E} (\varepsilon_\mathrm{m},\varepsilon_\mathrm{p},0) -  \lim_{N\to \infty }\widehat{E} (\varepsilon_\mathrm{m},\varepsilon_\mathrm{p},N)=$
\begin{align*}
& \Tilde{\zeta}\bigg\{
\varepsilon_\mathrm{p} - \frac{\Tilde{\psi}}{ 1- \gamma_1\gamma_2  } \Big[ \underbrace{\frac{(1- \gamma_1\gamma_2)[1- (\overline{\gamma}_2 \gamma_1)^{n_{\mathrm{cr}}}]}{1- \gamma_1\overline{\gamma}_2} +(\gamma_1 \gamma_2)^{n_{\mathrm{cr}}}}_{\mathcal{V}(n_{\mathrm{cr}}) 
 } 
 \Big]
\bigg\},
\end{align*}
which can be proved to be in the interval $ [ \Tilde{\zeta}(\varepsilon_\mathrm{p} - \Tilde{\psi}/(1-\gamma_1  \overline{\gamma}_2)), \Tilde{\zeta}(\varepsilon_\mathrm{p} - \Tilde{\psi}/(1-\gamma_1 \gamma_2 ))]$ as follows. The upper bound of this interval is proved by showing $\mathcal{V}(i+1) - \mathcal{V}(i) \geqslant 0$ and thus $\mathcal{V}(n_{\mathrm{cr}}) \geqslant \mathcal{V}(1)=1$, and the lower bound can be proved similarly. The above proves statement (c). Statement (d) follows trivially from \eqref{eq:Ereformu} and \eqref{eq:EreformuB}.

\subsection{Proof of Corollary~\ref{prop:MPCnoTerminal}}
When $P=0$, we let $\varepsilon_\mathrm{p} = \|P_\star\|$. According to Theorem~\ref{coro:ChangeN}, the goal is to prove $\|P_\star\|- \Tilde{\psi}/(1-\gamma_1 \overline{\gamma}_2) >0 $ under the conditions. Due to $\Tilde{\zeta} \geqslant \overline{\zeta} \geqslant 1$, $\Tilde{\psi} \leqslant \widehat{E} \leqslant 1/(40 \Upsilon_\star^4 \|P_\star\|^2) $ holds, given the condition and \eqref{eq:PerturbRiccatiDiff}. Based on $\psi \leqslant 1/(8 \Upsilon_\star^5 \|P_\star\|^2 )$ from \eqref{eq:boundpsi }, we have from \eqref{eq:Defgamma1Psi} that
$
  \gamma_1  \leqslant \sqrt{\|P_\star\|} \psi + \sqrt{1-\beta_\star} < 1/ 8 \Upsilon_\star^5 \|P_\star\|^{3/2}.
$
This leads to
\begin{align*}
 1-\gamma_1 -\Tilde{\psi} > 1- \frac{1}{8 \Upsilon_\star^5 \|P_\star\|^{3/2} }  -\frac{1}{40 \Upsilon_\star^4 \|P_\star\|^2}>0,  
\end{align*}
and $\Tilde{\psi} /(1-\gamma_1)<1$. Finally, $\|P_\star\| \geqslant 1 $ from Lemma~\ref{lem:PerturbationSimchow} implies
\begin{align}
 &\|P_\star\|- \frac{\Tilde{\psi}}{ 1- \gamma_1 \overline{\gamma}_2  } \geqslant   1 - \frac{\Tilde{\psi}}{1-\gamma_1} >0, \label{eq:ZeroTermi}
\end{align}
which together with Theorem~\ref{coro:ChangeN} concludes the proof.

\subsection{Proof of Proposition~\ref{lem:SimplePerfor}}
Since $P=0$, we let $\varepsilon_\mathrm{p}= \|P_\star\|$. In addition, $\varepsilon_\mathrm{m }\leqslant 1/(16 \|P_\star\|^2)$ holds under the assumption, which leads to 
\begin{equation} \label{eq:BoundALpha}
   \alpha_{\varepsilon_\mathrm{m}} \in [1, \sqrt{2}], 
\end{equation}
and thus $ \overline{\gamma}_2 \leqslant \mu_\star<1$. Therefore, we have 
\begin{align}
& \widehat{E}_{\mathrm{sta}} \leqslant \Tilde{\zeta} \Big[ \|P_\star\|  \mu_\star ^{N- 1}+ \Tilde{\psi}/(1- \mu_\star)     \Big].\label{eq:proofSimplePerfor}
\end{align}
Given the upper bound on $\varepsilon_{\mathrm{m}}$, $\Tilde{\psi}$ in \eqref{eq:TidlePsiIneq} satisfies $\Tilde{\psi} \leqslant C_1 \varepsilon_\mathrm{m} $ for some constant $C_1$ independent of $N$ and $\varepsilon_{\mathrm{m}}$, and $\Tilde{\zeta}$ can also be upper bounded by some constant $C_2$. This shows $
\widehat{E}_{\mathrm{sta}}  \leqslant C_2  \max\{\|P_\star\| ,C_1/(1-\gamma_1 \mu_\star)\}   [ \mu_\star ^{N- 1} +\varepsilon_\mathrm{m} ].
$
Combining the above inequality with \eqref{eq:MPCperforUnknown} concludes the proof.

 \section{Proof of Proposition~\ref{prop:PhiBarAB}\label{sec:Appendix2} }
To prove this result, the dual Riccati mapping is relevant:
\begin{align}
  \mathcal{R}_{ \mathrm{dual}}(P)  \triangleq & A_\star P (I+ QP)^{-1}A_\star^\top +  S_{B_\star},   \label{eq:DualRicca} 
\end{align}
with its fixed point $P_{\mathrm{dual}}$. Let $\hat{P}_{\mathrm{dual}}$ be a fixed point of \eqref{eq:DualRicca} with $(A_\star,B_\star)$ replaced by $(\hat{A},\hat{B})$ and $\hat{L}$ be the closed-loop matrix for $(\hat{A},\hat{B})$ under the corresponding LQR controller.

The cases $i-j \geqslant n_{\mathrm{cr}} $ and $i-j< n_{\mathrm{cr}}$ are considered separately. If $i-j \geqslant n_{\mathrm{cr}} $, the proof strategy is as follows. Due to $\Phi_{\hat{A},\hat{B}}^{(j:i)}(P) = \Phi_{\hat{A},\hat{B}}^{(0:i-j)}(\hat{P}^{(j)})$, Lemma~\ref{lem:RicaDiffEqu}\ref{eq:IdentiPhi} shows $\big[ I+ \mathcal{O}_{\hat{A} ,\hat{B} }^{(i-j)}(\hat{P}  ) (\hat{P}^{(j)}- \hat{P}) \big] 
 \Phi^{(j:i)}_{\hat{A},\hat{B}}(P)=  \hat{L}^{i-j}$ for $i \in \mathbb{Z}^+$. Moreover, \cite[Lem.~4.1]{del2021note} shows that if  $ \mathcal{O}_{\hat{A} ,\hat{B} }^{(i-j)}(\hat{P} )$ is positive definite, $\|\Phi^{(j:i)}_{\hat{A},\hat{B}}(P)\| =   \| \big[ I+ \mathcal{O}_{\hat{A} ,\hat{B} }^{(i-j )}(\hat{P} ) (\hat{P}^{(j)}- \hat{P}) \big]^{-1} \hat{L}^{i-j} \| $
 \begin{align}
 \leqslant \| [\mathcal{O}_{\hat{A} ,\hat{B} }^{(i-j )}(\hat{P} ) ]^{-1} \|  \|\hat{P}_{\mathrm{dual}}\| \| \hat{L}^{i-j} \|, \label{eq:ProofStateTransi}
\end{align}
Then the proof consists of the following analysis: (i) Perturbation analysis of the closed-loop system $\hat{L}$; (ii) Perturbation analysis of the Gramian matrix: Show that $\mathcal{O}_{\hat{A} ,\hat{B} }^{(i-j)}(\hat{P} )$ is invertible for $i-j \geqslant n_{\mathrm{cr}} $ and provides an upper bound on $\| [\mathcal{O}_{\hat{A} ,\hat{B} }^{(i-j)}(\hat{P} ) ]^{-1} \|$; (iii) Perturbation analysis of $\|\hat{P}_{\mathrm{dual}}\|$. All perturbation bounds will be functions of $\varepsilon_{\mathrm{m}}$ and $\varepsilon_\mathrm{p}$. 

The decay rate of the perturbed closed-loop system is obtained here. Given $Q \succeq I$, \eqref{eq:RateLstar} and Lemma~\ref{lem:PerturbationSimchow} lead to
\begin{align}
  \|\hat{L} ^{i-j}\|  \leqslant& \sqrt{ \overline{\sigma}(\hat{P})/ \underline{\sigma}(Q)   }  \Big(\sqrt{ 1-  \underline{\sigma}(Q)/  \overline{\sigma}( \hat{P})  } \Big)^{i-j} \nonumber \\
  \leqslant & \sqrt{\alpha_{\varepsilon_\mathrm{m}}  \beta_\star^{-1} }  \Big(\sqrt{ 1- \alpha_{\varepsilon_\mathrm{m}}^{-1} \beta_\star    } \Big)^{i-j}. \label{eq:LhatRate} 
\end{align}
The remaining analysis is presented as follows.

\subsection{Perturbation analysis of the Gramian matrix}
If $i-j \geqslant n_{\mathrm{cr}}$, then $\mathcal{O}_{\hat{A} ,\hat{B} }^{(i-j)}(\hat{P} ) \succeq 
 \mathcal{O}_{\hat{A} ,\hat{B} }^{(n_{\mathrm{cr}})}(\hat{P} )$ and 
 $$ \mathcal{O}_{\hat{A} ,\hat{B} }^{(n_{\mathrm{cr}})}(\hat{P} ) \succeq   (\|R+\hat{B}^\top \hat{P} \hat{B} \|)^{-1} \mathcal{C}_{n_{\mathrm{cr}}}(\hat{L},\hat{B}) \big[\mathcal{C}_{n_{\mathrm{cr}}}(\hat{L},\hat{B}) \big]^\top.$$ The Gramian matrix $  \mathcal{C}_{n_{\mathrm{cr}}}(\hat{L},\hat{B})[\mathcal{C}_{n_{\mathrm{cr}}}(\hat{L},\hat{B})]^\top $ is a perturbed version of $  \mathcal{C}_{n_{\mathrm{cr}}}( L_\star, B_\star)[\mathcal{C}_{n_{\mathrm{cr}}}(L_\star,B_\star)]^\top $. Based on Lemma~\ref{lem:PerturbationSimchow}, $\|\hat{L} - L_\star\| \leqslant \|A_\star-\hat{A}\| + \|(\hat{B}- B_\star) ( \hat{K} - K_\star) \|  + \|(\hat{B} - B_\star) K_\star\|   + \|B_\star(\hat{K} - K_\star) \|$
 \begin{align}
  & \leqslant 
 \varepsilon_L   \triangleq \varepsilon_{\mathrm{m}}\big[1+ 7 \alpha_{\varepsilon_\mathrm{m}}^{7/2} \|P_\star\|^{7/2}(\varepsilon_{\mathrm{m}}+\Upsilon_\star) + \|P_\star\|^{1/2}  \big],  \label{eq:Lperburb}
 \end{align}
which satisfies
\begin{equation} \label{eq:rateEpsilonL}
 \varepsilon_L  = O(\varepsilon_\mathrm{m} ).   
\end{equation}

Under Assumption~\ref{ass:ControlIndex}, recall $
 \underline{\sigma}(\mathcal{C}_{n_{\mathrm{cr}}}(L_\star,B_\star)) \geqslant \nu_\star $ from \eqref{eq:defGramianLowerBound}. This together with \eqref{eq:RateLstar}, \eqref{eq:Lperburb}, and \cite[Lem. 6]{mania2019certainty} show  $
\underline{\sigma}( \mathcal{C}_{n_{\mathrm{cr}}}(\hat{L},\hat{B}) ) \geqslant  \nu_\star - f_{\mathcal{C}}(\varepsilon_{\mathrm{m}} ),$
with $f_{\mathcal{C}}(\varepsilon_{\mathrm{m}} )   \triangleq    3 \varepsilon_{L} n_{\mathrm{cr}}^{3/2} \beta_\star^{-1} $
\begin{align} \label{eq:defFc}
& \times \max\big\{1,  \sqrt{\beta_\star^{-1}}  \varepsilon_{L} +\sqrt{ 1-\beta_\star }     \big\}^{n_{\mathrm{cr}}-1} (\Upsilon_\star + 1)   \\
&= O(\varepsilon_\mathrm{m}), \nonumber
\end{align}
where $\varepsilon_{L}$ is defined in \eqref{eq:Lperburb}, and the second equality follows from \eqref{eq:rateEpsilonL}. Given $\nu_\star \geqslant  2 f_{\mathcal{C}}(\varepsilon_{\mathrm{m}} ) $ in Assumption~\ref{ass:TechContPertur}, $\mathcal{C}_{n_{\mathrm{cr}}}(\hat{L},\hat{B}) $ is of full rank, and we have  $\lambda_{\mathrm{min}}( \mathcal{C}_i(\hat{L},\hat{B})\mathcal{C}_i(\hat{L},\hat{B})^\top ) \geqslant (\nu_\star - f_{\mathcal{C}}(\varepsilon_{\mathrm{m}} ))^2$. Algebraic manipulations show $\|\hat{B}^\top \hat{P} \hat{B}\| \leqslant \|\hat{P}\|(\varepsilon_{\mathrm{m}}+ \Upsilon_\star  )^2$, which leads to $[ \mathcal{O}_{\hat{A} ,\hat{B} }^{( n_{\mathrm{cr}} )}(\hat{P} )]^{-1} \preceq	    (\|R+\hat{B}^\top \hat{P} \hat{B} \|)   (\nu_\star -f_{\mathcal{C}}(\varepsilon_{\mathrm{m}} ))^{-2} I$
\begin{subequations}
\begin{align}   
&\preceq  \big[  \overline{\sigma}(R) + \alpha_{\varepsilon_\mathrm{m}} \| P_\star\|(\varepsilon_{\mathrm{m}}+ \Upsilon_\star  )^2   \big]  (\nu_\star - f_{\mathcal{C}}(\varepsilon_{\mathrm{m}} ))^{-2} I, \label{eq:PerturGramianLowerBound}\\
&  = O(1) I, \label{eq:PerturGramianLowerBoundb}
\end{align}
\end{subequations}
where \eqref{eq:PerturGramianLowerBoundb} follows from  the fact that the RHS of \eqref{eq:PerturGramianLowerBound} approaches a constant as $\varepsilon_\mathrm{m} \to 0$.
\subsection{Perturbation analysis of the dual Riccati equation} \label{sec:DualPerturb}
Given $Q \succeq  I$ under Assumption~\ref{ass:LQR}, let $Q^{1/2}$ denote the symmetric positive-definite square root of $Q$. We further define 
\begin{align}
 K_{\mathrm{dual}} &\triangleq (I+  Q^{\frac{1}{2}} P_{\mathrm{dual}}   Q^{\frac{1}{2}}  )^{-1} Q^{\frac{1}{2}} P_{\mathrm{dual}} A_\star^\top, \nonumber \\
  L_{\mathrm{dual}} &\triangleq A_\star^\top -  Q^{\frac{1}{2} }K_{\mathrm{dual}}. \label{eq:LdualDef}
\end{align}
As $  L_{\mathrm{dual}} $ is Schur stable, let $\rho_{\mathrm{dual}} \in (0,1)$ represent its decay rate such that $\|L_{\mathrm{dual}}^i\| \leqslant c \rho_{\mathrm{dual}}^i $ for some $c \geqslant 1$ and for any $i \in \mathbb{Z}^+$. Then applying \cite[Prop.~2]{mania2019certainty} leads to the following:
\begin{lem} \label{lem:DualRccatiPerturb}
It holds that
$$
\|\hat{P}_{\mathrm{dual}} - P_{\mathrm{dual}}\| \leqslant O(1)  \frac{\rho_{\mathrm{dual}}^2}{1-\rho_{\mathrm{dual}}^2 } \Upsilon_\star (\Upsilon_\star+1)^3 \|P_{\mathrm{dual}}   \|^2 \varepsilon_{\mathrm{m}},
$$
if Assumptions~\ref{ass:LQR}, \ref{ass:LQRgeneral2}, and $\underline{\sigma}(P_{\mathrm{dual}}) \geqslant 1$ hold, and if 
\begin{align}
\varepsilon_{\mathrm{m}} \leqslant \alpha_\star \triangleq   & \text{ } O(1)  \frac{(1-\rho_{\mathrm{dual}}^2)^2}{ \rho_{\mathrm{dual}}^4 (\Upsilon_\star+1)^{5}  \|P_{\mathrm{dual}}   \|^{2}} \nonumber  \\   
&\times \min\{(\|L_{\mathrm{dual}}\|+1)^{-2}, \|P_{\mathrm{dual}}   \|^{-1} \}. \label{eq:DualPerturbCondi}
\end{align}
\end{lem}
\begin{proof}
Based on Assumptions~\ref{ass:LQR} and \ref{ass:LQRgeneral2}, $(A_\star^\top, Q^{1/2})$ is stabilizable, and $(A_\star^\top, R^{-1/2}B_\star^\top )$ is observable, showing $\mathcal{R}_{ \mathrm{dual}}$ has a unique positive-definite fixed point $P_{\mathrm{dual}}$. Moreover, \eqref{eq:Sdelta} and the assumption $\Upsilon_\star \geqslant \varepsilon_{\mathrm{m}}$ imply $\max\{\|\hat{A}^\top-A_\star^\top\|, \|Q^{1/2} - Q^{1/2} 
 \|, \|  S_{\hat{B}} - S_{B_\star}\| \} \leqslant 3 \Upsilon_\star \varepsilon_{\mathrm{m}}$. Then applying\cite[Prop.~2]{mania2019certainty} proves the current lemma. 
\end{proof}

In Lemma~\ref{lem:DualRccatiPerturb}, $O(1)$ is a constant and is independent of $\varepsilon_{\mathrm{m}}$. Also note that the Riccati perturbation result from \cite{simchowitz2020naive} does not apply to the dual Riccati equation \eqref{eq:DualRicca} as the result requires the condition $S_{B_\star} = B_\star R^{-1} B_\star^\top \succeq I$, which is not met by many practical situations.
\begin{rem} \label{rem:Dual}
Assumption $\underline{\sigma}(P_{\mathrm{dual}}) \geqslant 1$ relates to controllability of $(L_{\mathrm{dual}}^\top,B_\star R^{1/2})$ with $L_{\mathrm{dual}}$ defined in \eqref{eq:LdualDef}. Comparing Lemma~\ref{lem:RicaDiffEqu}\ref{eq:IdentiRicatiLyapu} and \eqref{eq:DualRicca},
$
 P_{\mathrm{dual}} \succeq \sum_{i=0}^\infty (L_{\mathrm{dual}}^\top)^i S_{B_\star} L_{\mathrm{dual}}^i
 $ holds.
 Therefore, $\underline{\sigma}(P_{\mathrm{dual}}) \geqslant 1$ can be guaranteed by lower bounding the controllability Gramian of $(L_{\mathrm{dual}}^\top,B_\star R^{1/2})$.
\end{rem}

\subsection{Final step when $i-j \geqslant n_{\mathrm{cr}}$}
When $i-j \geqslant n_{\mathrm{cr}}$, the perturbation analysis can be concluded by combining \eqref{eq:ProofStateTransi}, \eqref{eq:LhatRate}, \eqref{eq:PerturGramianLowerBound}, and Lemma~\ref{lem:DualRccatiPerturb}: 
\begin{align}
    \|\Phi^{(j:i)}_{\hat{A},\hat{B}}(P)\| \leqslant  \zeta   \big( \sqrt{1- \alpha^{-1}_{\varepsilon_\mathrm{m}} \beta_\star } \big)^{i-j}, \text{ with }
\end{align}
\begin{align}
\hspace{-0.2cm} \zeta \triangleq  \big[  \overline{\sigma}(R) + \sqrt{2} \| P_\star\|(2 \Upsilon_\star )^2   \big] [ 1/2 \nu_\star  ]^{-2}  \sqrt{ \sqrt{2} \beta_\star^{-1} }  \nonumber\\
\hspace{-0.2cm}  \times \Big[ \|P_{\mathrm{dual}}\| +  O(1)  \frac{\rho_{\mathrm{dual}}^2}{1-\rho_{\mathrm{dual}}^2 } \Upsilon_\star (\Upsilon_\star+1)^3 \|P_{\mathrm{dual}}   \|^2 \Upsilon_\star \Big ],     \label{eq:PhiZeta}
\end{align}
where we have used \eqref{eq:BoundALpha} and $\Upsilon_\star \geqslant \max\{\varepsilon_{\mathrm{m}},\varepsilon_{\mathrm{p}} \}$.

\subsection{Situation where $i-j < n_{\mathrm{cr}}$  }
If $i-j < n_{\mathrm{cr}}$, $\mathcal{O}_{\hat{A} ,\hat{B} }^{(i-j)}(\hat{P} )$ may not be invertible, and thus instead of exploiting controllability and \eqref{eq:ProofStateTransi}, we consider the alternative bound in \eqref{eq:PerturbationStabilizable}, which holds for stabilizable systems.
 
 \section{Proof of Theorem~\ref{thm:stabi} } \label{appdix:sta}
Based on Lemma~\ref{lem:Pdiff}, an essential step to prove Theorem~\ref{thm:stabi} is establishing the convergence rate of the state-transition matrix for stabilizable systems. The stability analysis of time-varying systems in Lemma~\ref{lem:Stability} is a fundamental tool. To exploit it, we first provide a uniform bound on the Riccati iterations as an explicit function of $P_\star$. 
\begin{lem} \label{lem:Puppber}
Let $P \in \mathbb{S}^n_+$ satisfy $\|P-P_\star\|\leqslant \varepsilon_{\mathrm{p}}$ and $P_\star^{(k)}$ denote $\mathcal{R}^{(k)}_{A_\star,B_\star}(P)$. If Assumptions~\ref{ass:LQR}, \ref{ass:LQRgeneral2} hold, then 
\begin{equation} \label{eq:BoundRiccUniformSta}    
P^{(k)}_\star \preceq	   \big[ \|P_\star\|+ \varepsilon_{\mathrm{p}}\beta_\star^{-1} \big] I, \text{ }\forall \text{ }k \in \mathbb{Z}^+.
\end{equation}
\end{lem}

 \begin{proof}
 Given $P_\star^{(k)}$, recall $\bar{x}^\top P_\star^{(k+1)} \bar{x} = \min_{\bar{u}} l(\bar{x},\bar{u}) + (\bar{x}^+)^\top P_\star^{(k)} \bar{x}^+$ for a nominal system $\bar{x}^+ = A_\star \bar{x} + B_\star \bar{u}$ and any $\bar{x} \in \mathbb{R}^n$. This implies $P_\star^{(k+1)} \preceq Q+ K_\star^\top R K_\star+ L_\star^\top P_\star^{(k)}L_\star$, i.e., the infinite-horizon LQR may not be finite-horizon optimal. Then using Lemma~\ref{lem:RicaDiffEqu}\ref{eq:IdentiRicatiLyapu} leads to
\begin{align}
&P_\star^{(k+1)}  \preceq	Q+ K_\star^\top R K_\star+ L_\star^\top P_\star^{(k)}L_\star - P_\star + P_\star    \nonumber  \\
&\implies P_\star^{(k+1)}  \preceq	L_\star^\top (P_\star^{(k)} - P_\star)L_\star  + P_\star    \nonumber  \\
&\implies P_\star^{(k+1)} \preceq    (L_\star^{k+1})^\top  (P-P_\star) L_\star^{k+1}+P_\star   \nonumber \\
& \implies \|P_\star^{(k+1)}\| \leqslant  \|P_\star\|+ \varepsilon_{\mathrm{p}}(\beta_\star^{-1}-1),  \nonumber
\end{align}
where we have used  \eqref{eq:RateLstar} in the last step, and we further simplify the bound's expression in \eqref{eq:BoundRiccUniformSta}.
 \end{proof}

\begin{lem} \label{lem:stabiTRANSI}
 For any $i,j \in \mathbb{Z}^+$ with $i\geqslant j$, if Assumptions~\ref{ass:LQR} and \ref{ass:LQRgeneral2} hold, then $\| \Phi^{(j:i)}_{A_\star,B_\star}(P) \| \leqslant 
 \Upsilon_\star^3 (1+ \overline{\sigma}(P_\star)+ \varepsilon_{\mathrm{p}}) $    
 \begin{align} 
 \times \sqrt{ \Big(\overline{\beta}(\varepsilon_{\mathrm{p}}) \Big ) ^{-1} \Big(1- \overline{\beta}(\varepsilon_{\mathrm{p}}) \Big)^{-1} }  \Big( \sqrt{1 -  \overline{\beta} (\varepsilon_{\mathrm{p}}) } \Big)^{{i-j}}.\label{eq:TransitionBoundSta}
\end{align} 
If Assumption~\ref{ass:TechContPertur}\ref{ass4:i} holds additionally, then 
\begin{align} \label{eq:PerturbationStabilizable}
\| \Phi^{(j:i)}_{\hat{A},\hat{B}}(P) \| \leqslant 
\overline{\zeta}  \Big( \sqrt{ 1 -  \alpha_{\varepsilon_\mathrm{m}}^{-1} \overline{\beta} (\varepsilon_{\mathrm{p}}) } \Big)^{{i-j}},
\end{align}
with $\overline{\zeta}$ defined in \eqref{eq:PhiConstant2}.
\end{lem} 
\begin{proof}
    As $\{P_\star^{(k)} \}_{k=j}^{i-1}$ is a positive-definite matrix sequence when $j \geqslant 1$ and satisfies $P_\star^{(k)} \succeq Q \succeq I$, we first focus on this case where $j\geqslant 1$. Then for $ k\in [j,i]$ and from Lemma~\ref{lem:RicaDiffEqu}\ref{eq:IdentiRicatiLyapu}, $V(k, \bar{x}) = \bar{x}^\top P_\star^{(i+j-k)} \bar{x}$ is a time-varying Lyapunov function, satisfying $V(k+1, \bar{x}_{k+1}) - V(k, \bar{x}_{k}) \leqslant -\bar{x}_{k}^\top Q  \bar{x}_{k}  $, for the time-varying nominal system $\bar{x}_{k+1} = \mathcal{L}_{A_\star,B_\star}\big( P_\star^{(i+j-1-k)}\big)  \bar{x}_{k}$. Note that this nominal system admits the state-transition matrix of interest, i.e., $\bar{x}_i = \Phi^{(j:i)}_{A_\star,B_\star}(P) \bar{x}_j$. Then when $j \geqslant 1$, we have $\underline{\sigma}(Q)\|x\|^2 \leqslant V(k, x) \leqslant   \big[ \|P_\star\|+ \varepsilon_{\mathrm{p}}\beta_\star^{-1} \big] \|x\|^2$ based on \eqref{eq:BoundRiccUniformSta}. Applying Lemma~\ref{lem:Stability} shows
\begin{align}   \label{eq:PhiStab}
    &\| \Phi^{(j:i)}_{A_\star,B_\star}(P) \| \leqslant  \sqrt{ \overline{\beta}^{-1} } \Big( \sqrt{1 -  \overline{\beta}} \Big)^{{i-j}}, \text{ for } j \geqslant 1,  
\end{align}
where we recall $\overline{\beta}(\varepsilon_{\mathrm{p}}) = \underline{\sigma}(Q)/ \big[ \|P_\star\|+ \varepsilon_{\mathrm{p}}\beta_\star^{-1} \big]$. Eq.~\eqref{eq:PhiStab} proves \eqref{eq:TransitionBoundSta} for $j \geqslant 1$ due to $\Upsilon_\star^3 (1+ \overline{\sigma}(P_\star)+ \varepsilon_{\mathrm{p}}) \sqrt{ \Big(1- \overline{\beta}(\varepsilon_{\mathrm{p}}) \Big)^{-1} } \geqslant 1 $.

However, when $j=0$, recall that $P = P_\star^{(0)}$ is positive semi-definite by Assumption~\ref{ass:LQR}, and thus it cannot be lower bounded by a positive definite matrix. Then we exploit 
\begin{equation} \label{eq:proofNo}
  \| \Phi^{(0:i)}_{A_\star,B_\star}(P) \| \leqslant \|\mathcal{L}_{A_\star,B_\star} (P) \| \| \Phi^{(1:i)}_{A_\star,B_\star}(P)\|,  
\end{equation}
for $i \geqslant 1$ and
\begin{align}
&\| \mathcal{L}_{A_\star,B_\star} (P) \|  \leqslant \|A_\star \| + \|B_\star\| \|\mathcal{K}_{A_\star,B_\star}(P_\star^{(0 )}  
 )\| \nonumber \\
 & \leqslant  \|A_\star \| + \|B_\star\|^2 \|A_\star\| \|P\| \leqslant \Upsilon_\star^3 (1+ \|P\|). \label{eq:BoundCloseLoop}
\end{align}
Combining the above bound, \eqref{eq:PhiStab}, \eqref{eq:proofNo} and $\|P\| = \|P-P_\star+P_\star\| \leqslant  \varepsilon_{\mathrm{p}}+ \|P_\star\|$ proves \eqref{eq:TransitionBoundSta} for $j=0$. The case $i=0$ holds trivially.

Then if $(\hat{A}, \hat{B})$ is considered instead, $ \| \Phi^{(j:i)}_{\hat{A}, \hat{B} }(P) \| $ can be upper bounded as in \eqref{eq:TransitionBoundSta}, with $P_\star$ replaced by $\hat{P}$. Then combining this bound with $\| \widehat{P} \| \leqslant \alpha_{\varepsilon_\mathrm{m}} \| P_\star\|$ from Lemma~\ref{lem:PerturbationSimchow} and $ \alpha_{\varepsilon_\mathrm{m}}  \in [1,\sqrt{2}]$ from \eqref{eq:BoundALpha} implies \eqref{eq:PerturbationStabilizable} with
\begin{IEEEeqnarray}{rCl}
 \overline{\zeta} &\triangleq &    (2 \Upsilon_\star )^3 (1+ \sqrt{2} \|P_\star\|+ \Upsilon_\star  ) \nonumber \\
&&\negmedspace{} \times \sqrt{  \sqrt{2} (\|P_\star\|+ \Upsilon_\star \beta_\star^{-1} ) (1 -   \beta_\star )^{-1}    },   \label{eq:PhiConstant2}
\end{IEEEeqnarray}
where we have used \eqref{eq:BoundALpha}, the fact $\Big( \sqrt{1 -  \overline{\beta} } \Big)^{{i-j-1}} \leqslant \Big( \sqrt{1 -  \alpha_{\varepsilon_\mathrm{m}}^{-1} \overline{\beta}} \Big)^{{i-j-1}} $ and $\Upsilon_\star \geqslant \max\{\varepsilon_{\mathrm{m}},\varepsilon_{\mathrm{p}} \}$.
 \end{proof}
 \begin{rem}
The stability of $\Phi^{(0:i)}_{A_\star,B_\star}(P)$ is a classical result, e.g., see \cite{bitmead1991riccati} and \cite[Lem. 2.9]{anderson1986stability}. The main challenge in Lemma~\ref{lem:stabiTRANSI} is to derive the decay rate explicitly as a function of $P_\star$, such that the perturbation result in \eqref{eq:PerturbationStabilizable} is obtained.
\end{rem}

Finally, Theorem~\ref{thm:stabi} is proved by combining Lemmas~\ref{lem:Pdiff}, \ref{lem:PhiBar}, and \ref{lem:stabiTRANSI}.

\bibliographystyle{IEEEtran}
\bibliography{LearningControl}  

\begin{thebibliography}{10}
\providecommand{\url}[1]{#1}
\csname url@samestyle\endcsname
\providecommand{\newblock}{\relax}
\providecommand{\bibinfo}[2]{#2}
\providecommand{\BIBentrySTDinterwordspacing}{\spaceskip=0pt\relax}
\providecommand{\BIBentryALTinterwordstretchfactor}{4}
\providecommand{\BIBentryALTinterwordspacing}{\spaceskip=\fontdimen2\font plus
\BIBentryALTinterwordstretchfactor\fontdimen3\font minus
  \fontdimen4\font\relax}
\providecommand{\BIBforeignlanguage}[2]{{%
\expandafter\ifx\csname l@#1\endcsname\relax
\typeout{** WARNING: IEEEtran.bst: No hyphenation pattern has been}%
\typeout{** loaded for the language `#1'. Using the pattern for}%
\typeout{** the default language instead.}%
\else
\language=\csname l@#1\endcsname
\fi
#2}}
\providecommand{\BIBdecl}{\relax}
\BIBdecl

\bibitem{schwenzer2021review}
M.~Schwenzer, M.~Ay, T.~Bergs, and D.~Abel, ``Review on model predictive
  control: An engineering perspective,'' \emph{Int. J. Adv. Manuf. Technol.},
  vol. 117, no. 5-6, pp. 1327--1349, 2021.

\bibitem{katayama2023model}
S.~Katayama, M.~Murooka, and Y.~Tazaki, ``Model predictive control of legged
  and humanoid robots: models and algorithms,'' \emph{Adv. Robot.}, vol.~37,
  no.~5, pp. 298--315, 2023.

\bibitem{mayne2014model}
D.~Q. Mayne, ``Model predictive control: Recent developments and future
  promise,'' \emph{Automatica}, vol.~50, no.~12, pp. 2967--2986, 2014.

\bibitem{grune2011Book}
L.~Gr{\"u}ne and J.~Pannek, \emph{Nonlinear {M}odel {P}redictive {C}ontrol:
  {T}heory and {A}lgorithms}.\hskip 1em plus 0.5em minus 0.4em\relax Springer,
  2011.

\bibitem{grune2020economic}
L.~Gr{\"u}ne and S.~Pirkelmann, ``Economic model predictive control for
  time-varying system: Performance and stability results,'' \emph{Optim.
  Control Appl. Methods}, vol.~41, no.~1, pp. 42--64, 2020.

\bibitem{kohler2021stability}
J.~K{\"o}hler, M.~Zeilinger, and L.~Gr{\"u}ne, ``Stability and performance
  analysis of {NMPC}: Detectable stage costs and general terminal costs,''
  \emph{IEEE Trans. Autom. Control}, vol.~68, no.~10, pp. 6114--6129, 2023.

\bibitem{bertsekas2022newton}
D.~Bertsekas, ``Newton’s method for reinforcement learning and model
  predictive control,'' \emph{Results Control Optim.}, vol.~7, p. 100121, 2022.

\bibitem{moreno2023predictive}
F.~Moreno-Mora, L.~Beckenbach, and S.~Streif, ``Predictive control with
  learning-based terminal costs using approximate value iteration,''
  \emph{IFAC-PapersOnLine}, vol.~56, no.~2, pp. 3874--3879, 2023.

\bibitem{bertsekas2019reinforcement}
D.~Bertsekas, \emph{Reinforcement {L}earning and {O}ptimal {C}ontrol}.\hskip
  1em plus 0.5em minus 0.4em\relax Athena Scientific, 2019.

\bibitem{bitmead1991riccati}
R.~R. Bitmead and M.~Gevers, ``Riccati difference and differential equations:
  Convergence, monotonicity and stability,'' in \emph{The Riccati
  Equation}.\hskip 1em plus 0.5em minus 0.4em\relax Springer, 1991, pp.
  263--291.

\bibitem{moreno2022performance}
F.~Moreno-Mora, L.~Beckenbach, and S.~Streif, ``Performance bounds of adaptive
  {MPC} with bounded parameter uncertainties,'' \emph{Eur. J. Control},
  vol.~68, p. 100688, 2022.

\bibitem{muthirayan2021online}
D.~Muthirayan, J.~Yuan, D.~Kalathil, and P.~P. Khargonekar, ``Online learning
  for predictive control with provable regret guarantees,'' in \emph{Proc. 61st
  IEEE Conf. Decis. Control}, 2022, pp. 6666--6671.

\bibitem{lale2021model}
S.~Lale, K.~Azizzadenesheli, B.~Hassibi, and A.~Anandkumar, ``Model learning
  predictive control in nonlinear dynamical systems,'' in \emph{Proc. 60th IEEE
  Conf. Decis. Control}, 2021, pp. 757--762.

\bibitem{dogan2021regret}
I.~Dogan, Z.-J.~M. Shen, and A.~Aswani, ``Regret analysis of learning-based
  {MPC} with partially-unknown cost function,'' \emph{IEEE Trans. Autom.
  Control}, vol.~69, no.~5, pp. 3246--3253, 2024.

\bibitem{recht2019tour}
B.~Recht, ``A tour of reinforcement learning: {T}he view from continuous
  control,'' \emph{Annu. Rev. Control Robot. Auton. Syst.}, vol.~2, pp.
  253--279, 2019.

\bibitem{matni2019self}
N.~Matni, A.~Proutiere, A.~Rantzer, and S.~Tu, ``From self-tuning regulators to
  reinforcement learning and back again,'' in \emph{Proc. 58th IEEE Conf.
  Decis. Control}, 2019, pp. 3724--3740.

\bibitem{tsiamis2023statistical}
A.~Tsiamis, I.~Ziemann, N.~Matni, and G.~J. Pappas, ``Statistical learning
  theory for control: {A} finite-sample perspective,'' \emph{IEEE Control
  Syst}, vol.~43, no.~6, pp. 67--97, 2023.

\bibitem{dean2020sample}
S.~Dean, H.~Mania, N.~Matni, B.~Recht, and S.~Tu, ``On the sample complexity of
  the linear quadratic regulator,'' \emph{Found. Comput. Math.}, vol.~20,
  no.~4, pp. 633--679, 2020.

\bibitem{mania2019certainty}
H.~Mania, S.~Tu, and B.~Recht, ``Certainty equivalence is efficient for linear
  quadratic control,'' in \emph{Proc. 33rd Adv. Neural Inf. Process. Syst.},
  2019, pp. 10\,154--10\,164.

\bibitem{simchowitz2020naive}
M.~Simchowitz and D.~Foster, ``Naive exploration is optimal for online {LQR},''
  in \emph{Proc. 37th Int. Conf. Mach. Learn.}, 2020, pp. 8937--8948.

\bibitem{del2021note}
P.~Del~Moral and E.~Horton, ``A note on {R}iccati matrix difference
  equations,'' \emph{SIAM J. Control Optim.}, vol.~60, no.~3, pp. 1393--1409,
  2022.

\bibitem{cormen2022introduction}
T.~H. Cormen, C.~E. Leiserson, R.~L. Rivest, and C.~Stein, \emph{Introduction
  to {A}lgorithms}.\hskip 1em plus 0.5em minus 0.4em\relax MIT press, 2022.

\bibitem{simchowitz2018learning}
M.~Simchowitz, H.~Mania, S.~Tu, M.~I. Jordan, and B.~Recht, ``Learning without
  mixing: {T}owards a sharp analysis of linear system identification,'' in
  \emph{Prof. 31st Conference on Learning Theory}, 2018, pp. 439--473.

\bibitem{sarkar2019near}
T.~Sarkar and A.~Rakhlin, ``Near optimal finite time identification of
  arbitrary linear dynamical systems,'' in \emph{Proc. 36th Int. Conf. Mach.
  Learn.}, 2019, pp. 5610--5618.

\bibitem{bertsekas2012dynamic}
D.~Bertsekas, \emph{Dynamic {P}rogramming and {O}ptimal {C}ontrol: Volume
  I}.\hskip 1em plus 0.5em minus 0.4em\relax Athena scientific, 2012.

\bibitem{anderson1979optimal}
B.~Anderson and J.~B. Moore, \emph{Optimal {F}iltering}.\hskip 1em plus 0.5em
  minus 0.4em\relax Prentice-Hall, 1979.

\bibitem{lawson2007birkhoff}
J.~Lawson and Y.~Lim, ``A {B}irkhoff contraction formula with applications to
  {R}iccati equations,'' \emph{SIAM J. Control Optim.}, vol.~46, no.~3, pp.
  930--951, 2007.

\bibitem{hassibi1999indefinite}
B.~Hassibi, A.~H. Sayed, and T.~Kailath, \emph{Indefinite-Quadratic
  {E}stimation and {C}ontrol: {A} {U}nified {A}pproach to $H_2$ and $H_\infty$
  {T}heories}.\hskip 1em plus 0.5em minus 0.4em\relax SIAM, 1999.

\bibitem{rugh1996linear}
W.~J. Rugh, \emph{Linear System Theory}.\hskip 1em plus 0.5em minus 0.4em\relax
  Prentice-Hall, 1996.

\bibitem{zhang2021regret}
R.~Zhang, Y.~Li, and N.~Li, ``On the regret analysis of online {LQR} control
  with predictions,'' in \emph{Proc. 2021 Am. Control Conf.}, 2021, pp.
  697--703.

\bibitem{konstantinov1995conditioning}
M.~Konstantinov, I.~Poptech, and V.~Angelova, ``Conditioning and sensitivity of
  the difference matrix {R}iccati equation,'' in \emph{Proc. 1995 Am. Control
  Conf.}, 1995, pp. 466--466.

\bibitem{Schutter2001model}
B.~De~Schutter and T.~Van Den~Boom, ``Model predictive control for
  max-plus-linear discrete event systems,'' \emph{Automatica}, vol.~37, no.~7,
  pp. 1049--1056, 2001.

\bibitem{fruchard2012choice}
M.~Fruchard, G.~Allibert, and E.~Courtial, ``Choice of the control horizon in
  an {NMPC} strategy for the full-state control of nonholonomic systems,'' in
  \emph{Proc. 2012 Am. Control Conf.}\hskip 1em plus 0.5em minus 0.4em\relax
  IEEE, 2012, pp. 4149--4154.

\bibitem{lale2022reinforcement}
S.~Lale, K.~Azizzadenesheli, B.~Hassibi, and A.~Anandkumar, ``Reinforcement
  learning with fast stabilization in linear dynamical systems,'' in
  \emph{Proc. 25th Int. Conf. Artif. Intell. Stat.}, 2022, pp. 5354--5390.

\bibitem{yu2020power}
C.~Yu, G.~Shi, S.~J. Chung, Y.~Yue, and A.~Wierman, ``The power of predictions
  in online control,'' in \emph{Proc. 34th Adv. Neural Inf. Process. Syst.},
  2020, pp. 1994--2004.

\bibitem{athrey2024regret}
A.~Athrey, O.~Mazhar, M.~Guo, B.~De~Schutter, and S.~Shi, ``Regret analysis of
  learning-based linear quadratic {G}aussian control with additive
  exploration,'' in \emph{Proc. 22nd European Control Conference}, 2024, pp.
  1795--1801.

\bibitem{krener2018adaptive}
A.~J. Krener, ``Adaptive horizon model predictive control,''
  \emph{IFAC-PapersOnLine}, vol.~51, no.~13, pp. 31--36, 2018.

\bibitem{bohn2023optimization}
E.~B{\o}hn, S.~Gros, S.~Moe, and T.~A. Johansen, ``Optimization of the model
  predictive control meta-parameters through reinforcement learning,''
  \emph{Eng. Appl. Artif. Intell.}, vol. 123, p. 106211, 2023.

\bibitem{chen2022real}
S.~Chen, K.~Werling, A.~Wu, and C.~K. Liu, ``Real-time model predictive control
  and system identification using differentiable simulation,'' \emph{IEEE
  Robot. Autom. Lett.}, vol.~8, no.~1, pp. 312--319, 2022.

\bibitem{liu2024certainty}
C.~Liu, S.~Shi, and B.~De~Schutter, ``Certainty-equivalence model predictive
  control: Stability, performance, and beyond,'' \emph{arXiv preprint
  arXiv:2412.10625}, 2024.

\bibitem{anderson1986stability}
B.~D.~O. Anderson \emph{et~al.}, \emph{Stability of {A}daptive {S}ystems:
  {P}assivity and {A}veraging {A}nalysis}.\hskip 1em plus 0.5em minus
  0.4em\relax MIT Press, 1986.

\end{thebibliography}
 
\end{document}